\newcommand{\R}{\mathbb{R}}
\newcommand{\E}{\mathbb{E}}
\newcommand{\F}{\mathbb{F}}
\renewcommand{\P}{\mathbb{P}}
\newcommand{\Q}{\mathbb{Q}}
\newcommand{\bone}{\mathbf 1}
\theoremstyle{plain}
\newtheorem{theorem}{Theorem}[section]
\newtheorem{lemma}[theorem]{Lemma}
\newtheorem{proposition}[theorem]{Proposition}
\newtheorem{corollary}[theorem]{Corollary}
\newtheorem{condition}{Condition}
\newtheorem{Assumption}{Assumption}
\renewcommand{\theAssumption}{\Alph{Assumption}}
\newcommand{\settheoremtag}[1]{
	\let\oldtheAssumption\theAssumption
	\renewcommand{\theAssumption}{#1}
	\g@addto@macro\endAssumption{
		\addtocounter{Assumption}{-1}
		\global\let\theAssumption\oldtheAssumption}
}
\theoremstyle{remark}
\newtheorem{remark}[theorem]{Remark}
\newcommand{\calh}{{\cal H}}
\newcommand{\calf}{{\cal F}}
\newcommand{\calp}{{\cal P}}
\newcommand{\calb}{{\cal B}}
\newcommand{\call}{{\cal L}}
\newcommand{\calg}{{\cal G}}
\newcommand{\al}{{\alpha}}
\newcommand{\la}{{\lambda}}
\newcommand{\ga}{{\gamma}}
\newcommand{\Ga}{{\Gamma}}
\newcommand{\vp}{{\varphi}}
\newcommand{\si}{{\sigma}}
\newcommand{\Om}{{\Omega}}
\newcommand{\ov}{\overline}
\newcommand{\un}{\underline}
\newcommand{\wh}{\widehat}
\newcommand{\wt}{\widetilde}
\newcommand{\wa}{\accentset{\ast}}
\newcommand{\lec}{\lesssim}
\newcommand{\Leb}{\mathrm{Leb}}
\newcommand{\bthm}{\begin{theorem}}
	\newcommand{\ethm}{\end{theorem}}
\newcommand{\bcor}{\begin{corollary}}
	\newcommand{\ecor}{\end{corollary}}
\newcommand{\blem}{\begin{lemma}}
	\newcommand{\elem}{\end{lemma}}
\newcommand{\bprop}{\begin{proposition}}
	\newcommand{\eprop}{\end{proposition}}
\newcommand{\bcond}{\begin{condition}}
	\newcommand{\econd}{\end{condition}}
\newcommand{\bdf}{\begin{definition}}
	\newcommand{\edf}{\end{definition}}
\newcommand{\bex}{\begin{example}}
	\newcommand{\eex}{\end{example}}
\newcommand{\brem}{\begin{remark}}
	\newcommand{\erem}{\end{remark}}
\newcommand{\bpr}{\begin{proof}}
	\newcommand{\epr}{\end{proof}}
\newcommand{\benu}{\begin{enumerate}}
	\newcommand{\eenu}{\end{enumerate}}
\newcommand{\beq}{\begin{equation}}
	\newcommand{\eeq}{\end{equation}}
\newcommand{\bit}{\begin{itemize}}
	\newcommand{\eit}{\end{itemize}}
\newcommand{\bass}{\begin{Assumption}}
	\newcommand{\eass}{\end{Assumption}}
\numberwithin{equation}{section}
\DeclareMathOperator\Arg{Arg}
\begin{document}

	\begin{frontmatter}
	\title{Short-time expansion of characteristic functions in a rough volatility setting with applications}
	\runtitle{Short-time expansion of characteristic functions}
	
	\begin{aug}
		\author[A]{\fnms{Carsten H.} \snm{Chong}\ead[label=e1]{carstenchong@ust.hk}}
		\author[B]{\fnms{Viktor} \snm{Todorov}\ead[label=e2]{v-todorov@kellogg.northwestern.edu}}
		
		\address[A]{Department of Information Systems, Business Statistics and Operations Management,
			The Hong Kong University of Science and Technology, Hong Kong,
			\printead{e1}}
		
		\address[B]{Department of Finance, Northwestern University, USA,
			\printead{e2}}
	\end{aug}

	\begin{abstract}
		We derive a higher-order asymptotic expansion of the conditional characteristic function of the increment of an It\^o semimartingale over a shrinking time interval. The spot characteristics of the It\^o semimartingale are allowed to have dynamics of general form. In particular, their paths can be rough, that is, exhibit local behavior like that of a fractional Brownian motion, while at the same time have jumps with arbitrary degree of activity. The expansion result shows the distinct roles played by the different features of the spot characteristics dynamics. As an application of our result, we construct a  nonparametric estimator of the Hurst parameter of the diffusive volatility process from portfolios of short-dated options written on an underlying asset.
	\end{abstract}

	\begin{keyword}
		\kwd{Asymptotic expansion}\kwd{characteristic function}\kwd{fractional Brownian motion}\kwd{Hurst parameter}\kwd{infinite variation jumps}\kwd{It\^o semimartingale}\kwd{options}\kwd{rough volatility}
	\end{keyword}
	
\end{frontmatter}

\section{Introduction}\label{sec:intro}

Our interest in this paper is a small-time asymptotic expansion of the characteristic function of the increment of an It\^o semimartingale, defined on some filtered probability space $(\Om,\calf,\F=(\calf_t)_{t\geq0},\Q)$ satisfying the usual conditions and given by 
\beq\label{eq:x} x_t = x_0 + \int_0^t \al_s ds+ \int_0^t \si_s dW_s+\int_0^t\int_\R \ga(s,z)\mu(ds,dz) + \int_0^t\int_\R \delta(s,z)(\mu-\nu)(ds,dz), \eeq
where $W$ is a Brownian motion, $\mu$ is a Poisson measure on $\mathbb{R}_+\times\mathbb{R}$ with compensator $\nu$, $\al$ and $\si$ are some processes with c\`{a}dl\`{a}g paths, and $\gamma, \delta: \mathbb{R}_+\times\mathbb{R}\rightarrow\mathbb{R}$ are predictable functions. The technical assumptions for all quantities appearing in (\ref{eq:x}) are given in Section~\ref{sec:model} below. 

Under fairly weak assumptions on $\alpha$, $\sigma$, $\ga$ and $\delta$, one can show  for $t\geq0$ and $u\in\R\setminus\{0\}$ that
\begin{equation}\label{eq:fo_exp}
	\mathbb{E}_t\Bigl[e^{iu(x_{t+T}-x_t)/\sqrt{T}}\Bigr] = e^{-\frac{u^2}{2}\sigma_t^2}+o_p(1) \qquad\text{as $T\downarrow 0$},
\end{equation}
where  $\mathbb{E}_t$ denotes the $\mathcal{F}_t$-conditional expectation with respect to $\Q$. 
This result is due to the fact that the variation in the spot semimartingale characteristics $\alpha$, $\sigma$, $\ga$ and $\delta$ is of higher order when the length of the increment shrinks (see e.g., \cite{JP12}) and due to the leading role   the diffusion coefficient plays  in the characteristic function of a L\'{e}vy process as the value of the characteristic exponent increases (see e.g., Theorem 8.1 in \cite{SATO}). The result in \eqref{eq:fo_exp} has been used for efficient volatility estimation from high-frequency observations of $x$ by \cite{jacod2014efficient,jacod2018limit} and from short-dated options when $x$ is an asset price, by \cite{T19}.    

The goal of the current paper is to investigate how the higher-order terms in \eqref{eq:fo_exp} depend on the dynamics of the spot semimartingale characteristics. This allows one to   quantify their impact on volatility estimation, in which the $o_p(1)$-term in (\ref{eq:fo_exp}) is ignored. But more importantly, it allows one to obtain information about the spot characteristics themselves and their dynamics. Such a higher-order asymptotic expansion of the characteristic function has been derived in \cite{T21} in the case where $x$ and its spot characteristics are governed by It\^o semimartingales with  finite variation jumps. 
In this paper, we consider an asymptotic expansion in a much more general setting, in which not only the jumps can be of infinite variation but also the volatility $\sigma$ and the jump size functions $\gamma$ and $\delta$ can have local behavior like that of a fractional Brownian motion. Models with the latter feature can generate   rough volatility paths, with H\"older exponents (outside jump times) that are significantly smaller than $\frac12$. Rough volatility models have been introduced recently by \cite{gatheral2018volatility} and have been found to generate patterns like those observed in real financial data (see e.g., \cite{Bennedsen21, euch2019short,el2018microstructural,el2019roughening, FTW21}  among others). Models with rough dynamics have also been  used by \cite{CDL22} to capture market microstructure noise in financial data with plausible dependence structure. 

The setup we work in allows for rough dynamics of general form. In particular, all spot characteristics can be rough and their degree of roughness may differ. This generality allows us to determine whether the various model features have similar or different contributions to the asymptotic expansion. In particular, our higher-order asymptotic expansion result shows that the contribution of the jumps in $x$ and the roughness of the volatility path to the characteristic function can be of the same asymptotic order, depending on the degree of jump activity and roughness of the volatility path. That is, for each value of volatility roughness, there is a degree of jump activity such that their contributions to the characteristic function will be of the same asymptotic order. However, our analysis also shows that the terms  due to  jumps in $x$ and those due to  roughness of volatility differ in how they depend on the characteristic function exponent, which in the end allows one to separate them. Compared to results in the constrained setting where the spot semimartingale characteristics are It\^o semimartingales without rough components in their dynamics, higher-order terms play a much more prominent role in the presence of roughness. In other words, in rough settings the $o_p(1)$-term in \eqref{eq:fo_exp} is much bigger, and hence ignoring it would lead to a much bigger error.

The derived asymptotic expansion can be used for developing nonparametric estimators of quantities related to volatility and its dynamics as well as to jumps in rough settings using high-frequency observations of $x$ or short-dated options when $x$ is the price of a financial asset. We illustrate this by deriving a consistent estimator of the   roughness parameter of volatility, using options written on an asset with short times to maturity. Following \cite{QT19} and \cite{T19}, we use a portfolio of options with different strikes and with the same expiration to estimate $\mathbb{E}_t[e^{iu(x_{t+T}-x_t)/\sqrt{T}}]$. We then infer the degree of  roughness of   volatility from the rate of decay of the argument of the characteristic function as the length of the price increment shrinks down to zero. This estimation approach works well when the underlying asset price does not contain jumps. If this is not the case, then one first needs to perform a bias correction of the argument of the conditional characteristic function by suitably differencing the latter computed over different short time horizons. While this differencing leads to loss of information, compared to the estimation with no price jumps, it shows that separation of the roughness parameter of volatility from  price jumps is possible theoretically. 

Our use of short-dated options  in a rough setting can be compared with recent work on short-time expansions of option prices in various rough volatility specifications (see for example, \cite{BFGHS19, euch2019short, FGS21, FSV21, FZ17, FGP21, FGP22, F21, LMPR18}). This strand of work considers expansions of a single option, and typically these expansions  only hold for a restricted class of models (e.g., the volatility only contains  a rough component without jumps, and in addition there are no jumps in the asset price). By contrast, our model setting, as already explained, is very general and we consider a short-term asymptotic expansion for a portfolio of options that mimics the conditional characteristic function of the price increment.

The rest of the paper is organized as follows. We start in Section~\ref{sec:model} with introducing our setting and stating the assumptions. The main result is given in Section~\ref{sec:expand}. In Section~\ref{sec:options}, we introduce option-based estimators of the conditional characteristic function of the underlying asset price and derive their rate of convergence. Section~\ref{sec:application} contains our application to estimation of the degree of volatility roughness from options. In Section~\ref{sec:mc} we conduct several numerical experiments to assess the precision of the higher-order expansions. Proofs are given in Sections~\ref{sec:proof:cf}--\ref{sec:9}.

\section{Setup and assumptions}\label{sec:model} 

We start by introducing the setup and stating the assumptions needed for deriving our asymptotic expansion result presented in the next section. The dynamics of the volatility process $\si$ are given by a function of an It\^o semimartingale plus a rough component:
\beq\label{eq:sigma} \begin{cases}  \si_t=f(v_t),\quad\text{where $f\in C^2$ and} \\
	\begin{aligned} v_t&=v_0+\int_0^t b_s ds +\int_0^t g_H(t-s)\si^v_s dW_s + \int_0^t \wt g_H(t-s)\wt\si^v_s d\wt W_s+ \int_0^t \eta^v_s dW_s \\
		&\quad+ \int_0^t \wt\eta^v_s d\wt W_s  + \int_0^t \ov \eta^v_s d\ov W_s+ \int_0^t\int_\R \ga_v(s,z) \mu(ds,dz) + \int_0^t\int_\R \delta_v(s,z)  (\mu-\nu)(ds,dz).\end{aligned}\!\!\!\!\!\!\!\!\!\!\!\!\!\!\!\! \end{cases}\eeq
Except for the two integrals $\int_0^t g_H(t-s)\si^v_s dW_s$ and $\int_0^t \wt g_H(t-s)\wt\si^v_s d\wt W_s$, the dynamics of $v$ in (\ref{eq:sigma}) is that of a general It\^o semimartingale. The integrals $\int_0^t g_H(t-s)\si^v_s dW_s$ and $\int_0^t \wt g_H(t-s)\wt\si^v_s d\wt W_s$, upon a suitable choice of the kernel functions $g_H$ and $\wt g_H$, can exhibit local behavior like that of a fractional Brownian motion and therefore introduce roughness in the paths of $\si$. We note that in the above specification of $\si$ we allow for jumps, and as will become clear later on, they can be of infinite variation. We also allow for dependence of general form between the diffusive and jump components of $x$ and $v$. Consequently, our model for $v$ is an extension of the class of  mixed semimartingales from \cite{CDM22} that includes jumps. The specification \eqref{eq:sigma} contains many parametric rough volatility models as particular examples, such as the rough fractional stochastic volatility model \cite{gatheral2018volatility}, the rough Heston model \cite{el2019roughening} or the fractional Ornstein--Uhlenbeck process \cite{WXY23}.

In our asymptotic expansion, the dynamics of the processes $\si^v$ and $\wt\si^v$ will also play a role. We make a similar assumption for their dynamics to the one for $\si$ above. More specifically, we have
\beq\label{eq:eta}\begin{cases}
	\si^v_t=h(\eta_t),\quad\text{where $h\in C^2$ and} \\
	\begin{aligned}\eta_t&=\eta_0 +\int_0^t g_H^\eta(t-s)\si^\eta_s dW_s + \int_0^t \wt g_H^\eta(t-s)\wt\si^\eta_s d\wt W_s+\int_0^t \ov g_H^\eta(t-s)\ov \si_s^\eta d\ov W_s \\
		&\quad +\int_0^t \wh g^\eta_H(t-s)\wh \si^\eta_s d\wh W_s+  \theta_t,\end{aligned}
\end{cases}\eeq
and
\beq\label{eq:etatilde}\begin{cases}
	\wt\si^v_t=\wt h(\wt\eta_t),\quad\text{where $\wt h\in C^2$ and}\\
	\begin{aligned}\wt\eta_t&=\wt\eta_0 +\int_0^t g_H^{\wt \eta}(t-s)\si^{\wt \eta}_s dW_s + \int_0^t \wt g_H^{\wt \eta}(t-s)\wt\si^{\wt \eta}_s d\wt W_s+\int_0^t \ov g_H^{\wt \eta}(t-s)\ov \si_s^{\wt \eta} d\ov W_s \\
		&\quad +\int_0^t \wh g^{\wt \eta}_H(t-s)\wh \si^{\wt \eta}_s d\wh W_s +\int_0^t \mathring g^{\wt \eta}_H(t-s)\mathring \si^{\wt \eta}_s d\mathring W_s+ \wt\theta_t.\end{aligned}
\end{cases}\eeq
Finally, the dynamics of the ``small'' jump size function $\delta$ will also play a role in our expansion. We assume that
\beq\label{eq:delta}\begin{split}
	\delta(t,z)&=\delta(0,z)+\int_0^t g^\delta_{H_\delta}(t-s,z)\si^\delta(s,z)dW_s+\int_0^t \dot g^\delta_{H_\delta}(t-s,z)\dot\si^\delta(s,z)d\dot W_s\\
	&\quad+\int_0^t \eta^\delta(s,z)dW_s+\int_0^t \ddot\eta^\delta(s,z)d\ddot W_s+\int_0^{t-}\int_\R \delta_\delta(s,z,z')(\mu-\nu)(ds,dz') +\theta_\delta(t,z). 
\end{split}\raisetag{2\baselineskip}\eeq
The specification in \eqref{eq:delta} is very general, allowing for both diffusive and jump shocks to determine changes over time in $\delta$. We also note that $\delta$ can exhibit roughness in its paths through the first two integrals.

The remaining ingredients of \eqref{eq:x} and \eqref{eq:sigma}--\eqref{eq:delta} are supposed to satisfy the following conditions, which depend on two roughness parameters $H,H_\delta\in(0,\frac12)$:

\bass\label{ass:A} We have \eqref{eq:x} and \eqref{eq:sigma}--\eqref{eq:delta}, where
\benu
\item $x_0$, $v_0$, $\eta_0$ and $\wt\eta_0$ are $\calf_0$-measurable  and $\delta(0,\cdot)$ is $\calf_0\otimes\calb(\R)$-measurable; 
\item $\al$, $b$,  $\si^\eta$, $\wt\si^\eta$, $\ov\si^\eta$, $\wh \si^\eta$, $\si^{\wt\eta}$, $\wt\si^{\wt\eta}$, $\ov\si^{\wt\eta}$, $\wh\si^{\wt\eta}$, $\mathring{\si}^{\wt\eta}$, $\si^\delta$, $\dot\si^\delta$, $\eta^v$, $\wt \eta^v$, $\ov\eta^v$, $\eta^\delta$, $\ddot \eta^\delta$, $\theta$, $\wt \theta$, $\theta_\delta$, $\ga$, $\ga_v$, $\delta_v$  and $\delta_\delta$ are predictable processes;
\item $W$, $\wt W$, $\ov W$, $\wh W$ and $\mathring W$ are independent standard $\F$-Brownian motions; $\dot W$ and $\ddot W$ are  standard $\F$-Brownian motions that are jointly Gaussian with $(W, \wt W, \ov W, \wh W, \mathring W)$, independent of $W$ but potentially dependent on each other and   on $(\wt W, \ov W, \wh W, \mathring W)$;
\item $\mu$ is an $\F$-Poisson random measure on $[0,\infty)\times\R$ with compensator $\Leb\otimes \nu$ that is independent of $W$, $\wt W$, $\ov W$, $\wh W$, $\mathring W$, $\dot W$ and $\ddot W$;  
\item  each $g\in\{g_H, \wt g_H, g^\eta_H, \wt g^\eta_H, \ov g^\eta_H, \wh g^\eta_H, g^{\wt\eta}_H, \wt g^{\wt\eta}_H, \ov g^{\wt\eta}_H,\wh g^{\wt\eta}_H,\mathring g^{\wt\eta}_H\}$ is of the form $g=k_H + \ell_g$ and each $g\in\{g_{H_\delta}^\delta(\cdot,z),\dot g^\delta_{H_\delta}(\cdot,z):z\in\R\}$ is of the form $g=k_{H_\delta} + \ell_g$, where
\beq\label{eq:kH}  k_H(t)=\frac1{\Ga(H+\frac12)} t^{H-1/2}\bone_{\{t>0\}},\qquad k_{H_\delta}(t)=\frac1{\Ga(H_\delta+\frac12)} t^{H_\delta-1/2}\bone_{\{t>0\}} \eeq
and $\ell_g$ is continuous on $[0,\infty)$ with $\ell_g(0)=0$ and  differentiable on $(0,\infty)$ such that $\lvert\ell'_g(s)\rvert\leq Ls^{H-1/2}$ for all $s\in[0,t]$ and some  $L\in(0,\infty)$ that only depends on $t$ but not on $g$.
\eenu
\eass

Assumption \ref{ass:A} is a regularity condition with the exception of part 5  in which the rough components in the dynamics of the various processes are specified. The semiparametric specification of the function $g$ above covers all kernels of the form $g(t)=\Ga(H+\frac12)^{-1}t^{H-1/2}h_g(t)$, where $h_g\in C^1$ with $h_g(0)=1$ and a locally bounded derivative (take $\ell_g (t)=g(t)-k_H(t)=\Ga(H+\frac12)^{-1}t^{H-1/2}(h_g(t)-1)$). This includes, in particular, the power law kernels and gamma kernels considered in \cite{Bennedsen21}. 
We note that we allow different degrees of roughness for the processes driving the volatility process $\si$ and the small jump size function $\delta$, given by $H$ and $H_\delta$, respectively. 

We will further impose the following moment and smoothness assumptions, which are parametrized by $H$ and $H_\delta$ from Assumption~\ref{ass:A} and three additional parameters: $q\in(0,1]$, $r\in[1,2)$ and $H_\ga\in(0,\frac12)$. To simplify the notation, we write $o=o_p$ and $O=O_p$ in the following.
\bass\label{ass:B} For all $t,T>0$, there exists a positive finite $\calf_t$-measurable random variable $C_t$ and a positive $\calf_t\otimes\calb(\R)$-measurable function $C_t(z)$ such that:
\benu
\item For   $s\in[t,t+T]$ and $v\in\{b, \eta^v, \wt \eta^v, \ov\eta^v,\si^\eta,\wt\si^\eta,\ov\si^\eta,\wh\si^\eta,\si^{\wt \eta},\wt\si^{\wt \eta},\ov\si^{\wt \eta},\wh\si^{\wt \eta},\mathring\si^{\wt \eta},\theta,\wt\theta\}$, we have 
\beq\label{eq:mom1}\E_t[\lvert\al_s\rvert]+\E_t[v_s^2]^{1/2}<C_t.\eeq
\item For any   $v\in\{ \si^\eta, \wt\si^\eta, \ov\si^\eta, \wh\si^\eta, \si^{\wt\eta}, \wt\si^{\wt\eta}, \ov\si^{\wt\eta}, \wh\si^{\wt\eta}, \mathring\si^{\wt\eta}\}$, we have  
\beq\label{eq:smooth1} \sup_{s,s'\in[t,t+T]} \E_t[(v_s-v_{s'})^2]^{1/2} = o(1) \eeq
as $T\to0$. Moreover, 
with the same $H$ as in Assumption~\ref{ass:A}, we have
\beq\label{eq:mom2}\sup_{s,s'\in[t,t+T]}  \Bigl\{ \E_t[(\theta_s-\theta_{s'})^2]^{1/2} +\E_t[(\wt\theta_s-\wt\theta_{s'})^2]^{1/2}\Bigr\} =o(T^H)\eeq
and
\beq\label{eq:smooth2}\begin{split}
	&\sup_{s,s'\in[t,t+T]}  \biggl\{\E_t[\lvert \al_s-\al_{s'}\rvert] + \E_t[\lvert \eta^v_s-\eta^v_{s'}\rvert^2]^{1/2}+\E_t[\lvert \wt\eta^v_s-\wt\eta^v_{s'}\rvert^2]^{1/2} \\
	&\qquad + \E_t[\lvert \ov\eta^v_s-\ov\eta^v_{s'}\rvert^2]^{1/2}+ \E_t\Biggl[ \int_\R  (\delta_v(s,z)-\delta_v (s',z))^2   \nu(dz)\Biggr]^{1/2}\biggr\} =O(T^H).\end{split}
\eeq
\item For all $s\in[t,t+T]$,
\beq\label{eq:mom3}\begin{split} &\lvert\delta(t,z)\rvert+\E_t [   \si^\delta(s,z)^2]^{1/2}+\E_t[ \dot\si^\delta(s,z)^2]^{1/2}+\E_t[\eta^\delta(s,z)^2]^{1/2}\\
	&\qquad+\E_t[ \ddot\eta^\delta(s,z)^2]^{1/2}+\E_t\biggl[\int_\R \lvert \delta_\delta(s,z,z')\rvert^r \nu(dz')\biggr]^{1/r}<C_t(z) \end{split} \eeq
and
\beq\label{eq:mom4} \E_t\Biggl[ \int_\R (\lvert \ga(s,z)\rvert^q+C_t(z)^r+\lvert\ga_v(s,z)\rvert^q + \lvert\delta_v(s,z)\rvert^2)  \nu(dz)\Biggr]<C_t.  \eeq
\item As $T\to0$,
\begin{equation}\label{eq:smooth3}\sup_{s,s'\in[t,t+T]}  \E_t\Biggl[  \int_\R   \lvert \ga(s,z)   -\ga(s',z)   \rvert^q   \nu(dz)\Biggr]^{1/q} = O(T^{H_\ga}).\end{equation}
Moreover, for all $s,s'\in[t,t+T]$,
\begin{equation}
	\label{eq:smooth4}  
	\begin{split}&\E_t[(\si^\delta(s,z)-\si^\delta(s',z))^2]^{1/2} +  \E_t[(\dot \si^\delta(s,z)-\dot \si^\delta(s',z))^2]^{1/2}\\
		&\qquad+\E_t[(\eta^\delta(s,z)-\eta^\delta(s',z))^2]^{1/2} +  \E_t[(\ddot \eta^\delta(s,z)-\ddot \eta^\delta(s',z))^2]^{1/2}\\
		&\qquad+\E_t\biggl[\int_\R \lvert \delta_\delta(s,z,z')-\delta_\delta(s',z,z')\rvert^r \nu(dz')\biggr]^{1/r}\leq C_t(z)T^{H_\delta} \end{split} \end{equation}
and 
\beq\label{eq:smooth5} \E_t[\lvert \theta_\delta(s,z)-\theta_\delta(s',z)\rvert^r]^{1/r}\leq C_t(z)T^{2H_\delta}.\eeq
\item The three roughness parameters $H$, $H_\ga$ and $H_\delta$ satisfy the relations 
\beq\label{eq:H} H_\ga > H-(\tfrac2q-1)(\tfrac12-H)\qquad\text{and}\qquad H_\delta > H-(\tfrac2r-1)[(\tfrac12-H)\wedge \tfrac14].\eeq
\eenu
\eass

Several comments are in order regarding Assumption \ref{ass:B}. Part 1 imposes finite moments of order up to $2$, conditionally on $\calf_t$, on all coefficients in the model. Part 2 requires some of these coefficients to be no rougher than volatility (in an $L^2$-sense). Part 3 and 4 introduce similar moment and regularity conditions on various jump-related coefficients, where the constants $q$ and $r$ are upper bounds on the degree of activity of the finite and infinite variation jump components, respectively, of $x$. The constant $H_\ga$ can be viewed  as a bound on the smoothness in time of the ``big'' jump size function $\ga$. Unlike the processes $\si$ and $\delta$, we do not model explicitly the dynamics of $\ga$ but only provide a bound for its smoothness. Both jumps in $\ga$ and rough components driven by Brownian motions, like the ones in $\si$ and $\delta$, will determine the value of $H_\ga$. The reason for treating  $\si$ and $\delta$ on one hand and $\ga$ on the other hand differently is that for the same degree of smoothness in expectation of these processes, the contribution in the higher-order expansion of the characteristic function due to $\ga$ will be dominated by that of $\si$ and $\delta$. Finally, part 5 in Assumption~\ref{ass:B} puts a lower bound on $H_\delta$ and $H_\ga$ (higher values of these parameters correspond to smoother paths), which are chosen in such a way that the variation of $\ga$ and certain components of the dynamics of $\delta$ can be neglected in the asymptotic expansion. This assumption is trivially satisfied in arguably the most typical case in which $H$, $H_\delta$ and $H_\ga$ are the same. Parametric rough volatility models typically impose   constancy or a deterministic relationship as a function of volatility on coefficients such as the volatility drift $b$, the leverage effect (i.e., $\si^v$) or volatility of volatility (i.e., $\si^v$ and $\wt\si^v$). Our nonparametric model outlined above, by contrast, does not make such assumptions and allows these coefficients to freely vary as stochastic processes. The same holds true for price and volatility jumps in our model, which are  absent in most parametric rough volatility models. For example, our assumptions permit jumps in price and volatility like  those of $\int_0^t K_{s} dL_s$, where $L$ is a Lévy process and $K$ is predictable.  

Overall, Assumptions~\ref{ass:A} and \ref{ass:B} allow $x$ to be an It\^o semimartingale with very general spot semimartingale characteristics. In particular, both the standard situation in which the spot semimartingale characteristics are themselves  It\^o semimartingales (see e.g., \cite{JP12}) and the pure rough setting in which $\si$ has no jumps and no martingale component (see e.g., \cite{gatheral2018volatility}) are nested in our model. 
We even allow  for models in which the jump size function can have roughness in its paths, which to the best of our knowledge has not been considered in prior parametric modeling.          

\section{Short-time expansion of characteristic functions}\label{sec:expand}
In this section we present our main theoretical result on a small-time asymptotic expansion of the conditional characteristic function of an increment of $x$, which we denote with 
\begin{equation}\label{eq:condchar} 
	\call_{t,T}(u)=\E_t[e^{iu(x_{t+T}-x_t)/\sqrt{T}}],\quad u\in\R.
\end{equation}
As before, the above conditional expectation is taken under $\Q$. While $\call_{t,T}(u)$ is typically not known in closed form, it can be computed analytically for some specific parametric models such as the rough Heston volatility model, see \cite{ER19}.

To state our expansion result, we need some notation. For $s\geq t\geq0$ and $u\in\R$, we set
\begin{equation}\label{eq:proj} 
	\si_{s\mid t}	= \E_t[\si_{s}],\quad 
	\si^v_{s\mid t}  = \E_t[\si^v_{s}],\quad v_{t'\mid t} = \E_t[v_{s}],\quad
	\delta(s\mid t,z)	=\E_t[\delta(s,z)]. 
\end{equation}
We note that as $T\rightarrow 0$,
\begin{equation}
\sigma_{t+T\mid t} = \begin{cases} \sigma_t + O(T^H) &\text{if } H<\frac12,\\[+1.5ex] \sigma_t + O(T) &\text{if } H=\frac12,\end{cases}
\end{equation}
with similar expansions for $\mathbb{E}_t[\si^v_{t+T}]$, $\mathbb{E}_t[v_{t+T}]$ and $\mathbb{E}_t[\delta(t+T,z)]$. In other words, 
if volatility is rough, the gap between $\sigma_{t+T\mid t}$ and $\sigma_t$ can be rather nontrivial, particularly for lower values of $H$. This is a major difference from a standard volatility model without a rough component. Moreover, at $H=\frac12$, there is a discontinuity in the rate of convergence of $\sigma_{t+T\mid t}$ towards $\sigma_t$.

We next introduce notation related to the characteristic exponents of the small and big jumps:
\beq\label{eq:phipsi} \begin{split} \phi_t(u)&=\int_\R e^{iu\delta(t,z)}(e^{iu\ga(t,z)}-1)\nu(dz), \quad\varphi_t(u)=\int_\R (e^{iu\delta(t,z)}-1-iu\delta (t,z))\nu(dz),\\
	\psi_t(u)&=\phi_t(u)+\varphi_t(u).
\end{split} \eeq  
Finally, in our expansions, small jumps in $x$ will play a key role, and we need the following additional notation for them:
\beq\label{eq:chi} \begin{split}  \chi^{(1)}_{t}(u)&=\int_\R \si^\delta(t,z)(e^{iu(\delta(t,z)+\ga(t,z))}-1)\nu(dz),\\
	\chi^{(1)}_{t'\mid t}(u)&=\int_\R  (\delta(t'\mid t,z)-\delta(t,z))(e^{iu(\delta(t,z)+\ga(t,z))}-1)\nu(dz),\\
	\chi^{(2)}_t(u)&=\int_\R \eta^\delta(t,z)(e^{iu(\delta(t,z)+\ga(t,z))}-1)\nu(dz),\\
	\chi^{(3)}_t(u)&=\int_\R\int_\R \delta_\delta(t,z,z')(e^{iu(\delta(t,z)+\ga(t,z))}-1)(e^{iu(\delta(t,z')+\ga(t,z'))}-1)\nu(dz)\nu(dz'),\\
	\chi^{(4)}_t(u)  &= \int_\R \delta_v(t,z)(e^{iu(\delta(t,z)+\ga(t,z))}-1)\nu(dz).
\end{split}\eeq

The following theorem contains our main expansion result for $\call_{t,T}(u)$. In the statement of the theorem, the superscript in $o^\mathrm{uc}$ stands for ``uniformly for $u$ belonging to compact subsets of $\R$.''
\bthm\label{thm:cf}
If Assumptions~\ref{ass:A} and \ref{ass:B} hold with $H, H_\ga,H_\delta\in(0,\frac12)$, $q\in(0,1]$ and $r\in[1,2)$, then as $T\to0$,
\beq\label{eq:cf}\begin{split}
	\call_{t,T}(u)&=\exp\biggl(iu\al_t\sqrt{T}-\frac12 u^2\int_0^1 \si^2_{t+sT\mid t}ds+T\psi_t(\tfrac {u}{\sqrt{T}}) \\
	&\quad\qquad -iu^3 \biggl(\frac12\si^2_tf'(v_t)\eta_t^v T^{1/2}+\frac{\si_t^2f'(v_t)\si^v_t}{\Ga(H+\frac{5}{2})}T^H+ C'_{1,0}(T)_tT^{2H} \biggr)  \biggr)\\
	&\quad 	+e^{-\frac12 u^2\si_t^2+T\varphi_t(u/\sqrt{T})} \biggl(\sum_{i=1}^4 C_{1,i}(u,T)_t+  \  C'_{1,1}(u,T)_t \biggr) +C_2(u)_tT^{2H} +o^{\mathrm{uc}}(T^{2H}),
\end{split}\raisetag{3.5\baselineskip}\eeq
where
\beq\label{eq:C1i}\begin{aligned}
	C_{1,1}(u,T)_t&=-\frac{u^2\si_tT^{H_\delta+1/2}\chi^{(1)}_t(\tfrac{u}{\sqrt{T}})}{\Ga(H_\delta+\frac52)},
	&C_{1,2}(u,T)_t&=-\frac12u^2\si_t   T\chi^{(2)}_t(\tfrac{u}{\sqrt{T}}),\\
	C_{1,3}(u,T)_t&=\frac12iu T^{3/2}\chi^{(3)}_t(\tfrac{u}{\sqrt{T}}), 
	&C_{1,4}(u,T)_t&=-\frac12 u^2\si_tf'(v_t)T\chi^{(4)}_t(\tfrac {u}{\sqrt{T}})
\end{aligned}
\eeq
and
\begin{align} \nonumber
	C'_{1,0}(T)_t&=  {T^{-H}}  \int_0^1 \int_0^sk_H(s-r)(f'(v_{t+sT\mid t})\si_{t+rT\mid t}^v\si_{t+sT\mid t} \si_{t+rT\mid t}   -f'(v_t)\si^v_t\si_t^2 )drds,\\
	C'_{1,1}(u,T)_t&=iuT^{1/2}\int_0^1  \chi^{(1)}_{t+sT\mid t}(\tfrac u{\sqrt{T}})ds \label{eq:C1'}
\end{align}
and
\beq\label{eq:C2} \begin{split}
	C_2(u)_t&=e^{-\frac12u^2\si_t^2}\biggl( \frac{\si_t^2f'(v_t)[h'(\eta_t)\si_t\si^\eta_t+f'(v_t)(\si^v_t)^2]u^4}{\Ga(2H+3)} -  \frac{f'(v_t)^2[(\si^v_t)^2+(\wt\si^v_t)^2]u^2}{8H\Ga(H+\frac12)\Ga(H+\frac32)}\\
	&\qquad\qquad\qquad\qquad\qquad\qquad\quad+\frac{\si_t^2 [f'(v_t)^2[(\si^v_t)^2+(\wt \si^v_t)^2]+f''(v_t)\si_t(\si^v_t)^2]u^4}{4(H+1)\Ga(H+\frac32)^2}  \biggr). 
\end{split}
\eeq
\ethm

\brem From the proof of Theorem~\ref{thm:cf}, it follows that all processes in \eqref{eq:x} and \eqref{eq:sigma}--\eqref{eq:delta} are well defined  under Assumptions~\ref{ass:A} and \ref{ass:B}.
\erem

\begin{remark} Related to Theorem~\ref{thm:cf}, short-term expansions of  near-the-money single options are derived in \cite{BFGHS19,euch2019short,FGP21, FGP22,FGS21,FZ17,LMPR18} for continuous  rough volatility models without jumps and in \cite{FSV21} for a semiparametric volatility model with a specific parametric  model for L\'{e}vy jumps in the price. The expansion derived  in \cite{FSV21} is only valid for a specific region of the Hurst parameter (which becomes smaller as the activity of price jumps increases), and it is not clear whether one can separately identify rough volatility and jumps from it. Theorem~\ref{thm:cf}, on the other hand, is derived for a general rough process with jumps (and jump intensities possibly varying in a rough way) and importantly does not impose any restriction on the Hurst parameter. With this result, as we shall show in Section \ref{sec:application} below, we can answer in the affirmative the question whether rough volatility and jumps can be separately identified from option observations in a general nonparametric setting.
\end{remark}

We make several comments about the  expansion result in Theorem~\ref{thm:cf}. First, compared with the corresponding result in \cite{T21} for the standard case when the spot semimartingale characteristics do not have rough components in their paths and $x$ does not have jumps of infinite variation, we note that the higher-order terms in the expansion here are of much higher asymptotic order. This underscores the importance of deriving such a higher-order expansion in a rough setting. The leading term of $|\call_{t,T}(u)|$ is $\exp (-\frac12 u^2\int_0^1 \si^2_{t+sT\mid t}ds )$, which is $O(1)$, while that of the argument of the characteristic function is the highest among $\Im(T\psi_t(\tfrac {u}{\sqrt{T}}))$, $-u^3\frac12\si^2_t\eta_t^v T^{1/2}$ and $-u^3 {\si_t^2\si^v_t}\Ga(H+\frac{5}{2})^{-1}T^H$. These three terms are of asymptotic order $O(T^{1-r/2})$, $O(\sqrt{T})$ and $O(T^H)$, respectively. Therefore, which of them dominates will depend on the degree of roughness of volatility and the degree of jump activity, and of course on which of these components are actually present in the dynamics of $x$. 

Similar comments apply for the higher-order terms in the asymptotic expansion due to the dynamics of the spot semimartingale characteristics. In particular, $\{C_{1,i}(u,T)_t\}_{i=1,\ldots,4}$   and $C'_{1,1}(u,T)_t$ are all due to the dynamics of the infinite jump variation component of $x$. The leading ones among them are $C_{1,1}(u,T)_t$ and $C'_{1,1}(u,T)_t$, both of which are  of order $O(T^{H_{\delta}+1-{r}/{2}})$ and depend on the rough component of the jump size function $\delta$. We further have $ C_{1,2}(u,T)_t + C_{1,3}(u,T)_t+C_{1,4}(u,T)_t = O(T^{ {3}/{2}- {r}/{2}})$, and these three terms do not depend on the rough component in $\delta$. 

The higher-order term in the expansion due to the rough component of the volatility dynamics is the term $C_2(u)_tT^{2H}$, which is of order $O(T^{2H})$. Comparing this contribution with the one due to the variation of $\delta$, which is of order $O(T^{H_{\delta}+1- {r}/{2}})$, if the small jumps have a rough component in their dynamics, we see that their asymptotic ranking in terms of size will depend on the values of the three parameters $H$,  $H_\delta$ and $r$. Of course, the different components in the asymptotic expansion of Theorem~\ref{thm:cf} also typically depend in a different way on $u$, and this can be used to separate their contribution to the conditional characteristic function as we will illustrate in our application of the expansion result.

We finish this section with presenting an expansion result for $\E_t[x_{t+T}-x_t]$, whose proof is almost immediate from \eqref{eq:x}. We will make use of this result later on when we estimate the degree of volatility roughness.
\begin{proposition}\label{prop:condmean}
	Suppose that the assumptions of Theorem~\ref{thm:cf} hold with $q=1$ and $H_\ga\geq H$. Then, as $T\to0$, we have that
	\[
	\E_t[x_{t+T}-x_t] = \biggl(\al_t+\int_\R \ga(t,z)\nu(dz)\biggr)T + O(T^{1+H}). \]
\end{proposition}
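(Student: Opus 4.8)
The plan is to compute $\E_t[x_{t+T}-x_t]$ term by term using the decomposition \eqref{eq:x}, controlling each of the five pieces with the moment and smoothness bounds of Assumption~\ref{ass:B}. Writing $\E_t[x_{t+T}-x_t] = \E_t[\int_t^{t+T}\al_s\,ds] + 0 + \E_t[\int_t^{t+T}\int_\R \ga(s,z)\mu(ds,dz)] + 0$, the two stochastic integrals against $dW_s$ and against the compensated measure $(\mu-\nu)(ds,dz)$ vanish in $\calf_t$-conditional expectation (using that $\delta$ satisfies an $L^r$-type integrability bound via \eqref{eq:mom3}--\eqref{eq:mom4} with $r<2$, so the compensated integral is at least a local martingale whose conditional mean is $0$; one localizes if necessary). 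For the big-jump term, $\E_t[\int_t^{t+T}\int_\R \ga(s,z)\mu(ds,dz)] = \E_t[\int_t^{t+T}\int_\R \ga(s,z)\nu(dz)\,ds]$, which is legitimate since $q=1$ makes $\ga$ conditionally integrable against $\nu$ by \eqref{eq:mom4}. So the whole conditional mean equals $\E_t[\int_t^{t+T}(\al_s + \int_\R \ga(s,z)\nu(dz))\,ds]$.

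Next I would replace the integrand $h_s := \al_s + \int_\R \ga(s,z)\nu(dz)$ by its value at time $t$. The error is $\E_t[\int_t^{t+T}(h_s - h_t)\,ds]$, bounded in absolute value by $\int_t^{t+T} \E_t[|h_s - h_t|]\,ds \leq \int_t^{t+T}\bigl(\E_t[|\al_s-\al_t|] + \E_t[\int_\R|\ga(s,z)-\ga(t,z)|\,\nu(dz)]\bigr)ds$. By \eqref{eq:smooth2} the first increment is $O(T^H)$ uniformly over $s\in[t,t+T]$, and by \eqref{eq:smooth3} with $q=1$ the second is $O(T^{H_\ga})$; since $H_\ga \geq H$, both are $O(T^H)$. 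Integrating over an interval of length $T$ gives a total error of $O(T^{1+H})$. This yields exactly $\E_t[x_{t+T}-x_t] = \bigl(\al_t + \int_\R \ga(t,z)\nu(dz)\bigr)T + O(T^{1+H})$, and the constant $\int_\R \ga(t,z)\nu(dz)$ is finite and $\calf_t$-measurable by \eqref{eq:mom4} with $q=1$.

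The main obstacle — really the only non-cosmetic point — is justifying rigorously that the stochastic-integral pieces contribute nothing to the conditional mean in the infinite-variation-jump regime: the compensated integral $\int_t^{t+T}\int_\R \delta(s,z)(\mu-\nu)(ds,dz)$ is a priori only a local martingale, and the ordinary Brownian integral $\int_t^{t+T}\si_s\,dW_s$ likewise only a local martingale under the mild moment conditions in Assumption~\ref{ass:B}. One handles this by a standard localization argument: stop at $\tau_n = \inf\{s: |x_s| + \int_t^s \si_u^2\,du + \dots \geq n\}$, apply the martingale property on $[t, t+T\wedge\tau_n]$, and pass to the limit using the $\calf_t$-conditional $L^1$ (or $L^r$) bounds that \eqref{eq:mom1}, \eqref{eq:mom3} and \eqref{eq:mom4} provide — these give uniform integrability of the stopped increments, so dominated convergence applies. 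Since this machinery is entirely routine and in fact subsumed by the (much more delicate) estimates already established in the proof of Theorem~\ref{thm:cf}, I would simply invoke those bounds; the proof is then "almost immediate" as claimed.
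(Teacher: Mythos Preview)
Your proposal is correct and follows essentially the same approach as the paper: decompose $x_{t+T}-x_t$ via \eqref{eq:x}, drop the martingale pieces, replace the surviving integrand $\al_s+\int_\R\ga(s,z)\nu(dz)$ by its value at $t$, and bound the remainder using the $L^1$ H\"older bounds \eqref{eq:smooth2} and \eqref{eq:smooth3} together with $q=1$ and $H_\ga\geq H$. If anything, you are more explicit than the paper about the localization needed to justify that the Brownian and compensated-jump integrals have zero conditional mean; the paper simply takes this for granted.
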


\section{Option-based characteristic function portfolios}\label{sec:options}

If $x$ is an asset price, the conditional characteristic function of its price increments can be inferred from portfolios of short-dated options, as done in \cite{QT19} and \cite{T19}. This   allows for feasible inference and the development of model specification tests on the basis of the expansion results of the previous section.

More specifically, under the simplifying assumption of zero risk-free rate and dividend yield, we have the following option spanning result from \cite{CM01}:
\begin{equation}\label{eq:spanning}
	\mathbb{E}_t[e^{iu(x_{t+T}-x_t)/\sqrt{T}}] = 1-\left(\frac{u^2}{T}+i\frac{u}{\sqrt{T}}\right)e^{-x_t}\int_{\mathbb{R}}e^{(iu/\sqrt{T}-1)(k-x_t)}O_{t,T}(k)dk
\end{equation}
for $u\in\R$,
where $O_{t,T}(k)$ denotes the price at time $t$ of an European style out-of-the-money option expiring at $t+T$ and with log-strike of $k$. In agreement with the notation used so far, the expectation above is taken under $\mathbb{Q}$, which signifies the so-called risk-neutral probability measure. 
We remind the reader that $O_{t,T}(k)$ is a put if $k\leq \log(F_{t,T})$ and a call otherwise, where $F_{t,T}$ is the time-$t$ futures price of the asset with expiration date $t+T$.

We can make a simple Riemann sum approximation of the integral in (\ref{eq:spanning}) using available options on a discrete log-strike grid. More specifically, for maturity $T_\ell$, we denote the available log-strike grid with 
\begin{equation}\label{eq:logmoney}
	\underline{k}_{\ell} ~\equiv~ k_{\ell,1} \, < \, k_{\ell,2} \, < \, \cdots \, < \, k_{\ell,N_\ell}~\equiv~ \overline{k}_{\ell},\qquad N_\ell\in\mathbb{N}_+.
\end{equation} 
The gap between the log-strikes is denoted $\Delta_{\ell,i} = k_{\ell,i} - k_{\ell,i-1}$ for $i=2,\dots,N_\ell$ and $\ell=1,2,\dots$ The log-strike grids need not be equidistant, that is, $\Delta_{\ell,i}$ may differ across $i$'s. For simplicity, in the above notation of the log-strike grid, we have dropped the dependence on $t$, as $t$ will be fixed throughout our applications. 

The true option prices are all defined on $(\Omega,\mathcal{F},(\mathcal{F}_t)_{t\geq 0},\P)$, where $\P$ is the statistical (true) probability measure and is locally equivalent to $\Q$. The observed option prices contain errors, that is,
\begin{equation}\label{eq:obs}
	\widehat{O}_{t,T_\ell}(k_{\ell,j}) = O_{t,T_\ell}(k_{\ell,j})+\epsilon_{t,T_\ell}(k_{\ell,j}),\quad j=1,\dots,N_\ell,\quad \ell=1,2,\dots,
\end{equation}
where the errors $\epsilon_{t,T_\ell}(k_{\ell,j})$ are defined on an auxiliary space 
$(\Omega^{(1)},\mathcal{F}^{(1)})$ equipped with a transition probability $\mathbb{P}^{(1)}(\omega,d\omega^{(1)})$ from $\Omega$, the probability space on which $x$ is defined, to $\Omega^{(1)}$. We further define
\[\ov\Omega \,=\, \Omega\times \Omega^{(1)},\quad\ov{\mathcal{F}} \,=\, \mathcal{F} \otimes \mathcal{F}^{(1)},\quad\ov{\mathbb{P}}(d\omega,d\omega^{(1)}) \,=~ \mathbb{P}(d\omega) \, \mathbb{P}^{(1)}(\omega,d\omega^{(1)}) \,.\]    

Using the available options, our estimator of $\call_{t,T}(u)$ from \eqref{eq:condchar} is given by 
\begin{equation}\label{eq:L_hat}
	\widehat{\mathcal{L}}_{t,T_\ell}(u) = 1 - \left(\frac{u^2}{T}+i\frac{u}{\sqrt{T}}\right)e^{-x_t}\sum_{j=2}^{N_\ell}e^{(iu/\sqrt{T_\ell}-1)(k_{\ell,j-1}-x_t)}\widehat{O}_{T_\ell}(k_{\ell,j-1})\Delta_{\ell,j},~u\in\mathbb{R}.
\end{equation}
Our estimation procedure in Section~\ref{sec:application} will be based on the argument of the characteristic function $\call_{t,T}(u)$. As seen from the expansion result in \eqref{eq:cf}, the value $\alpha_t$ of the spot drift term at time $t$  appears in it. We can estimate it from the option data. More specifically, similarly to (\ref{eq:spanning}), we have
\begin{equation}\label{eq:M}
	\mathbb{E}_t[x_{t+T}-x_t] = -\int_{\mathbb{R}}e^{-k}O_{t,T}(k)dk.
\end{equation}
The feasible (and standardized) counterpart of this is given by 
\begin{equation}\label{eq:M_hat}
	\widehat{\mathcal{M}}_{t,T_\ell} =- \frac{1}{T_\ell}\sum_{j=2}^{N_\ell}e^{-k_{\ell,j-1}}\widehat{O}_{T_\ell}(k_{\ell,j-1})\Delta_{\ell,j}.
\end{equation}
We note that $-2\widehat{\mathcal{M}}_{t,T_\ell}$ is an estimate of the conditional expected integrated variance of $x$ (this follows by an application of It\^o's lemma and the definition of the risk-neutral measure) and is used in the computation of the VIX volatility index by the CBOE options exchange. 

Our goal in this section is to derive a bound for the error $\widehat{\mathcal{L}}_{T_\ell}(u) - \mathcal{L}_{T_\ell}(u)$ and $\widehat{\mathcal{M}}_{t,T_\ell} - \frac{1}{T_\ell}\mathbb{E}_t[x_{t+T_\ell}-x_t]$ as the mesh of the log-strike grid, $\sup_{i=2,...,N_\ell}\Delta_{\ell,i}$, shrinks towards zero. For deriving such a result, we need several assumptions concerning existence of conditional moments of $x$, the observation scheme as well as the observation errors. They are stated below. As before, if expectation is taken under $\mathbb{Q}$, we will not use superscript in the notation; if expectation is under the true probability or $\ov \P$, we put superscripts to signify this.

\bass\label{ass:C} There are $\mathcal{F}_t$-measurable random variables $c_t$ and $C_t$ such that the following holds:
\benu
\item We have $\sigma_t>0$ and there exists   $\overline{t}>t$ such that for $s\in[t,\overline{t}]$ we have
\begin{equation}\label{a3:1}
	\begin{split}
		&\mathbb{E}_t[|\sigma_s|^4]+\mathbb{E}_t[e^{2|x_s|}] + \mathbb{E}_t \biggl[\biggl( \int_{\mathbb{R}} (e^{|\gamma(s,z)|}-1)\nu(dx)  \biggr)^2 \biggr]\\&\qquad+ \mathbb{E}_t\biggl[\biggl( \int_{\mathbb{R}} (e^{|\delta(s,z)|}-1-|\delta(s,z)|)\nu(dz)  \biggr)^2\biggr]<C_t.
	\end{split}
\end{equation}
\item The log-strike grids $\{k_{\ell,j}\}_{j=1}^{N_{\ell}}$, for $\ell=1,2,\dots$, are $\mathcal{F}_t$-measurable and we have
\begin{equation}\label{a4:1}
	c_t\Delta\leq k_{\ell,j} - k_{\ell,j-1}\leq C_t\Delta,\qquad \ell=1,2,\dots,
\end{equation}
where $\Delta$ is a deterministic sequence.
\item We have 
\begin{equation}\label{eq:epsilon} 
	\epsilon_{t,T_\ell}(k_{\ell,j}) = \zeta_{t,\ell}(k_{\ell,j})\overline{\epsilon}_{t,\ell,j}O_{t,T_\ell}(k_{\ell,j})
\end{equation}
for some $\mathcal{F}_t$-measurable $\zeta_{t,\ell}(k_{\ell,j})$ and some sequences of $\calf^{(1)}$-measurable random variables $\{\overline{\epsilon}_{t,\ell,j}\}_{j=1}^{N_\ell}$, for $\ell=1,2,\dots$, that are i.i.d.\  and independent of each other and of $ \mathcal{F}$ under $\P$. We further have 
\begin{equation}\label{eq:mom} 
	\mathbb{E}^{\ov \P}[\overline{\epsilon}_{t,\ell,j}\mid \mathcal{F}]= 0,\quad \mathbb{E}^{\ov \P}[(\overline{\epsilon}_{t,\ell,j})^2\mid \mathcal{F}] = 1,\quad \mathbb{E}^{\ov \P}[| \overline{\epsilon}_{t,\ell,j}|^{\kappa}\mid\mathcal{F}] <\infty
\end{equation} 
for some $\kappa\geq 4$ and all $\ell=1,2,\dots$
\eenu
\eass

Under Assumption \ref{ass:C}, we have 
\[\widehat{\mathcal{M}}_{t,T_\ell}\stackrel{\ov \P}{\longrightarrow} \frac1{T_\ell}\mathbb{E}_t[x_{t+T_\ell}-x_t]\quad\text{and}\quad \widehat{\mathcal{L}}_{T_\ell}(u) \stackrel{\ov \P}{\longrightarrow} \mathcal{L}_{T_\ell}(u),\]
locally uniformly in $u$   for all $\ell=1,2,\dots$ In the next theorem, we provide a bound for the rate of convergence of the feasible estimators. Recall that $a_n \asymp b_n$ if there are $c,C>0$ such that $c\lvert a_n\rvert\leq \lvert b_n\rvert\leq C\lvert a_n\rvert$ for all $n$.

\begin{theorem}\label{thm:err_bound} Suppose Assumptions \ref{ass:A}, \ref{ass:B} and \ref{ass:C} hold. For some finite $L\in\mathbb{N}_+$, assume that there are $\beta, \gamma>0$ and $\alpha>\frac{1}{2}$ such that as $T,\Delta\to 0$, $\un k_\ell\to-\infty$ and $\ov k_\ell\to\infty$, we have $T_\ell\asymp T$, $\Delta\asymp T^{\alpha}$, $e^{\underline{k}_\ell}\asymp T^{\beta}$, $e^{\overline{k}_\ell}\asymp T^{-\gamma}$ for all $\ell=1,\dots,L$. Then, if $\beta\wedge \gamma>\frac{\alpha}{4}-\frac{1}{8}$, we have 
	\begin{equation}\label{err:1}
		\Bigl\lvert\widehat{\mathcal{L}}_{t,T_\ell}(u) - \mathcal{L}_{t,T_\ell}(u)\Bigr\rvert+ \biggl\lvert\widehat{\mathcal{M}}_{t,T_\ell} - \frac{1}{T_\ell}\mathbb{E}_t[x_{t+T_\ell}-x_t]\biggr\rvert= O_{\ov \P}^\mathrm{uc}\left(\frac{\sqrt{\Delta}}{T^{1/4}}\right),\quad \ell=1,2,\dots
	\end{equation}
\end{theorem}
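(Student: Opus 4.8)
The plan is to control the two errors separately and then combine, since both have the same structure: a Riemann-sum approximation of an integral against option prices plus an observation-error term. For $\widehat{\mathcal{M}}_{t,T_\ell}-\frac1{T_\ell}\E_t[x_{t+T_\ell}-x_t]$, I would start from the exact identity \eqref{eq:M}, so that $\frac1{T_\ell}\E_t[x_{t+T_\ell}-x_t]=-\frac1{T_\ell}\int_\R e^{-k}O_{t,T_\ell}(k)dk$, and write the difference as a deterministic discretization error plus a stochastic error term. The discretization error is $\frac1{T_\ell}\int_\R e^{-k}O_{t,T_\ell}(k)dk$ minus its left-endpoint Riemann sum over the grid \eqref{eq:logmoney}; I would split this into (i) the truncation tails $(-\infty,\un k_\ell)$ and $(\ov k_\ell,\infty)$, bounded using Assumption \ref{ass:C}-1 (the conditional moment bounds on $e^{2|x_s|}$ via the standard option-price bounds $O_{t,T}(k)\le \E_t[(e^{x_{t+T}}-e^k)^+\wedge(e^k-e^{x_{t+T}})^+]$), which are $O(e^{\un k_\ell})+O(e^{\ov k_\ell})\asymp O(T^\beta)+O(T^{-\gamma})$ — wait, $e^{\ov k_\ell}\asymp T^{-\gamma}$ blows up, so the relevant tail bound must use that OTM option prices decay in $k$; I would bound the right tail by $\int_{\ov k_\ell}^\infty e^{-k}\E_t[(e^k-e^{x_{t+T}})^+]dk$ and similarly the left tail, getting contributions that are negligible relative to $\sqrt\Delta/T^{1/4}$ precisely under $\beta\wedge\gamma>\frac\alpha4-\frac18$; and (ii) the interior Riemann error, which by the mesh bound \eqref{a4:1} and smoothness of $k\mapsto e^{-k}O_{t,T_\ell}(k)$ is $O(\Delta\cdot\text{(total variation)})$. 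The key quantitative input is that the integrand, after the $1/T_\ell$ normalization, has size and oscillation scaling that produces the $\sqrt\Delta/T^{1/4}$ rate — this is where one must track powers of $T$ carefully, using $O_{t,T}(k)\lesssim \sqrt T$ near the money and the Gaussian-type decay of the spanning kernel.

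For $\widehat{\mathcal{L}}_{t,T_\ell}(u)-\mathcal{L}_{t,T_\ell}(u)$ I would use the spanning identity \eqref{eq:spanning} in the same way, now with the oscillatory weight $e^{(iu/\sqrt{T_\ell}-1)(k-x_t)}$ and the prefactor $(u^2/T+iu/\sqrt T)$. The difference again decomposes as: (a) truncation of the integral to $[\un k_\ell,\ov k_\ell]$; (b) Riemann discretization of the smooth-but-oscillatory integrand on $[\un k_\ell,\ov k_\ell]$; and (c) the observation-error term $(u^2/T+iu/\sqrt T)e^{-x_t}\sum_j e^{(iu/\sqrt{T_\ell}-1)(k_{\ell,j-1}-x_t)}\epsilon_{t,T_\ell}(k_{\ell,j-1})\Delta_{\ell,j}$. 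Term (c) is the genuinely stochastic one: using \eqref{eq:epsilon}–\eqref{eq:mom}, conditionally on $\calf$ this is a sum of independent mean-zero terms, so its conditional variance is $\lesssim (u^4/T^2)\sum_j e^{-2(k_{\ell,j-1}-x_t)}\zeta_{t,\ell}(k_{\ell,j-1})^2 O_{t,T_\ell}(k_{\ell,j-1})^2\Delta_{\ell,j}^2$; bounding $\sum_j(\cdots)\Delta_{\ell,j}^2\lesssim \Delta\int e^{-2(k-x_t)}O_{t,T_\ell}(k)^2 dk$ and using $O_{t,T_\ell}(k)\lesssim \sqrt{T}\wedge(\text{tail decay})$ gives conditional variance $\lesssim u^4\Delta/\sqrt T$, i.e. $O_{\ov\P}(\sqrt{\Delta}/T^{1/4})$ by Chebyshev; the $\kappa\ge4$-th moment bound in \eqref{eq:mom} upgrades this to the locally-uniform-in-$u$ statement via a chaining/maximal-inequality argument over a mesh of $u$-values and Lipschitz control of $u\mapsto\widehat{\mathcal L}_{t,T_\ell}(u)$. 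Terms (a) and (b) are deterministic (measurable w.r.t.\ $\calf$) and handled exactly as for $\widehat{\mathcal M}$, with the oscillation in the integrand at frequency $\sim 1/\sqrt{T}$ contributing extra factors of $1/\sqrt T$ that are already accounted for by the $u^2/T$ prefactor; the condition $\beta\wedge\gamma>\frac\alpha4-\frac18$ is exactly what makes the truncation tails smaller than $\sqrt\Delta/T^{1/4}$.

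The main obstacle I anticipate is the careful bookkeeping of $T$-powers across the three sources — in particular, showing that the Riemann-discretization error in (b), for the oscillatory integrand against $(u^2/T)O_{t,T_\ell}(k)$, is indeed $O(\sqrt\Delta/T^{1/4})$ rather than something larger. The subtlety is that $k\mapsto O_{t,T_\ell}(k)$ has a "kink-like" region of width $\sim\sqrt T$ around $x_t$ where its derivative is $O(1)$ (not small), so a naive total-variation bound gives $\Delta\cdot(u^2/T)\cdot O(1)\cdot\sqrt T/\sqrt T=u^2\Delta/T$, which is worse than $\sqrt\Delta/T^{1/4}$ unless $\Delta\lesssim\sqrt T$, i.e. $\alpha\ge1/2$ — which is exactly the hypothesis $\alpha>\frac12$. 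So the role of $\alpha>\frac12$ is precisely to tame this near-the-money discretization error; making this rigorous requires splitting the integration domain into the $O(\sqrt T)$-window around $x_t$ (handled by the mesh being finer than $\sqrt T$) and the complementary region (where $O_{t,T_\ell}$ and all its relevant derivatives are smooth with favorable $T$-scaling), and then optimizing the two contributions, which is where the $\sqrt\Delta/T^{1/4}$ rate emerges. I would also need a uniform-in-$u$ refinement throughout, but that is routine once the pointwise bounds are in place, given the $\kappa\ge4$ moment assumption.
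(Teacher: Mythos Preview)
Your three-way decomposition into truncation tails, Riemann discretization, and observation error is exactly the paper's approach, and your conditional-variance computation for term (c) correctly produces the $\sqrt{\Delta}/T^{1/4}$ rate. However, there is an arithmetic slip and a conceptual confusion in your handling of (b). First, your comparison ``$u^2\Delta/T$ is worse than $\sqrt\Delta/T^{1/4}$ unless $\Delta\lec\sqrt T$'' is wrong: $\Delta/T\le\sqrt\Delta/T^{1/4}$ is equivalent to $\Delta\le T^{3/2}$, i.e.\ $\alpha\ge\tfrac32$, not $\alpha>\tfrac12$. Second, and more importantly, the $\sqrt\Delta/T^{1/4}$ rate does \emph{not} ``emerge from optimizing'' the near/far split in (b); it is set entirely by (c), and the role of $\alpha>\tfrac12$ is only to push (b) \emph{below} that rate.

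What makes this work is that the Riemann error is actually $O^{\mathrm{uc}}_{\ov\P}\bigl((\Delta/\sqrt T)\log T\bigr)$, not $O(\Delta/T)$: the missing factor of $\sqrt T$ comes precisely from the near-the-money bound $O_{t,T}(k)\lec\sqrt T\wedge(T/|k|)$, which you invoke for (c) but drop in your ``naive'' estimate for (b). The paper proves this bound together with a Lipschitz estimate $|O_{t,T}(k_1)-O_{t,T}(k_2)|\lec (Tk_2^{-4}\wedge Tk_2^{-2}\wedge 1)\,|e^{k_1}-e^{k_2}|$ as a separate lemma, using the exponential-moment condition in Assumption~\ref{ass:C}-1; from these two inequalities the bounds on the truncation term, $O\bigl(e^{2\un k_\ell}+e^{-2\ov k_\ell}\bigr)=O(T^{2(\beta\wedge\gamma)})$, and on the discretization term, $O\bigl((\Delta/\sqrt T)\log T\bigr)$, follow directly. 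The condition $\beta\wedge\gamma>\tfrac\alpha4-\tfrac18$ is then exactly $T^{2(\beta\wedge\gamma)}=o(\sqrt\Delta/T^{1/4})$, and $\alpha>\tfrac12$ is exactly $(\Delta/\sqrt T)\log T=o(\sqrt\Delta/T^{1/4})$.
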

The rate of convergence of the option-based estimators is determined by the mesh of the log-strike grid, $\Delta$, and the asymptotic order of the maturities of the options used in the estimation, $T$. We note that for consistent estimation, we need $\Delta/\sqrt{T}\rightarrow 0$. The reason for this is that if this is not the case the Riemann sum approximation error around-the-money, i.e., for log-strikes around $x_t$, will be too big relative to the quantity that is estimated (the conditional characteristic function or the conditional mean). In practice, this condition is expected to hold for the typical available strike grids and maturities, and this is reflected in the fact that the observed changes in option prices across consecutive strikes are typically not very big for options with at least one day to expiration. 

In the statement of the theorem, we further impose the requirement $\beta\wedge \gamma>\frac{\alpha}{4}-\frac{1}{8}$. This requirement guarantees that the error in $\widehat{\mathcal{L}}_{T_\ell}(u)$ and $\widehat{\mathcal{M}}_{t,T_\ell}$ due to the fact that Riemann sum is over a bounded (but asymptotically increasing) log-strike domain. This error is not expected to be binding in a typical application because we typically observe deep out-of-the-money option prices until their value falls below the minimum tick size.  
Finally, we note that with a slight strengthening of Assumption~\ref{ass:C}, we can also derive  a CLT for $\widehat{\mathcal{L}}_{t,T_\ell}(u)$ and $\widehat{\mathcal{M}}_{t,T_\ell}$. For brevity, we do not present such a result here. 

\section{Application: Estimating the degree of volatility roughness}\label{sec:application}

We will show in this section how one can estimate the volatility roughness parameter $H$ using the expansion result of Theorem~\ref{thm:cf} and short-dated options. The parameter $H$ enters both in the norm and the argument of the conditional characteristic function. Estimating $H$ from the latter appears somewhat easier provided one knows that $\si^v_t\neq 0$. Note that the process $\si^v$, together with $\eta^v$, captures the dependence between the price and volatility diffusive moves, and there is a lot of empirical evidence for the existence of such dependence. Therefore, an assumption that $\si^v_t\neq 0$ is not restrictive from an empirical point of view and we will work under such an assumption in this section.

\subsection{The case without jumps in $x$}
We first start by discussing the case where there are no jumps in $x$, that is,
\begin{equation}\label{eq:nojumps} 
	\ga(t,z)\equiv\delta(t,z)\equiv0.
\end{equation}
We might still have jumps in $\si$, $\eta$ or $\wt \eta$, of course. 
For $u>0$, we introduce
\beq\label{eq:arg-def} \begin{split}
	A_{t,T}(u) &= \Arg(\mathcal{L}_{t,T}(u)) - \frac{u}{\sqrt{T}}\mathbb{E}_t[x_{t+T}-x_t],\\  \widehat{A}_{t,T}(u) &= \Arg(\widehat{\mathcal{L}}_{t,T}(u)) -u\sqrt{T}\widehat{\mathcal{M}}_{t,T} ,
\end{split}
\eeq
where as usual $\Arg(z)$ is the principal argument of the complex number $z$. Based on the expansions in Theorem~\ref{thm:cf} and Proposition~\ref{prop:condmean}, we derive the following result.

\begin{corollary}\label{cor:arg} Under \eqref{eq:nojumps} and the assumptions of Proposition~\ref{prop:condmean}, we have
	\begin{equation}\label{eq:arg}  
		A_{t,T}(u)   = -  \frac{\si_t^2f'(v_t)\si^v_tu^3}{\Ga(H+\frac{5}{2})}T^H  + O^\mathrm{uc}(T^{\frac12\wedge 2H}),\qquad T\to0. 
	\end{equation}
	If furthermore $\eta^v\equiv0$, then this expansion improves to
	\begin{equation}\label{eq:arg:imp} 
		A_{t,T}(u)   = -  \frac{\si_t^2f'(v_t)\si^v_tu^3}{\Ga(H+\frac{5}{2})}T^H  + O^\mathrm{uc}(T^{2H}),\qquad T\to0. 
	\end{equation}
\end{corollary}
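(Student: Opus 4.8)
The plan is to specialize Theorem~\ref{thm:cf} to the no-jump case \eqref{eq:nojumps} and then read off the principal argument. When $\ga\equiv\delta\equiv0$, every integrand in \eqref{eq:phipsi}--\eqref{eq:chi} contains a factor $e^{iu\ga(t,z)}-1$ or $e^{iu\delta(t,z)}-1$ that vanishes identically, so $\psi_t\equiv\varphi_t\equiv0$, all $\chi^{(j)}_t$ and $\chi^{(1)}_{t'\mid t}$ vanish, and hence $C_{1,j}(u,T)_t=0$ for $j=1,\dots,4$ and $C'_{1,1}(u,T)_t=0$. Consequently \eqref{eq:cf} reduces to
\begin{align*}
\call_{t,T}(u)={}&\exp\Bigl(iu\al_t\sqrt T-\tfrac12u^2\!\int_0^1\!\si^2_{t+sT\mid t}\,ds\\
&\qquad-iu^3\bigl(\tfrac12\si_t^2\eta_t^\si T^{1/2}+\tfrac{\si_t^2\eta_t}{\Ga(H+\frac{5}{2})}T^H+C'_{1,0}(T)_tT^{2H}\bigr)\Bigr)+C_2(u)_tT^{2H}+o^{\mathrm{uc}}(T^{2H}).
\end{align*}
I would also record at this point that $C'_{1,0}(T)_t=O_p(1)$: by the expansions noted after \eqref{eq:proj}, $\si_{t+sT\mid t}=\si_t+O(T^H)$ and $\eta_{t+sT\mid t}=\eta_t+O(T^H)$ uniformly for $s\in[0,1]$, so the bracketed difference in the definition \eqref{eq:C1'} of $C'_{1,0}(T)_t$ is $O_p(T^H)$ uniformly over $0\le r\le s\le1$, while $\int_0^1\!\int_0^s(s-r)^{H-1/2}\,dr\,ds<\infty$ since $H-\tfrac12>-1$; the $T^{-H}$ prefactor is therefore exactly compensated, and in particular $C'_{1,0}(T)_tT^{2H}=O^{\mathrm{uc}}(T^{2H})$.

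Next I would pass to the argument. Write the right-hand side above as $E_1+E_2$, with $E_1$ the exponential and $E_2=C_2(u)_tT^{2H}+o^{\mathrm{uc}}(T^{2H})=O^{\mathrm{uc}}(T^{2H})$ (since $C_2(u)_t=O_p(1)$ uniformly on compact $u$-sets). Then $|E_1|=\exp\bigl(-\tfrac12u^2\int_0^1\si^2_{t+sT\mid t}\,ds\bigr)\to\exp(-\tfrac12u^2\si_t^2)$ is bounded away from $0$ with probability tending to one, uniformly over compact $u$-sets, so $E_2/E_1=O^{\mathrm{uc}}(T^{2H})$. Moreover the imaginary part of the exponent of $E_1$, namely $u\al_t\sqrt T-u^3\bigl(\tfrac12\si_t^2\eta_t^\si T^{1/2}+\tfrac{\si_t^2\eta_t}{\Ga(H+\frac{5}{2})}T^H+C'_{1,0}(T)_tT^{2H}\bigr)$, is $o_p(1)$, so for small $T$ the number $E_1$ lies in the open right half-plane and $\Arg E_1$ equals that imaginary part with no $2\pi$ ambiguity; together with $\Arg(1+z)=O(|z|)$ as $z\to0$ this gives
\[
\Arg(\call_{t,T}(u))=u\al_t\sqrt T-u^3\Bigl(\tfrac12\si_t^2\eta_t^\si T^{1/2}+\tfrac{\si_t^2\eta_t}{\Ga(H+\frac{5}{2})}T^H+C'_{1,0}(T)_tT^{2H}\Bigr)+O^{\mathrm{uc}}(T^{2H}).
\]

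Finally I would subtract $\tfrac{u}{\sqrt T}\E_t[x_{t+T}-x_t]$. Proposition~\ref{prop:condmean} applies under the present hypotheses, and since $\ga\equiv0$ its conclusion reads $\E_t[x_{t+T}-x_t]=\al_tT+O(T^{1+H})$, whence $\tfrac{u}{\sqrt T}\E_t[x_{t+T}-x_t]=u\al_t\sqrt T+O^{\mathrm{uc}}(T^{1/2+H})$. The $u\al_t\sqrt T$ terms cancel in $A_{t,T}(u)$; because $H<\tfrac12$ one has $\tfrac12+H>\max(\tfrac12,2H)$ and both $T^{1/2}$ and $T^{2H}$ are $O(T^{1/2\wedge2H})$, so (using $C'_{1,0}(T)_t=O_p(1)$) everything except the leading $-\si_t^2\eta_tu^3\Ga(H+\tfrac{5}{2})^{-1}T^H$ term is absorbed into $O^{\mathrm{uc}}(T^{1/2\wedge2H})$, which is \eqref{eq:arg}. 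If in addition $\eta^\si\equiv0$, the $T^{1/2}$ term disappears and the residual error is $O^{\mathrm{uc}}(T^{2H})+O^{\mathrm{uc}}(T^{1/2+H})=O^{\mathrm{uc}}(T^{2H})$, i.e.\ \eqref{eq:arg:imp}. The one step requiring more than bookkeeping is the passage from $\call_{t,T}(u)$ to its argument: one needs $|\call_{t,T}(u)|$ bounded away from zero so that the additive $O^{\mathrm{uc}}(T^{2H})$ remainder in \eqref{eq:cf} carries over to an additive error of the same order in the argument, and one must exclude branch jumps of $\Arg$, which is automatic because the relevant imaginary exponent tends to $0$.
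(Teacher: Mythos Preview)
Your proof is correct and follows the same approach as the paper: specialize \eqref{eq:cf} to the no-jump case, extract the argument, and subtract the conditional mean via Proposition~\ref{prop:condmean}. The paper's own argument is terser---it simply writes down the formula for $\Arg(\call_{t,T}(u))$ without justification---whereas you spell out the passage from the additive expansion $E_1+E_2$ to the argument (boundedness of $|E_1|$ away from zero, absence of branch jumps, and $\Arg(1+z)=O(|z|)$), which is the only nontrivial step here.
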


To make use of the corollary, we pick two short tenors $0<T_1<T_2$ and denote $\tau = T_2/T_1$. Then \eqref{eq:arg} (resp., \eqref{eq:arg:imp}) implies that for any fixed $u>0$,
\[ \frac{A_{t,T_2}(u)}{A_{t,T_1}(u)} = \tau^{H}+O(T_1^{(\frac12-H)\wedge H})\quad \text{(resp., $O(T_1^{H})$)}, \]
from which one can easily estimate $H$ with the same rate. To improve efficiency, we can use multiple $u$'s in the estimation. We can also perform estimation on the basis of certain transformations of $A_{t,T}(u)$ that can help reduce finite-sample biases. More specifically, in a first step, we rewrite \eqref{eq:arg} (resp., \eqref{eq:arg:imp}) as
\[	A_{t,T}(u)   = -  \frac{\si_t^2f'(v_t)\si^v_tu^3}{\Ga(H+\frac{5}{2})}T^H  + C_{1,5}(T)_t u^5+ O^\mathrm{uc}(T^{\frac12\wedge 2H})\quad \text{(resp., $O^\mathrm{uc}(T^{2H})$)},\]
where $C_{1,5}(T)_t=4\sum_{j=1}^\infty \al_j^3\beta_j^2$ in the notation of \eqref{eq:cfXT-2} in the proof of Theorem~\ref{thm:cf}. While $C_{1,5}(T)_t=O(T^{3H})$, including it in the last display improves the performance of our estimators for given $T_1$ and $T_2$ in the simulation study. Also, we find that taking the inverse of $A_{t,T}(u)$ is more preferable, leading us to the expansion
\begin{align*}
	-\frac{u^3}{A_{t,T}(u)}&=\frac{1}{\si_t^2f'(v_t)\si^v_t\Ga(H+\frac52)^{-1}T^H-C_{1,5}(T)_t u^2 }+ O (T^{(\frac12-2H)\wedge 0})\quad \text{(resp., $O(1)$)}\\
	&=\frac{\Ga(H+\frac52)}{\si_t^2f'(v_t)\si^v_t}T^{-H} + \frac{\Ga(H+\frac52)^2C_{1,5}(T)_t}{\si_t^4(f'(v_t)\si^v_t)^2T^{2H}}u^2+ O (T^{(\frac12-2H)\wedge 0})\quad \text{(resp., $O(1)$)}.
\end{align*}
Taking a grid of points $\underline{u} = (u_1,\dots,u_K)$ for $0<u_1<\dots<u_K$ and some integer $K\geq 2$, we then define
\begin{equation}\label{eq:unA} 
	A_{t,T}(\underline{u}) = \frac{ \sum_{i=1}^Ku_i^2\sum_{i=1}^K A_{t,T}(u_i)^{-1}u_i^5-\sum_{i=1}^Ku_i^4\sum_{i=1}^K u_i^3 A_{t,T}(u_i)^{-1}    }{ K\sum_{i=1}^Ku_i^4 - \bigl(\sum_{i=1}^Ku_i^2\bigr )^2  },
\end{equation}
which is the intercept of a regression of $-u_i^3/A_{t,T}(u_i)$ on a constant and $u_i^2$. From the preceding discussion and the mean-value theorem, we obtain 
\begin{equation}\label{eq:ratio} 
	\frac{A_{t,T_1}(\underline u)}{A_{t,T_2}(\underline u)} = \tau^{H}+O(T_1^{(\frac12-H)\wedge H})\quad \text{(resp., $O(T_1^{H})$)}. 
\end{equation}
For a feasible estimator on the basis of the above result, we define $\wh A_{t,T_j}(\underline u)$, for $j=1,2$, in the same way as $A_{t,T_j}(\underline u)$ in \eqref{eq:unA} but with $\wh A_{t,T_j}( u_i)$ instead of $A_{t,T_j}(u_i)$. Then, the estimator of $H$ is given by 
\begin{equation}\label{eq:Hest} 
	\wh H_n = \frac{\log \wh A_{t,T_1}(\underline u) - \log \wh A_{t,T_2}(\underline u)}{\log \tau}.
\end{equation}
The following theorem derives a bound for its rate of convergence.
\begin{theorem}\label{thm:Hest}
	Assume \eqref{eq:nojumps} and the hypotheses of Proposition~\ref{prop:condmean} and Theorem~\ref{thm:err_bound} hold. If $\si_t>0$ and $\si^v_t\neq 0$, then 
	\begin{equation}\label{eq:H-conv} 
		\wh H_n = H+O_{\ov \P}(T_1^{(\frac12-H)\wedge H}  \vee  \Delta^{\frac12}T_1^{-\frac14-H}).
	\end{equation}
	If $\eta^v\equiv0$ in addition, this improves to
	\begin{equation}\label{eq:H-conv-3} 
		\wh H_n = H + O_{\ov \P}(T_1^{H} \vee   \Delta^{\frac12}T_1^{-\frac14-H}).
	\end{equation}
	If $\si_t>0$, $\eta\equiv0$ but $\eta_t^v\neq 0$, we have
	\begin{equation}\label{eq:H-conv-2} 
		\wh H_n = \frac{1}{2} + O_{\ov \P}(T_1 \vee   \Delta^{\frac12}T_1^{-\frac34}).
	\end{equation}
\end{theorem}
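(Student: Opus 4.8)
The plan is to reduce everything to a single estimate on the ratio $\wh A_{t,T_1}(\underline u)/\wh A_{t,T_2}(\underline u)$. The deterministic part $A_{t,T_1}(\underline u)/A_{t,T_2}(\underline u)=\tau^H+O(T_1^{(\frac12-H)\wedge H})$ (resp.\ $O(T_1^H)$ when $\eta^\si\equiv0$, via \eqref{eq:arg:imp}) is already \eqref{eq:ratio}, so the only genuinely new ingredient is the propagation of the option-grid error bound of Theorem~\ref{thm:err_bound} through the construction \eqref{eq:unA}--\eqref{eq:Hest}. Once we show $\wh A_{t,T_1}(\underline u)/\wh A_{t,T_2}(\underline u)=\tau^H(1+O_{\ov\P}(\rho_T))$ with $\rho_T$ the claimed rate, taking logarithms in \eqref{eq:Hest} and using $\log(1+x)=O(x)$ (valid with $\ov\P$-probability tending to one, as $\rho_T\to0$ in the non-vacuous regime) yields $\wh H_n=H+O_{\ov\P}(\rho_T)$, i.e.\ \eqref{eq:H-conv}; \eqref{eq:H-conv-3} is identical with $\rho_T=T_1^H\vee\Delta^{1/2}T_1^{-1/4-H}$.

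The first step is to transfer the bound of Theorem~\ref{thm:err_bound} from $\wh{\mathcal L}_{t,T}$ to $\wh A_{t,T}(u)$ at the finitely many points $u=u_i$. By \eqref{eq:fo_exp} (or Theorem~\ref{thm:cf}), $\mathcal L_{t,T}(u)\to e^{-u^2\si_t^2/2}$, which, since $\si_t>0$, is bounded away from $0$ and from the cut $(-\infty,0]$ uniformly for $u$ in compacts; since $|\wh{\mathcal L}_{t,T}(u)-\mathcal L_{t,T}(u)|=O^{\mathrm{uc}}_{\ov\P}(\sqrt\Delta/T^{1/4})\to0$ by Theorem~\ref{thm:err_bound}, with $\ov\P$-probability tending to one $\wh{\mathcal L}_{t,T}(u)$ stays in a neighborhood of $e^{-u^2\si_t^2/2}$ on which $\Arg$ is Lipschitz, so $|\Arg(\wh{\mathcal L}_{t,T}(u))-\Arg(\mathcal L_{t,T}(u))|=O^{\mathrm{uc}}_{\ov\P}(\sqrt\Delta/T^{1/4})$. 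The drift part contributes only $u\sqrt T\,|\wh{\mathcal M}_{t,T}-\tfrac1T\E_t[x_{t+T}-x_t]|=O^{\mathrm{uc}}_{\ov\P}(\sqrt\Delta\,T^{1/4})$, which is of strictly smaller order. Hence $\wh A_{t,T}(u)=A_{t,T}(u)+O^{\mathrm{uc}}_{\ov\P}(\sqrt\Delta/T^{1/4})$.

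The second step propagates this through $x\mapsto x^{-1}$ and the fixed linear combination \eqref{eq:unA} (throughout, $\asymp_{\ov\P}$ means bounded above and below in $\ov\P$-probability). By Corollary~\ref{cor:arg}, since $\si_t>0$ and $\eta_t\neq0$, $A_{t,T_j}(u_i)\asymp_{\ov\P}T_j^H$ for each $u_i$, and hence, by the expansion of $-u^3/A_{t,T}(u)$ displayed just before \eqref{eq:unA}, the regression intercept satisfies $A_{t,T_j}(\underline u)=\Gamma(H+\tfrac52)\si_t^{-2}\eta_t^{-1}T_j^{-H}+o_{\ov\P}(T_j^{-H})\asymp_{\ov\P}T_j^{-H}$. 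Combining with the first step, $\wh A_{t,T_j}(u_i)^{-1}=A_{t,T_j}(u_i)^{-1}(1+O_{\ov\P}(\Delta^{1/2}T_j^{-1/4-H}))$, and inserting into \eqref{eq:unA} (whose coefficients in $u_i$ are fixed and bounded) gives $\wh A_{t,T_j}(\underline u)=A_{t,T_j}(\underline u)(1+O_{\ov\P}(\Delta^{1/2}T_j^{-1/4-H}))$. Since $T_1\asymp T_2\asymp T$, multiplying by \eqref{eq:ratio} produces $\wh A_{t,T_1}(\underline u)/\wh A_{t,T_2}(\underline u)=\tau^H(1+O_{\ov\P}(T_1^{(\frac12-H)\wedge H}\vee\Delta^{1/2}T_1^{-1/4-H}))$, hence \eqref{eq:H-conv} and (via \eqref{eq:arg:imp}) \eqref{eq:H-conv-3}. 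For \eqref{eq:H-conv-2}, when $\eta\equiv0$ one has $\eta_t=0$, killing the $T^H$-term, and also $C'_{1,0}(T)_t\equiv0$ since $\eta\equiv0$ forces $\eta_{t'\mid t}\equiv0$ in \eqref{eq:proj}; the leading term of $A_{t,T}(u)$ is then the leverage term $-\tfrac12\si_t^2\eta_t^\si u^3\sqrt T$, with (by a refinement of the proof of Theorem~\ref{thm:cf} in this degenerate regime) remainder $O^{\mathrm{uc}}(T^{3/2})$, so $A_{t,T}(\underline u)\asymp_{\ov\P}T^{-1/2}$ and repeating the preceding steps with $H$ replaced by $\tfrac12$ yields \eqref{eq:H-conv-2}.

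Everything above is routine propagation of error bounds; the one genuinely delicate point is the expansion $A_{t,T}(u)=-\tfrac12\si_t^2\eta_t^\si u^3\sqrt T+O^{\mathrm{uc}}(T^{3/2})$ needed for \eqref{eq:H-conv-2}, which is sharper than the $o^{\mathrm{uc}}(T^{2H})$ remainder stated in Theorem~\ref{thm:cf} and forces one to re-examine the proof of that theorem with the rough volatility channel switched off. A minor recurring nuisance is that the normalizers $\si_t,\eta_t,\eta_t^\si$ are random, so the $O_{\ov\P}$/$\asymp_{\ov\P}$ statements and the maps $A\mapsto A^{-1}$ must be read on the event where these stay bounded away from $0$ — which is why $\si_t>0$ and $\eta_t\neq0$ (resp.\ $\eta_t^\si\neq0$) are assumed — and this is dispatched by the usual localization argument.
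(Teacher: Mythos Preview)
Your argument is correct and follows the same route as the paper: propagate the option-grid bound of Theorem~\ref{thm:err_bound} through the construction via the mean-value theorem, then combine with \eqref{eq:ratio}. The paper's proof compresses your two steps into a single line, writing $\wh A_{t,T}(\underline u) = A_{t,T}(\underline u) + O(\Delta^{1/2}T^{-1/4-2H})$ and then applying the mean-value theorem once more to pass to $\log$; your explicit handling of the Lipschitz continuity of $\Arg$ and of $x\mapsto x^{-1}$ unpacks exactly this. For \eqref{eq:H-conv-2} the paper says only ``proved analogously,'' whereas you correctly flag that the stated rate $O_{\ov\P}(T_1)$ requires the sharper remainder $A_{t,T}(u)=-\tfrac12\si_t^2\eta_t^\si u^3\sqrt T+O^{\mathrm{uc}}(T^{3/2})$, which is not a direct consequence of the $o^{\mathrm{uc}}(T^{2H})$ in Theorem~\ref{thm:cf} and indeed forces one to revisit that proof with $\eta\equiv0$ (so that in particular $\eta_{\cdot\mid t}\equiv0$ and $C'_{1,0}(T)_t\equiv0$); your identification of this gap is an improvement over the paper's terse treatment.
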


The last result in (\refeq{eq:H-conv-2}) shows that if $\sigma$ does not contain a rough component, then our estimator converges to $\frac12$, which is the value of the Hurst parameter of the Brownian motion.

\subsection{The case with jumps in $x$}
Next, we shall describe how one can construct an estimator of $H$ in the presence of jumps that  retains the same rate of convergence. For this, we need to take into account the fact that $A_{t,T}(u)$ is affected by   jumps. To remove the effect due to them, we will use $A_{t,T}(u)$ for different $T$'s and we will suitably difference them. This will eliminate the leading term in $A_{t,T}(u)$ due to the jumps but will result in a loss of signal for the value of $H$. This is the price to pay for the robustness of the estimation of $H$ to jumps in $x$. For realistic time-to-maturity values, we expect such a jump-robust estimator of $H$ to perform relatively poor due to the reduced signal about $H$ and the consequent high sensitivity to even small finite-sample biases in the estimation. The situation here is reminiscent of the problem of estimating multiple jump activity indices, see \cite{jacod2016efficient}. Nevertheless, our subsequent analysis shows that the roughness parameter $H$ can be recovered consistently, even in the presence of infinite variation jumps.

To fix ideas, let us pick four (instead of two) short tenors $0<T_1<T_2<T_3<T_4$. For $j=2,3,4$, let  $\tau_j = T_j/T_1$ and 
\beq\begin{split}
	A_{t,T_1,T_j}(u) &= \frac{1}{u^3} \bigl(\Arg(\call_{t,T_1}(u)) -\Arg(\call_{t,T_j}(u\sqrt{\tau_j}))/\tau_j\bigr),\\
	\wh A_{t,T_1,T_j}(u) &= \frac{1}{u^3} \bigl(\Arg(\wh \call_{t,T_1}(u)) -\Arg(\wh\call_{t,T_j}(u\sqrt{\tau_j}))/\tau_j\bigr).
\end{split}
\eeq 
With this differencing approach, the contribution of the drift $\al_t$ and of the jumps through $T\psi_t(u/\sqrt{T})$ to the argument of $\call_{t,T}(u)$ is canceled out identically. In order to further remove $C_{1,j}(u,T)_t$ where $j=2,3,4$ as well as the term $-\frac12 u^3\si_t^2\eta_t^v T^{1/2}$ in \eqref{eq:cf}, we consider second-order differences of the form 
\beq\label{eq:secdiff}\begin{split}
	A_{t,T_1,T_j,T_4}(u) &= A_{t,T_1,T_j}(u)-\frac{\tau_j-1}{\tau_4-1} A_{t,T_1,T_4}(u), \\
	\wh A_{t,T_1,T_j,T_4}(u) &= \wh A_{t,T_1,T_j}(u)-\frac{\tau_j-1}{\tau_4-1} \wh A_{t,T_1,T_4}(u),\qquad j=2,3.
\end{split}
\eeq 
\begin{theorem}\label{thm:secdiff}
	Under the assumptions of Theorem~\ref{thm:cf}, we have 
	\begin{equation}\label{eq:threeT}  
		A_{t,T_1,T_2,T_4}(u) = \frac{\si_t^2f'(v_t)\si^v_t}{\Ga(H+\frac52)} \biggl[(\tau_2^{\frac12+H}-1) - \frac{\tau_2-1}{\tau_4-1} (\tau_4^{\frac12+H}-1)\biggr]T_1^H  + O(T_1^{(H_\delta+1-\frac r2)\wedge 2H}).
	\end{equation}
	Moreover, if the conditions of Theorem~\ref{thm:err_bound} are satisfied and $\si_t>0$ and $\si^v_t\neq 0$, then we have the following result for any $u>0$: if $\tau_2$, $\tau_3$ and $\tau_4$ are such that
	the mapping $$ F(H)=\biggl[(\tau_2^{\frac12+H}-1) - \frac{\tau_2-1}{\tau_4-1} (\tau_4^{\frac12+H}-1) \biggr] \Big/ \biggl[(\tau_3^{\frac12+H}-1) - \frac{\tau_3-1}{\tau_4-1} (\tau_4^{\frac12+H}-1) \biggl]$$
	has an inverse $F^{-1}$ and $\frac{d}{dH} F(H)\neq0$, then
	the estimator
	\begin{equation}\label{eq:Hn'} 
		\wh H'_n= F^{-1}\biggl(\frac{\wh A_{t,T_1,T_3,T_4}(u)}{\wh A_{t,T_1,T_2,T_4}(u)}\biggr)
	\end{equation}
	satisfies
	\begin{equation}\label{eq:H-conv-4} 
		\wh H'_n= H+O_{\ov\P}(T_1^{(H_\delta-H+1-\frac r2)\wedge H} \vee \Delta^{\frac12}T_1^{-\frac14-H}).
	\end{equation}
\end{theorem}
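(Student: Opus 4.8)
The plan is to read an expansion of $\Arg(\call_{t,T}(u))$ off \eqref{eq:cf}, insert it into the two levels of differencing defining $A_{t,T_1,T_j,T_4}(u)$, separate the contributions proportional to $\tau_j-1$ (which are annihilated by the second-order difference) from those that survive, and finally transport the option-approximation error of Theorem~\ref{thm:err_bound} through $\Arg$ and through $F^{-1}$.

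\textit{Step 1 (the argument).} Write \eqref{eq:cf} as $\call_{t,T}(u)=e^{Z_{t,T}(u)}\bigl(1+e^{-Z_{t,T}(u)}R_{t,T}(u)\bigr)$, where $Z_{t,T}(u)$ is the exponent in the first two lines of \eqref{eq:cf} and $R_{t,T}(u)$ is the remainder. As in the proof of Theorem~\ref{thm:cf}, $T\Re\psi_t(u/\sqrt T)=O(T^{(1-q/2)\wedge(1-r/2)})$ and $\int_0^1\si^2_{t+sT\mid t}\,ds=\si_t^2+O(T^H)$, so $\lvert e^{-Z_{t,T}(u)}\rvert$ stays bounded and, more precisely, $e^{-Z_{t,T}(u)}e^{-u^2\si_t^2/2+T\varphi_t(u/\sqrt T)}=1+O(T^H)$; since $e^{-Z_{t,T}(u)}R_{t,T}(u)\to0$ uniformly for $u$ in compacts, $\Arg\call_{t,T}(u)=\Im Z_{t,T}(u)+\Im\bigl(e^{-Z_{t,T}(u)}R_{t,T}(u)\bigr)+O^{\mathrm{uc}}\bigl(\lvert e^{-Z_{t,T}(u)}R_{t,T}(u)\rvert^2\bigr)$. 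Feeding in the orders $C'_{1,0}(T)_tT^{2H}=O(T^{2H})$, $C_2(u)_tT^{2H}=O(T^{2H})$, $C_{1,1}(u,T)_t,C'_{1,1}(u,T)_t=O(T^{H_\delta+1-r/2})$ and $\sum_{i=2}^4C_{1,i}(u,T)_t=O(T^{3/2-r/2})$ recorded after Theorem~\ref{thm:cf} (or immediate from \eqref{eq:C1i}--\eqref{eq:C2} and Assumption~\ref{ass:B}), together with $(H_\delta+1-r/2)\wedge2H\ge H$ from \eqref{eq:H}, this collapses to
\[\Arg\call_{t,T}(u)=u\al_t\sqrt T+T\Im\psi_t(\tfrac u{\sqrt T})-\tfrac12u^3\si_t^2\eta_t^\si\sqrt T-\tfrac{u^3\si_t^2\eta_t}{\Ga(H+\frac52)}T^H+\Im\Bigl(\textstyle\sum_{i=1}^4C_{1,i}(u,T)_t+C'_{1,1}(u,T)_t\Bigr)+O^{\mathrm{uc}}(T^{2H}),\]
uniformly on compacts; for $T$ small the right side lies in $(-\pi,\pi)$, so no branch ambiguity enters the differences below.

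\textit{Step 2 (the two differences).} Insert this into $A_{t,T_1,T_j}(u)=u^{-3}\bigl(\Arg\call_{t,T_1}(u)-\Arg\call_{t,T_j}(u\sqrt{\tau_j})/\tau_j\bigr)$. Each building block transforms simply under $(u,T)\mapsto(u\sqrt{\tau_j},T_j)$ followed by multiplication by $\tau_j^{-1}$: $u\al_t\sqrt T$ and $T\Im\psi_t(u/\sqrt T)$ are invariant and hence cancel already in the first difference; $-\tfrac12u^3\si_t^2\eta_t^\si\sqrt T$ and $C_{1,i}(u,T)_t$ ($i=2,3,4$, since each $\chi^{(\cdot)}_t$ in \eqref{eq:C1i} enters only through the value at $u/\sqrt T$) are multiplied by $\tau_j$; $-\tfrac{u^3\si_t^2\eta_t}{\Ga(H+\frac52)}T^H$ by $\tau_j^{1/2+H}$; $C_{1,1}(u,T)_t$ by $\tau_j^{H_\delta+1/2}$; while $C'_{1,1}(u,T)_t$ and the $O^{\mathrm{uc}}(T^{2H})$ residue remain of orders $O(T_1^{H_\delta+1-r/2})$ and $O(T_1^{2H})$. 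Therefore
\[A_{t,T_1,T_j}(u)=\tfrac12\si_t^2\eta_t^\si(\tau_j-1)\sqrt{T_1}+\tfrac{\si_t^2\eta_t}{\Ga(H+\frac52)}(\tau_j^{1/2+H}-1)T_1^H-(\tau_j-1)u^{-3}\Im\Bigl(\textstyle\sum_{i=2}^4C_{1,i}(u,T_1)_t\Bigr)+O^{\mathrm{uc}}\bigl(T_1^{(H_\delta+1-r/2)\wedge2H}\bigr),\]
the remainder absorbing the $C_{1,1}$-, $C'_{1,1}$- and $O(T^{2H})$-terms and the $O(T^H)$ cross terms from Step~1 (using \eqref{eq:H} together with $H_\delta<\tfrac12$, $r<2$). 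Forming the second-order difference \eqref{eq:secdiff}, the two pieces carrying $\tau_j-1$ cancel against $\tfrac{\tau_j-1}{\tau_4-1}$ times their $j=4$ counterparts, and the $T_1^H$-term reduces to the bracket in \eqref{eq:threeT}; this is \eqref{eq:threeT}, and likewise with $j=3$.

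\textit{Step 3 (the feasible estimator, and the obstacle).} Since $\si_t>0$, $\lvert\call_{t,T}(v)\rvert\ge e^{\Re Z_{t,T}(v)}-\lvert R_{t,T}(v)\rvert\ge\tfrac12e^{-v^2\si_t^2/2}$ for small $T$, uniformly on compact $v$-sets, so $z\mapsto\Arg z$ is Lipschitz near $\call_{t,T}(v)$; with Theorem~\ref{thm:err_bound} and $\Delta\asymp T^\alpha$, $\alpha>\tfrac12$, this gives $\lvert\Arg\wh\call_{t,T_\ell}(v)-\Arg\call_{t,T_\ell}(v)\rvert=O^{\mathrm{uc}}_{\ov\P}(\Delta^{1/2}T^{-1/4})$ on an event of $\ov\P$-probability tending to one. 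Writing $B_j(H)$ for the bracket in \eqref{eq:threeT}, we get $\wh A_{t,T_1,T_j,T_4}(u)=\tfrac{\si_t^2\eta_t}{\Ga(H+\frac52)}B_j(H)T_1^H+O_{\ov\P}\bigl(T_1^{(H_\delta+1-r/2)\wedge2H}\vee\Delta^{1/2}T_1^{-1/4}\bigr)$ for $j=2,3$; since $\si_t>0$, $\eta_t\ne0$, $B_2(H)\ne0$ (as $F$ is well defined) and $(H_\delta-H+1-r/2)\wedge H>0$ by \eqref{eq:H}, the leading term of $\wh A_{t,T_1,T_2,T_4}(u)$ dominates its remainder, so on an event of probability tending to one
\[\frac{\wh A_{t,T_1,T_3,T_4}(u)}{\wh A_{t,T_1,T_2,T_4}(u)}=F(H)+O_{\ov\P}\bigl(T_1^{(H_\delta-H+1-r/2)\wedge H}\vee\Delta^{1/2}T_1^{-1/4-H}\bigr).\]
As $F$ is invertible with $F'(H)\ne0$, $F^{-1}$ is $C^1$ near $F(H)$ by the inverse function theorem, and \eqref{eq:H-conv-4} follows from the mean value theorem applied to $F^{-1}$ (on the negligible event where the ratio leaves the range of $F$ the bound is trivial, $\wh H'_n-H$ being bounded). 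The crux of the proof is the accounting in Step~2: one must pin down exactly which terms of the $\Arg$-expansion are $(\tau_j-1)$ times a $j$-independent factor — hence erased by the second-order difference — and verify that every other surviving term is $O(T_1^{(H_\delta+1-r/2)\wedge2H})$; in particular the $O(\sqrt{T_1})$ piece coming from $\eta^\si$ must be removed exactly, because $\sqrt{T_1}$ is in general larger than $T_1^{(H_\delta+1-r/2)\wedge2H}$, which is precisely what forces the passage from first- to second-order differences.
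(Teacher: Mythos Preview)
Your proof is correct and follows essentially the same route as the paper: extract $\Arg\call_{t,T}(u)$ from \eqref{eq:cf}, track which pieces scale as $\tau_j-1$ under $(u,T)\mapsto(u\sqrt{\tau_j},T_j)$ (hence are annihilated by the second-order difference), verify the surviving rough term gives the bracket in \eqref{eq:threeT} while $C_{1,1}$, $C'_{1,1}$ and the $T^{2H}$-remainder land in $O(T_1^{(H_\delta+1-r/2)\wedge 2H})$, and then propagate the option error of Theorem~\ref{thm:err_bound} through $\Arg$ and $F^{-1}$ via the mean-value theorem. The paper's own proof is considerably terser---it simply asserts the cancellations and quotes the $\chi^{(1)}$-bounds---so your Step~2 accounting and the explicit scaling factors $\tau_j$, $\tau_j^{1/2+H}$, $\tau_j^{H_\delta+1/2}$ fill in exactly what the paper leaves to the reader.
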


As seen from the result of the theorem, $\wh H'_n$ can achieve the same rate of convergence as $\wh H_n$ even when jumps are present in $x$. The loss of information due to the robustification for jumps is easy to illustrate. Suppose, $\tau_2 =2$, $\tau_3 = 3$ and $\tau_4 = 4$ as for a typical set of available option maturities. In this case, $F(0) = 1.2370$ and $\textrm{lim}_{H\rightarrow 1/2}F(H) = 1.1525$. This implies a relatively small variation in the function $F(H)$ as $H$ varies in $(0,\frac12]$. By contrast, the counterpart of $F(H)$ for the estimator $\wh H_n$ is $\tau_2^{H}$ and this quantity varies from $1$ for $H=0$ to $1.4142$ for $H=\frac12$, which is a much larger spread.    

We note that the estimator $\wh H'_n$ should be used only if one knows that the volatility has a rough component, that is,  $\si^v_t \neq 0$. If this is not the case, one can first test for rough volatility (in the presence of jumps) using the ratio $\wh A_{t,T_1,T_2}(u)/\wh A_{t,T_1,T_3}(u)$. We implement such a test in our numerical experiments.

\section{Numerical experiments}\label{sec:mc}

In this section we will conduct several numerical experiments to assess the accuracy of the higher-order asymptotic expansion and some of its applications. 

\subsection{Setup} 
We will use the following rough volatility model for the asset price $X$:
\begin{equation}\label{eq:mc_rv}
	\frac{dX_t}{X_t} = \sqrt{V_t}dW_t,~~V_t = V_0 + \frac{\nu}{\Gamma (H + \frac{1}{2} )}\int_0^t(t-s)^{H-\frac{1}{2}}\sqrt{V_s}(\rho dW_s + \sqrt{1-\rho^2}d\widetilde{W}_s),
\end{equation}
where $W$ and $\widetilde{W}$ are independent Brownian motions, $V_0$ is an $\mathcal{F}_0$-measurable positive random variable, and $\nu>0$ and $\rho\in[-1,1]$ are some constants. This is the model considered by \cite{el2019roughening} and can be viewed as the limit of the rough Heston model of \cite{el2018microstructural} with zero mean-reversion. We generate option prices from the model using the codes from the paper of \cite{romer2022empirical}. 

We note that in this model, there are several restrictions to the general specification of $x = \log X$ in (\ref{eq:x}) and (\ref{eq:sigma})--(\ref{eq:delta}). First, there are no jumps, so that $\gamma(s,z) = \delta(s,z) = \ga_v(s,z) = \delta_\sigma(s,z) = 0$. Second, the variance only has  a rough component and this implies $\eta_s^v = \widetilde{\eta}_s^v = \overline{\eta}_s^v = 0$. These restrictions allow us to strengthen the asymptotic expansion result of Theorem~\ref{thm:cf} to the following one for  models with the above-mentioned restrictions:
\beq\label{eq:cf_rough} 
\call_{0,T}(u)=\exp\biggl(iu\al_0\sqrt{T}-\frac12 u^2\si^2_{0} -iu^3 \frac{\si_0^2f'(v_0)\si^v_0}{\Ga(H+\frac{5}{2})}T^H   \biggr)+C_2(u)_0T^{2H} +o^{\mathrm{uc}}(T^{2H}).
\eeq 

We set the parameters of the model as follows. The correlation parameter is set to $\rho = -0.9$, which allows for strong negative dependence between price and volatility innovations (also known as leverage effect). We put the volatility of volatility parameter to $\nu = 0.5$. This generates fat-tailed return distributions and consequently numbers of short-tenor options (for the empirically calibrated option observation scheme explained below) that are roughly consistent with those observed for market index options. Finally, we experiment with four values of the Hurst parameter: $H=0.1$, $0.25$, $0.4$ and $0.49$. 

In addition to the above rough volatility model, we also consider the following model with jumps in which volatility follows the standard Heston square-root diffusion dynamics:
\begin{equation}\label{eq:mc_jd}
	\begin{split}
		\frac{dX_t}{X_t} &= \sqrt{V_t}dW_t+\int_{\mathbb{R}}(e^x-1)\widetilde{\mu}(dt,dx),\\ V_t &= V_0 + \int_0^t(t-s)\kappa(\theta-V_s)ds+\int_0^t 0.5\sqrt{V_s}(\rho dW_s + \sqrt{1-\rho^2}d\widetilde{W}_s),
	\end{split}
\end{equation}  
and where $\mu$ is an integer-valued random measure with compensator $dt\otimes \nu_t(dx)$, for $\nu_t$ given by
\begin{equation}
	\nu_t(dx) = V_t\times\nu^{vg}(x)dx,~\nu^{vg}(x) = c_{-}\frac{e^{-\lambda_-|x|}}{|x|}1_{\{x<0\}} + c_{+}\frac{e^{-\lambda_+|x|}}{|x|}1_{\{x>0\}}.
\end{equation} 
The jumps in $X$ is a time-changed variance gamma process, see \cite{CGMY03}, with the time-change being the integrated diffusive variance. This specification belongs to the exponentially-affine class of models of \cite{DPS00} that has been commonly used in empirical parametric option pricing. As for the rough volatility specification, we set $\rho=-0.9$ and $\nu = 0.5$. We further set the mean of volatility parameter to $\theta = 0.03$ and the mean-reversion parameter to $\kappa = 8$. The latter corresponds to half-life of a shock to volatility of one month. 

Turning next to the jump specification, we set $\lambda_- = 30$ and $\lambda_+ = 100$. These values imply decay in deep out-of-the-money short-dated option prices similar to those observed for S\&P 500 index options. We then set the scale parameters $c_{\pm}$ to  
\begin{equation}  
	c_- = 0.3\times \lambda_-^{2}~\textrm{and}~c_+ = 0.2\times \lambda_+^{2}.
\end{equation}
For this choice of $c_{\pm}$, we have that $\int_{x<0}x^2\nu_t(dx) = 0.3\times V_t$ and $\int_{x>0}x^2\nu_t(dx) = 0.2\times V_t$. Therefore, jump variance is half of the diffusive variance and negative jumps have higher variation than positive jumps. 

\subsection{First-order versus higher-order asymptotic expansion}

We start with assessing the improvement from the developed higher-order asymptotic expansion over the first-order one given in (\ref{eq:fo_exp}). The latter implies the following estimate of spot volatility from the characteristic function:
\begin{equation}
	V_{t,T}(u) = -\frac{2}{u^2}\log|\mathcal{L}_{t,T}(u)|,
\end{equation}
for any finite $u>0$ and small $T>0$. Higher values of $u$ are necessary in order to make $V_{t,T}(u)$ robust to jumps in $x$. Indeed, the limit of $V_{t,T}(u)$ as both $T\downarrow 0$ and $u\downarrow 0$ is the predictable spot quadratic variation of $x$. Given our higher-order asymptotic expansion, we know that $V_{t,T}(u)$ contains a bias due to the time variation in volatility, i.e., the term $C_2(u)_tT^{2H}$ in (\ref{eq:cf}). This bias can be removed by using characteristic functions at the same time $t$ and for two different small values of $T$, say, $T_1<T_2$. More specifically, based on our higher-order asymptotic expansion, we have the following improved estimator of spot volatility that does not contain the bias  $C_2(u)_tT^{2H}$:
\begin{equation}
	V_{t,T_1,T_2}(u,H) = V_{t,T_1}(u) - T_1^{2H}\frac{V_{t,T_2}(u) -V_{t,T_1}(u) }{T_2^{2H}-T_1^{2H}}.
\end{equation} 
We can, therefore, assess the improvement offered by the higher-order expansion by simply checking how much closer is $V_{t,T_1,T_2}(u,H)$ to $\int_0^1\sigma_{t+sT|t}^2ds$ than $V_{t,T_1}(u)$. We will use the rough volatility specification in (\ref{eq:mc_rv}) for this. We set $t=0$ so that $\int_0^1\sigma_{t+sT|t}^2ds = V_0$ and we consider $V_0 = 0.03$. We will further set $T_1 = 1/252$ and $T_2 = 2/252$, which correspond to horizons of $1$ and $2$ business days, with a unit of time being one calendar year. 

The results are reported in Figure~\ref{fig:v_comp}. On each plot, the maximum value of $u$ is set to $\inf\{u\geq 0: |\mathcal{L}_{0,T_1}(u)|\leq 0.25\}$. The different range of values of $u$ for the different cases of $H$, particularly for $H=0.1$,  show indirectly the nontrivial impact   rough volatility dynamics have on $\mathcal{L}_{0,T}(u)$, even for a value of $T$ as small as one day. The reported results in Figure~\ref{fig:v_comp} show that in general $V_{t,T_1,T_2}(u,H)$ provides nontrivial reduction in the downward bias in $V_{0,T_1}(u)$ due to the (rough) volatility dynamics. These improvements tend to be bigger for larger values of $u$, which is consistent with the fact that the term $C_2(u)$ increases with $u$. We stress, however, that our asymptotics is for $u$ fixed. For larger values of $u$, even $V_{t,T_1,T_2}(u,H)$ starts having a nontrivial bias. This effect is stronger for lower values of $H$. In fact, for the lowest considered value of $H=0.1$,  $V_{t,T_1,T_2}(u,H)$ works well only for relatively small levels of $u$.

\begin{figure}[htbp]
	\begin{center}
		\includegraphics[width=\textwidth]{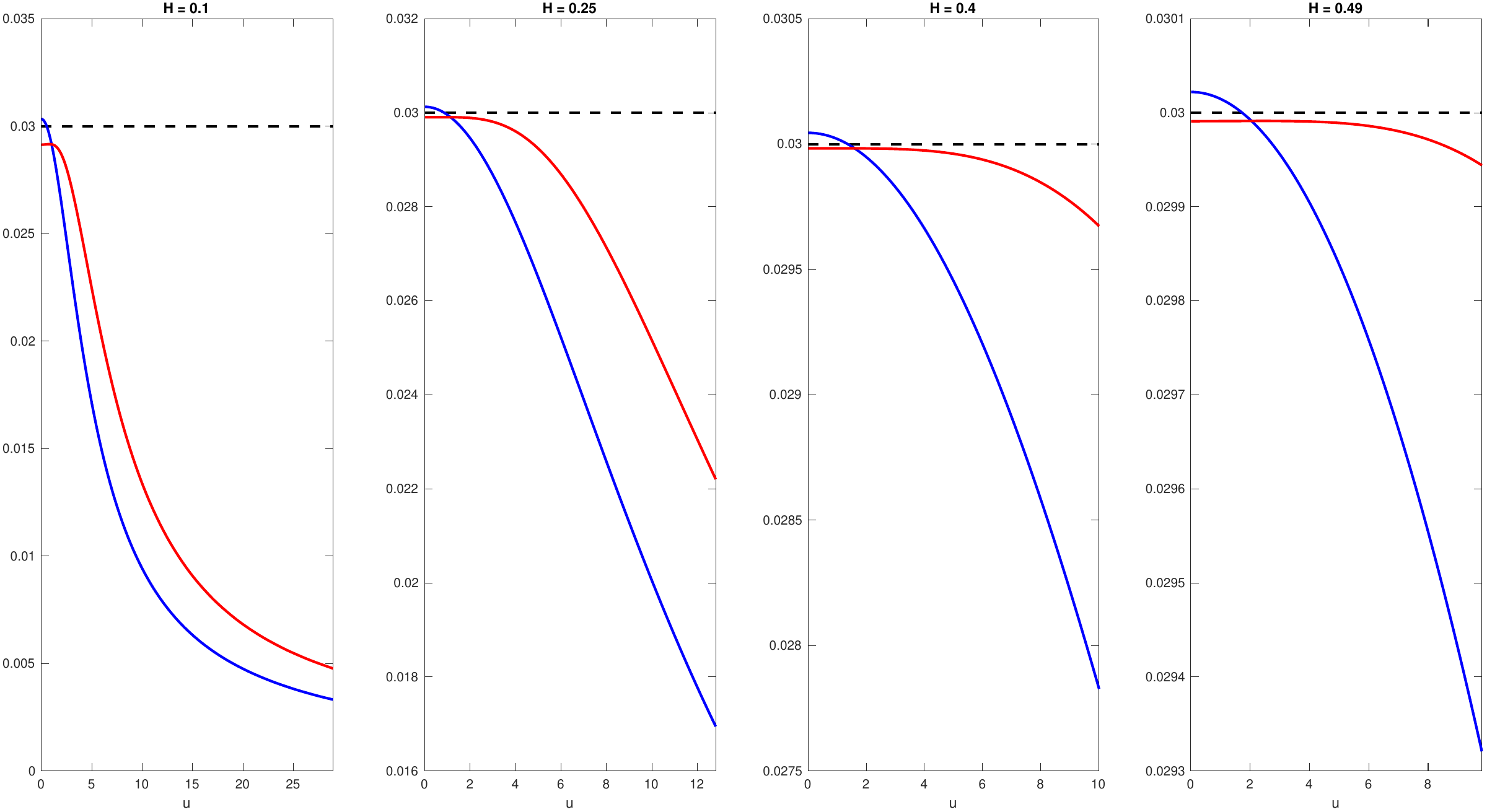}
	\end{center}
	\caption{$V_{0,T_1}(u)$ and $V_{0,T_1,T_2}(u,H)$. Dashed horizontal line is at $V_0$, blue line is $V_{0,T_1}(u)$ and red line is $V_{0,T_1,T_2}(u,H)$. Maximum value of $u$ on each plot is given by $\inf\{u\geq 0: |\mathcal{L}_{0,T_1}(u)|\leq 0.25\}$.}
	\label{fig:v_comp}
\end{figure}

\subsection{Monte Carlo study}
We next conduct a Monte Carlo evaluation of inference procedures from short-dated options related to the roughness of volatility. We start with describing the option observation scheme. We consider three short tenors: $T_1 = 1/252$, $T_2 = 2/252$ and $T_3 = 3/252$, corresponding to $1$, $2$ and $3$ business days to expiration. The strike grid of observed options is equidistant at increments of $5$. Starting from the current stock price, we keep adding lower and higher strikes until the true option prices fall in value below $0.075$. This setup mimics that of available options written on the S\&P 500 index. Finally, option prices are observed with error that is proportional to the true option price. More specifically, we have
\begin{equation}
	\widehat{O}_{0,T_\ell}(k_{\ell,j}) = (1+0.025\times \epsilon_{t,T_\ell}(k_{\ell,j}))O_{0,T_\ell}(k_{\ell,j}),\quad \ell=1,2,
\end{equation}    
where $\{\epsilon_{t,T_1}(k_{1,j}\}_{j\geq 1}$ and $\{\epsilon_{t,T_2}(k_{2,j}\}_{j\geq 1}$ are two independent i.i.d. sequences of standard normal random variables. We note that, since the option price varies a lot across strikes for a given tenor (in fact, the option price asymptotic rate of decay differs depending on how far the strike is from the current price), the above option observation model implies nontrivial heteroskedasticity in the option observation errors. 

In all considered cases, we set the starting value of the price at $X_0 = 3500$. For the starting variance, we consider three values: $V_0 = 0.015$ (low volatility), $V_0 = 0.03$ (medium volatility) and $V_0 = 0.06$ (high volatility). These numbers are calibrated to the level of volatility of the S\&P 500 market index.

\subsubsection{Inference assuming no jumps in $x$}
We start with evaluating the performance of the estimator $\widehat{H}_n$ of the degree of roughness of volatility that assumes knowledge that jumps are not present in $x$. For implementing this estimator, we need to set the grid of points of $u$ on which we evaluate $\widehat{A}_{0,T_1}(u)$ and $\widehat{A}_{0,T_2}(u)$. We determine this set in an adaptive way. More specifically, we set 
\begin{equation}\begin{split}
		\widehat{u}_1 &= \inf\{u\geq 0: |\widehat{\mathcal{L}}_{0,T_2}(u)|<0.9\},\quad \widehat{u}_K = \inf\{u\geq 0: |\widehat{\mathcal{L}}_{0,T_2}(u)|<0.75\},\\
		\widehat{u} &= \inf\{u\geq 0: |\widehat{A}_{0,T_2}(u)|>\pi/2\},\end{split}
\end{equation}
and from here $\widehat{\underline{u}} = \{\widehat{u}_1\wedge\widehat{u}:0.01:\widehat{u}_K\wedge \widehat{u}\}$. Note that the probability that $\widehat{\underline{u}}$ contains less than two points converges to zero. This also never happens in our simulations and for this reason we do not consider the situation when this is not the case as this is irrelevant from an asymptotic point of view.

The results from the Monte Carlo are reported in Table~\ref{tb:mc_results_nj}. Overall, they indicate good finite sample performance of the estimator across the different volatility regimes and the different values of $H$. In most of the cases, the bias in the estimator is small. The bias is only slightly bigger in relative terms in the case $H=0.1$ when the starting value of volatility is $V_0 = 0.015$ or $V_0 = 0.030$. This is likely due to the fact that this very rough case produces higher prices of deep out-of-the-money options and we keep only options which exceed in value $0.075$. Hence the error in $\widehat{\mathcal{L}}_{t,T}(u)$ due to the truncation of the limits of integration of the integral in (\ref{eq:spanning}) is bigger in this situation. For all other values of $H$, the estimator is not very sensitive to the different starting values of volatility. We note further that in all considered cases the interquartile range is very small, which indicates good precision. 
\begin{table}[htbp!]
	\caption{Monte Carlo results: Performance of $\widehat{H}_n$ based on $5{,}000$ Monte Carlo replications}
	\begin{tabular*}{0.6\textwidth}{@{\extracolsep{\fill}} cccrrr}
		\hline
		{\bf Case} & & & \multicolumn{3}{c}{Empirical Quantiles of $\widehat{H}_n$}\\
		& & & $25$-th~~ & $50$-th~~ & $75$-th~~\\[1.0ex]
		& & & \multicolumn{3}{c}{Low Volatility}\\[1.0ex]
		$V_0 = 0.015$, $H=0.10$  & & & $0.0327$  & $0.0660$ & $0.0968$\\
		$V_0 = 0.015$, $H=0.25$  & & & $0.2690$  & $0.2775$ & $0.2859$\\
		$V_0 = 0.015$, $H=0.40$  & & & $0.3970$  & $0.4140$ & $0.4312$\\
		$V_0 = 0.015$, $H=0.49$  & & & $0.4551$  & $0.4830$ & $0.5110$\\
		$V_0 = 0.015$, not rough  & & & $-0.6119$  & $-0.5689$ & $-0.5311$\\[1.0ex] 
		& & & \multicolumn{3}{c}{Medium Volatility}\\[1.0ex]
		$V_0 = 0.030$, $H=0.10$  & & & $0.0488$  & $0.0579$ & $0.0674$\\
		$V_0 = 0.030$, $H=0.25$  & & & $0.2593$  & $0.2683$ & $0.2773$\\
		$V_0 = 0.030$, $H=0.40$  & & & $0.3927$  & $0.4125$ & $0.4325$\\
		$V_0 = 0.030$, $H=0.49$  & & & $0.4434$  & $0.4777$ & $0.5108$\\
		$V_0 = 0.030$, not rough  & & & $-0.3548$  & $-0.3438$ & $-0.3330$\\[1.0ex] 
		& & & \multicolumn{3}{c}{High Volatility}\\[1.0ex]
		$V_0 = 0.060$, $H=0.10$  & & & $0.1046$  & $0.1103$ & $0.1155$\\
		$V_0 = 0.060$, $H=0.25$  & & & $0.2516$  & $0.2617$ & $0.2715$\\
		$V_0 = 0.060$, $H=0.40$  & & & $0.3864$  & $0.4097$ & $0.4321$\\
		$V_0 = 0.060$, $H=0.49$  & & & $0.4414$  & $0.4800$ & $0.5190$\\
		$V_0 = 0.060$, not rough  & & & $-0.2577$  & $-0.2484$ & $-0.2392$\\
		\hline
	\end{tabular*}
	\label{tb:mc_results_nj}
\end{table}

Finally, we report also the performance of the estimator $\widehat{H}_n$ for the model in (\ref{eq:mc_jd}) in which volatility is not rough and $x$ contains jumps. As seen from the reported results in  Table~\ref{tb:mc_results_nj}, the presence of jumps affects in a nontrivial way the performance of $\widehat{H}_n$. In absence of jumps, this estimator should converge to $0.5$. Instead, in  the model in (\ref{eq:mc_jd}) with jumps, this estimator has negative median for all three starting values of volatility. This, of course, is in line with our theoretical results and illustrates the difficulty in disentangling rough volatility from jumps.

\subsubsection{Inference allowing for jumps in $x$}

We now look at the situation when there can be jumps in $x$. We only analyze the behavior of a statistic that allows us to assess whether volatility is rough in presence of jumps in $x$. Recall that this is a necessary preliminary step for a jump-robust estimator of $H$.  For this, we will compare $\widehat{A}_{0,T_1,T_2}(u)$ and $\widehat{A}_{0,T_1,T_3}(u)$. Similarly to the case of no jumps in $x$, $u$ will be determined in an adaptive way. More specifically, we now set 
\begin{equation}\begin{split}
		\widehat{u}_1 &= \inf\{u\geq 0: |\widehat{\mathcal{L}}_{0,T_3}(u)|<0.75\},\\
		 \widehat{u}_K& = \inf\{u\geq 0: |\widehat{\mathcal{L}}_{0,T_3}(u)|<0.25\},\\
		\widehat{u} &= \inf\{u\geq 0: |\widehat{A}_{0,T_3}(u)|>\pi/2\},\\
		\overline{u} &= \sup\{u\geq \widehat{u}_1: \textrm{sign}(\widehat{A}_{0,T_1,T_2}(u)) = \textrm{sign}(\widehat{A}_{0,T_1,T_2}(\widehat{u}_1))  \\&\quad~~~~~~~~~~~~~~~~~~~~~~~~~~~=\textrm{sign}(\widehat{A}_{0,T_1,T_3}(u)) = \textrm{sign}(\widehat{A}_{0,T_1,T_3}(\widehat{u}_1))\},
	\end{split}
\end{equation}
and from here
\begin{equation}\widehat{\underline{u}} = \{\widehat{u}_1\wedge\widehat{u}\wedge\overline{u}:0.01:\widehat{u}_K\wedge \widehat{u}\wedge\overline{u}\}.
\end{equation} Compared to before, we determine $\widehat{u}_1$, $\widehat{u}_K$ and $\widehat{u}$ from the third tenor and we have added $\overline{u}$ in determining $\widehat{\underline{u}}$. The condition involving $\overline{u}$ is a requirement that  we evaluate $\widehat{A}_{0,T_1,T_2}(u)$ and $\widehat{A}_{0,T_1,T_3}(u)$ at points where  they have the same sign. This should be always the case asymptotically given our expansion result. 

Our statistic that discriminates between the case of rough volatility and no rough volatility is given by
\begin{equation}
	\widehat{T}_n = \log\left(\frac{1-\tau_3}{1-\tau_2}\frac{\sum_{u\in \widehat{\underline{u}}} \widehat{A}_{0,T_1,T_2}(u)}{\sum_{u\in \widehat{\underline{u}}} \widehat{A}_{0,T_1,T_3}(u)}\right).
\end{equation}
Given our expansion results of the characteristic function, we know that 
\begin{equation}
	\widehat{T}_n~\stackrel{\P}{\longrightarrow}~ \begin{cases}  0 & \textrm{if volatility is not rough},\\ C=C(H)>0 & \textrm{if volatility is rough}.\end{cases}
\end{equation}

We assess the finite sample performance of $\widehat{T}_n$ in a Monte Carlo. The results are presented in Table~\ref{tb:mc_results_wj}. We can draw several conclusions from them. First, when volatility is not rough, the statistic $\widehat{T}_n$ performs well, with a relatively small bias only when the starting value of volatility is lowest. In the case of the rough volatility models, the median values of $\widehat{T}_n$ are all strictly positive. This allows for the separation of models with and without rough component of volatility. That said, we do note non-monotonicity in the medians of  $\widehat{T}_n$ as $H$ increases from low to a high value. This is mostly due to the behavior of $\widehat{T}_n$ for the two lowest considered values of $H$. This suggests that for these low values of $H$, higher-order terms in $\widehat{A}_{0,T_1,T_2}(u)$ and $\widehat{A}_{0,T_1,T_3}(u)$, not considered in our expansion, play nontrivial role in the finite sample behavior of these statistics. This is consistent with our expansion results because the size of the approximation error in them increases for lower values of $H$. Intuitively, the differencing in $\widehat{A}_{0,T_1,T_2}(u)$ and $\widehat{A}_{0,T_1,T_3}(u)$ reduces the signal in them for the value of $H$ and makes higher-order terms more important in relative sense. This seems unavoidable if one wants a fully nonparametric jump-robust study of volatility roughness.

\begin{table}[htbp!]
	\caption{Monte Carlo results: Performance of $\widehat{T}_n$ based on $5{,}000$ Monte Carlo replications}
	\begin{tabular*}{0.6\textwidth}{@{\extracolsep{\fill}} cccrrr}
		\hline
		{\bf Case} & & & \multicolumn{3}{c}{Empirical Quantiles of $\widehat{T}_n$}\\
		& & & $25$-th~~ & $50$-th~~ & $75$-th~~\\[1.0ex]
		& & & \multicolumn{3}{c}{Low Volatility}\\[1.0ex]
		$V_0 = 0.015$, $H=0.10$  & & & $0.2298$  & $0.2786$ & $0.3232$\\
		$V_0 = 0.015$, $H=0.25$  & & & $0.5067$  & $0.5568$ & $0.6048$\\
		$V_0 = 0.015$, $H=0.40$  & & & $0.1353$  & $0.1726$ & $0.2098$\\
		$V_0 = 0.015$, $H=0.49$  & & & $0.0203$  & $0.0642$ & $0.1070$\\
		$V_0 = 0.015$, not rough  & & & $-0.0227$  & $0.0350$ & $0.0877$\\[1.0ex] 
		& & & \multicolumn{3}{c}{Medium Volatility}\\[1.0ex]
		$V_0 = 0.030$, $H=0.10$  & & & $0.1281$  & $0.1887$ & $0.2487$\\
		$V_0 = 0.030$, $H=0.25$  & & & $0.3493$  & $0.3958$ & $0.4374$\\
		$V_0 = 0.030$, $H=0.40$  & & & $0.0734$  & $0.1082$ & $0.1433$\\
		$V_0 = 0.030$, $H=0.49$  & & & $-0.0153$  & $0.0357$ & $0.0814$\\
		$V_0 = 0.030$, not rough  & & & $-0.0661$  & $-0.0029$ & $0.0612$\\[1.0ex] 
		& & & \multicolumn{3}{c}{High Volatility}\\[1.0ex]
		$V_0 = 0.060$, $H=0.10$  & & & $0.5054$  & $0.5556$ & $0.6033$\\
		$V_0 = 0.060$, $H=0.25$  & & & $0.1942$  & $0.2256$ & $0.2548$\\
		$V_0 = 0.060$, $H=0.40$  & & & $0.0303$  & $0.0662$ & $0.1041$\\
		$V_0 = 0.060$, $H=0.49$  & & & $-0.0394$  & $0.0169$ & $0.0686$\\
		$V_0 = 0.060$, not rough  & & & $-0.0702$  & $0.0028$ & $0.0759$\\
		\hline
	\end{tabular*}
	\label{tb:mc_results_wj}
\end{table}

\section{Proof of Theorem~\ref{thm:cf} and Proposition~\ref{prop:condmean}}\label{sec:proof:cf}
In a first step, we consider a local expansion of the process $x_{t+T}$ around $T=0$ that only depends on \emph{a priori} estimates. Let $\la_t = f'(v_t)h'(\eta_t)$ and $\wt \la_t = f'(v_t)\wt h'(\wt\eta_t)$ and define
\beq\label{eq:xcheck}
\begin{split}
	\check x_{t+T} &= x_t + \al_t T+\si_t (W_{t+T}- W_t) +\sum_{ \ast\, \in \{\emptyset, \sim\}} f'(v_t)\int_t^{t+T}\int_0^s (\wa g_H(s-r) -\wa g_H(t-r) ) \wa\si^v_{r\wedge t} d\wa W_rdW_s \\
	&\quad+\sum_{ \ast\, \in \{\emptyset, \sim, -, \wedge\}} \la_t\int_t^{t+T}\int_t^s  \int_0^r g_H(s-r)(\wa g^\eta_H(r-w)-\wa g^\eta_H(t-w))\wa\si^\eta_{w\wedge t} d\wa W_w dW_r dW_s \\
	&\quad +\sum_{\ast\, \in \{ \emptyset, \sim, -, \wedge,\circ\}} \wt \la_t  \int_t^{t+T}\int_t^s\int_0^r \wt g_H(s-r)(\wa g^{\wt\eta}_H(r-w)-\wa g^{\wt\eta}_H(t-w)) \wa\si^{\wt\eta}_{w\wedge t} d\wa W_w d\wt W_r dW_s\\
	&\quad+\sum_{\ast\, \in \{\emptyset,\sim, -\}} f'(v_t)\wa\eta^v_t \int_t^{t+T}\int_t^s d\wa W_r dW_s+ \int_t^{t+T} \int_\R\ga(t,z)\mu(ds,dz) \\
	&\quad+\int_t^{t+T} \int_\R\delta(s,z)(\mu-\nu)(ds,dz) +f'(v_t)\int_t^{t+T}\int_t^s \int_\R \delta_v(t,z)(\mu-\nu)(dr,dz)dW_s\\
	&\quad+\frac12\int_t^{t+T}f''(v_t)\biggl(\sum_{ \ast\, \in \{\emptyset, \sim\}} \int_0^s (\wa g_H(s-r) -\wa g_H(t-r) ) \wa\si^v_{r\wedge t} d\wa W_r\biggr)^2dW_s,
\end{split} \raisetag{6\baselineskip}
\eeq
where $\wa x$ simply means $x$ if $\ast = \emptyset$. In what follows, we use the notation $\iint_a^b = \int_a^b\int_\R$ and $A\lec B$ if there is a constant $C\in(0,\infty)$ that does not depend on any important parameter such that $A\leq CB$. Also, we abbreviate $u_T=u/\sqrt{T}$.

\blem\label{lem:xcheck} Under the assumptions of Theorem~\ref{thm:cf}, 
$$ \E_t[e^{iu_T(x_{t+T}-x_t)}]-\E_t[e^{iu_T(\check x_{t+T}-x_t)}] = o^{\mathrm{uc}}(T^{2H}),\quad T\to0.  $$
\elem
\bpr
Since $(v_s-v_t)^2=O(T^{2H})$ uniformly for $s\in[t,t+T]$ and $f\in C^2$, Taylor's theorem implies that we have $x_{t+T}-\check x_{t+T}=\sum_{i=1}^{10} y^{(i)}_T + o(T^{2H})$, where
\begin{align*}
	y_T^{(1)}&=\int_t^{t+T} (\al_s-\al_t) ds,\qquad y_T^{(2)}= f'(v_t) \int_t^{t+T}\int_t^s b_r drdW_s,\\
	y_T^{(3)}&= \sum_{ \ast\, \in \{\emptyset, \sim\}} \wa \la_t \int_t^{t+T}\int_t^s \wa g_H(s-r)(\wa \theta_r-\wa \theta_t) d\wa W_rdW_s,\\
	y_T^{(4)}&=\sum_{ \ast\, \in \{\emptyset, \sim,-,\wedge\}} \la_t \int_t^{t+T}\int_t^s \int_t^r g_H(s-r)\wa g^{\eta}_H(r-w)(\wa \si^\eta_w-\wa \si^\eta_t)d \wa W_w d W_rdW_s, \\
	y_T^{(5)}&=\sum_{ \ast\, \in \{\emptyset, \sim,-,\wedge,\circ\}} \wt \la_t \int_t^{t+T}\int_t^s \int_t^r  \wt g_H(s-r)\wa g^{\wt\eta}_H(r-w)(\wa \si^{\wt\eta}_q-\wa \si^{\wt\eta}_t)d \wa W_w d\wt W_rdW_s,\\
	y_T^{(6)}&= \sum_{ \ast\, \in \{\emptyset, \sim,-\}}f'(v_t)\int_t^{t+T} \int_t^s (\wa \eta^v_r-\wa\eta^v_t) d\wa W_r dW_s,\\
	y_T^{(7)}&=f'(v_t)\int_t^{t+T} \iint_t^s (\delta_v(r,z)-\delta_v (t,z))(\mu-\nu)(dr,dz) dW_s,\\
	y_T^{(8)}&=\frac12 \int_t^{t+T}f''(v_t)\biggl\{(v_s-v_t)^2-\biggl(\sum_{ \ast\, \in \{\emptyset, \sim\}}\int_0^s (\wa g_H(s-r) -\wa g_H(t-r) ) \wa\si^v_{r\wedge t} d\wa W_r\biggr)^2\biggr\}dW_s,\\
	y_T^{(9)}&= f'(v_t)\int_t^{t+T} \iint_t^s \ga_v(r,z)\mu(dr,dz) dW_s,\qquad y_T^{(10)}=\iint_t^T (\ga(s,z)-\ga(t,z))\mu(ds,dz).
\end{align*}
Note that
$
\lvert\E_t[e^{iu_T(x_{t+T}-x_t)}]-\E_t[e^{iu_T(\check x_{t+T}-x_t)}] \rvert\leq \E_t [ uT^{-1/2}\lvert x_{t+T}-\check x_{t+T}\rvert  \wedge 2 ]
$
as a consequence of the elementary inequality $\lvert e^{ix}-e^{ix_0}\rvert\leq \lvert x-x_0\rvert\wedge2$. Because the function $\vp(x)=x\wedge 2$ is subadditive and for all $p\in[0,1]$ there is $C_p>0$ such that  $\vp(x)\leq C_px^p$ for all $x>0$, in order to prove the lemma it suffices to show that 
\beq\label{eq:cases} \begin{cases} \E_t[\lvert y^{(i)}_T\rvert]=o(T^{2H+1/2})
	&\text{for } i=1,\dots, 8,\\
	\E_t[\lvert y^{(i)}_T\rvert^q]=o(T^{2H+q/2}) 
	&\text{for } i=9,10,
\end{cases}\eeq
where $q$ is the parameter from Assumption~\ref{ass:B}.

By \eqref{eq:smooth2}, it is easy to verify that $\E_t[\lvert y^{(1)}_T\rvert] = O(T^{1+H})=o(T^{2H+1/2})$ since $H<\frac12$. For $y^{(2)}_T,\ldots, y^{(6)}_T$, we use Jensen's inequality and bound the $L^2$-norm instead. Using It\^o's isometry in the first step and Minkowski's integral inequality in the second step, we obtain
$$ \E_t[( y^{(2)}_T)^2] =  \int_t^{t+T} \E_t\biggl[\biggl(\int_t^s b_r dr\biggr)^2\biggr] ds  \leq   \int_t^{t+T}  \biggl(\int_t^s \E_t[b_r^2]^{1/2} dr\biggr)^2  ds  =O(T^{3})$$
by \eqref{eq:mom1}, proving that $\E_t[\lvert y^{(2)}_T\rvert] =O(T^{3/2})=o(T^{H+1})=o(T^{2H+1/2})$. Similarly, by \eqref{eq:smooth2},
$$ \E_t[\lvert y^{(6)}_T\rvert] \leq \sum_{ \ast\, \in \{\emptyset, \sim,-\}}\lvert f'(v_t)\rvert\Biggl(\int_t^{t+T} \int_t^s \E_t[(\wa \eta^v_r-\wa\eta^v_t)^2]drds\Biggr)^{1/2}=O(T^{1+H})=o(T^{2H+1/2}).$$
For $y^{(3)}_T$, $y^{(4)}_T$ and $y^{(5)}_T$, note that for $g$ as in Assumption~\ref{ass:A}-5, we have that $\int_t^s g^2(s-r)dr \leq 2(\int_t^s k_H(s-r)^2 dr+\int_t^s \ell_g(s-r)^2 dr)=O((s-t)^{2H})$. Therefore, using \eqref{eq:mom2}  for $y^{(3)}_T$ and \eqref{eq:smooth1}  for $y^{(4)}_T$ and $y^{(5)}_T$, we derive
$ \E_t[ \lvert y^{(3)}_T\rvert] + \E_t[ \lvert y^{(4)}_T\rvert] + \E_t[ \lvert y^{(5)}_T\rvert] = o(T^{1/2+2H}).$
For  $y^{(7)}_T$,  we simply use Jensen's inequality and  \eqref{eq:smooth2} to bound
\begin{equation*}
	\E_t[\lvert y_T^{(7)}\rvert]\leq \lvert f'(v_t)\rvert \Biggl(\int_t^{t+T} \iint_t^s \E_t[(\delta_v(r,z)-\delta_v (t,z))^2] dr \nu(dz)   ds\Biggr)^{1/2}=O(T^{1+H}),
\end{equation*}
which is $o(T^{2H+1/2})$.
For $y_T^{(8)}$, we use the identity $x^2-y^2=2y(x-y)+2(x-y)^2$ together with the bound $v_s-v_t-\sum_{ \ast\, \in \{\emptyset, \sim\}}\int_0^s (\wa g_H(s-r) -\wa g_H(t-r) ) \wa\si^v_{r\wedge t} d\wa W_r=O(T^{2H}\vee T^{1/2})$ to derive the estimate $\E_t[\lvert y_T^{(8)}\rvert]=O(T^{3H+1/2}\vee T^{1+H})=o(T^{2H+1/2})$.

Next, we study $y^{(10)}_T$. Since $(\sum_{i=1}^\infty a_i)^q\leq \sum_{i=1}^r a_i^q$ for any nonnegative numbers $a_i$,  \eqref{eq:smooth3} and \eqref{eq:H}  yield
\begin{align*}
	\E_t[\lvert y^{(10)}_T\rvert^q]\leq \E_t\Biggl[ \iint_t^{t+T} \lvert \ga(s,z)-\ga(t,z)\rvert^q ds\nu(d z)\Biggr]=O(T^{1+qH_\ga})=o(T^{2H+q/2}).
\end{align*}
Finally, using the Burkholder--Davis--Gundy (BDG) inequality in the first step and \eqref{eq:mom4} in the last step, we obtain
\begin{align*}
	&\E_t[\lvert y_T^{(9)}\rvert^q] \leq C_q \lvert f'(v_t)\rvert^q  \E_t\Biggl[\Biggl(\int_t^{t+T} \biggl(\iint_t^s \ga_v(r,z)\mu(dr,dz)\biggr)^2 ds\Biggr)^{q/2} \Biggr]\\
	&\qquad=C_q  \lvert f'(v_t)\rvert^q \E_t\Biggl[\Biggl(\int_t^{t+T} \iint_t^s \iint_t^s \ga_v(r,z)\ga_v(w,q)\mu(dr,dz)\mu(dw,dq) d s\Biggr)^{q/2}\Biggr] \\
	&\qquad=C_q  \lvert f'(v_t)\rvert^q \E_t\Biggl[\Biggl(  \iint_t^{t+T} \iint_t^{t+T} (t+T-r\vee w)\ga_v(r,z)\ga_v(w,q)\mu(dr,dz)\mu(dw,dq) \Biggr)^{q/2}\Biggr] \\
	&\qquad\leq C_q \lvert f'(v_t)\rvert^qT^{q/2} \E_t\Biggl[\Biggl(\iint_t^{t+T} \lvert\ga_v(r,z)\rvert\mu(dr,dz)\Biggr)^q\Biggr]=O(T^{q/2+1})=o(T^{2H+q/2}).\qedhere
\end{align*}
\epr

Next, we define $\delta'$ in the same way as $\delta$ in \eqref{eq:delta}, but with $g_H^\delta(\cdot,z)$ and $\dot g^\delta_H(\cdot,z)$ replaced by $k_{H_\delta}$ from \eqref{eq:kH} for all $z\in\R$. Similarly, we define
$x'_{t+T}$ in the same way as $\check x_{t+T}$ in \eqref{eq:xcheck}, but with all kernels $g$ as in Assumption~\ref{ass:A}-5 replaced by $k_H$ from \eqref{eq:kH} and with $\delta$ replaced by $\delta'$. 
\blem\label{eq:xprime} Under the assumptions of Theorem~\ref{thm:cf}, one has 
$$ \E_t[e^{iu_T(\check x_{t+T}-x_t)}]-\E_t[e^{iu_T (x'_{t+T}-x_t)}] = o^{\mathrm{uc}}(T^{2H}),\quad T\to0.  $$
\elem
\bpr Since $ \lvert\E_t[e^{iu_T(\check x_{t+T}-x_t)}]-\E_t[e^{iu_T( x'_{t+T}-x_t)}] \rvert\leq uT^{-1/2} \E_t [ \lvert \check x_T-  x'_T\rvert ]$, it suffices to show that for each term in \eqref{eq:xcheck} that contains one of the kernels $g$ (or $\delta$), if we replace any of the $g$'s by $\ell_g$ (or $\delta$ by $\delta'$), the resulting term is of order $o(T^{2H+1/2})$. For the substitutions of $g$,  this is straightforward  since by our assumptions on $\ell_g$,
\begin{align*}
	\biggl(\int_t^{t+T}\int_0^s (\ell_g(s-r)-\ell_g(t-r))^2 drds\biggr)^{1/2}&= \biggl(\int_t^{t+T}\int_0^s \biggl(\int_{t\vee r}^s \ell'_g(w-r)dw\biggr)^2 dr ds\biggr)^{1/2} \\
	&\lec\biggl( \int_t^{t+T}\int_0^s ((s-r)^{H+1/2}-(t-r)_+^{H+1/2})^2 dr ds\biggr)^{1/2} \\
	&= O(T^{3/2}),
\end{align*}
which is $o(T^{2H+1/2})$. For the substitution of $\delta$, let us only verify that 
\beq\label{eq:delta-prime} \E_t \biggl[ \biggl\lvert \iint_t^{t+T} \int_0^s (\ell(s-w,z)-\ell(t-w,z))\si^\delta(w,z)dW_w(\mu-\nu)(ds,dz)\biggr\rvert \biggr]=o(T^{2H+1/2})\eeq
for $\ell=\ell_{g^\delta_{H_\delta}}$. (The other expression, involving an integral with respect to $\dot W$, can be analyzed analogously.) We use Jensen's inequality and the BDG inequality in a first step and H\"older's inequality (with respect to $ds$) in a second step to bound the left-hand side of \eqref{eq:delta-prime} by
\begin{align*}
	&C_r	\biggl(\iint_t^{t+T} \E_t\biggl[\biggl\lvert \int_0^s (\ell(s-w,z)-\ell(t-w,z))\si^\delta(w,z) dW_w\biggr\rvert^r\biggr]ds\nu(dz)\biggr)^{\frac1r}\\
	& \leq  C'_r   \biggl(\iint_t^{t+T} \E_t\biggl[ \biggl( \int_0^s (\ell(s-w,z)-\ell(t-w,z))^2\si^\delta(w,z)^2 dw\biggr)^{\frac r2}\biggr] ds\nu(dz)\biggr)^{\frac1r}\\
	& \leq  C'_r  T^{\frac1r-\frac12}  \biggl(\int_\R  \biggl( \int_0^{t+T}\biggl( \int_{w\vee t}^{t+T}(\ell(s-w,z) -\ell(t-w,z))^2 ds \biggr)\E_t[\si^\delta(w,z)^2] dw\biggr)^{\frac r2} \nu(dz)\biggr)^{\frac1r}. 
\end{align*}
By \eqref{eq:mom3}, the last display is $O(T^{1+1/r})=o(T^{2H+1/2})$.
\epr

The next two approximation results  are more subtle, as they do not follow from \emph{a priori} estimates. Here it is crucial to exploit the independence properties of $\mu$ and the different Brownian motions that appear in \eqref{eq:x} and \eqref{eq:sigma}--\eqref{eq:delta}. Let
\beq\label{eq:xprimeprime}
x''_{t+T} = x'_{t+T}+\iint_t^{t+T} (\delta(t,z)-\delta'(s,z))(\mu-\nu)(ds,dz) 
\eeq
and
\begin{equation}\label{eq:xbar}\begin{split}
		\ov x_{t+T}  &= x_t + \al_t T+\si_t (W_{t+T}-W_t)+ \iint_t^{t+T} \ga(t,z)\mu(ds,dz)+\iint_t^{t+T} \delta(t,z)(\mu-\nu)(ds,dz)\\
		&    +f'(v_t)\int_t^{t+T}\iint_t^s  \delta_v(t,z)(\mu-\nu)(dr,dz)dW_s + f'(v_t)\eta^v_t \int_t^{t+T}\int_t^s d  W_r dW_s\\
		&  +\sum_{ \ast\, \in \{\emptyset, \sim\}} f'(v_t) \int_t^{t+T}\int_0^s (k_H(s-r)-k_H(t-r))\wa\si^v_{r\wedge t} d\wa W_rdW_s \\
		& +\sum_{\ast\,\in\{\emptyset,\sim,-, \wedge\}} \la_t \int_t^{t+T} \int_t^s k_H(s-r) \int_0^t (k_H(r-w)-k_H(t-w)) \wa\si^\eta_wd\wa W_w dW_r dW_s\\
		& +  \la_t\si^\eta_t\int_t^{t+T}\int_t^s \int_t^r k_H(s-r)  k_H(r-w) d  W_w dW_r dW_s \\
		& +\frac12f''(v_t) \int_t^{t+T}\int_0^s(k_H(s-r)-k_H(t-r))^2[(  \si^v_{r\wedge t})^2+(\wt \si^v_{r\wedge t})^2 ] drdW_s\\
		& +f''(v_t)\int_t^{t+T}\int_0^s\int_0^r(k_H(s-r)-k_H(t-r))(k_H(s-w)-k_H(t-w))\\
		&\qquad\qquad\qquad\qquad\times(\si^v_{w\wedge t}dW_w+\wt\si^v_{w}\bone_{[0,t]}(w) d\wt W_w) (\si^v_{r\wedge t}dW_r+\wt\si^v_{r}\bone_{[0,t]}(r) d\wt W_r) dW_s.\end{split}\raisetag{6\baselineskip}
\end{equation} 
\blem\label{lem:xprimeprime} 
Under the assumptions of Theorem~\ref{thm:cf}, we have 
\begin{align*}
	&\E_t[e^{iu_T  (x'_{t+T}-x_t)}]-\E_t[e^{iu_T  (x''_{t+T}-x_t)}] \\	
	&\qquad = e^{-\frac12 u^2\si_t^2+T\varphi_t(u_T)} \biggl(\sum_{i=1}^3 C_{1,i}(u,T)_t +C'_{1,1}(u,T)_t \biggr)+ o^{\mathrm{uc}}(T^{2H}),\qquad T\to0.
\end{align*} 
\elem
\blem\label{lem:xbar} 
Under the assumptions of Theorem~\ref{thm:cf}, we have  
$$ \E_t[e^{iu_T  (x''_{t+T}-x_t)}]-\E_t[e^{iu_T  (\ov x_{t+T}-x_t)}] = o^{\mathrm{uc}}(T^{2H}),\quad T\to0. $$
\elem
\bpr[Proof of Lemma~\ref{lem:xprimeprime}] For any $r\in[1,2)$, there is a constant $C_r\in(0,\infty)$ such that  $\lvert e^{ix}-e^{ix_0}-ie^{ix_0}(x-x_0)\rvert \leq C_r\lvert x-x_0\rvert^r$ for all $x_0,x\in\R$. Thus,
\beq\label{eq:aux}\begin{split} &\Bigl\lvert \E_t[e^{iu_T  (x'_{t+T}-x_t)}]-\E_t[e^{iu_T  (x''_{t+T}-x_t)}] -iu_T\E_t[ e^{iu_T(x''_{t+T}-x_t)} (x'_{t+T}-x''_{t+T})] \Bigr\rvert\\
	&\qquad\leq C_r\lvert u\rvert ^rT^{-r/2}\E_t[\lvert x'_{t+T}-x''_{t+T}\rvert^r], \end{split}\eeq
By the BDG inequality and \eqref{eq:smooth4}, 
$$
\lvert u\rvert^rT^{-r/2}\E_t[\lvert x'_{t+T}-x''_{t+T}\rvert^r] 
\leq C'_r\lvert u\rvert^rT^{-r/2}\E_t\biggl[\iint_t^{t+T} \lvert\delta'(s,z)-\delta(t,z)\rvert^r \nu(dz)ds\biggr]
$$ 
(the difference between $\delta$ and $\delta'$ can be neglected, as seen before).
It is straightforward to show that \eqref{eq:delta} (together with Assumptions~\ref{ass:A} and \ref{ass:B}) implies  
\beq\label{eq:smooth-delta}  \E_t\biggl[\int_\R \lvert\delta'(s,z)-\delta(t,z)\rvert^r \nu(dz)\biggr] = O((s-t)^{rH_\delta}).\eeq
Therefore, \eqref{eq:aux} is of order $O^{\mathrm{uc}}(T^{1-r/2+rH_\delta})$,
which is $o^{\mathrm{uc}}(T^{2H})$   by \eqref{eq:H}.

It thus remains to evaluate $u_T\E_t[ e^{iu_T(x''_{t+T}-x_t)} (x'_{t+T}-x''_{t+T})]$. We claim that
\beq\label{eq:claim}
\E_t \biggl[ \biggl\lvert x'_{t+T}-x''_{t+T}-\sum_{i=1}^5 x^{(i)}_{t+T}  \biggr\rvert^r\biggr]^{1/r} = o(T^{1/2+2H}),
\eeq
where
\begin{equation}\label{eq:xTi}\begin{split}
		x^{(1)}_{t+T} &= \iint_t^{t+T} \int_0^s (k_{H_\delta}(s-w)-k_{H_\delta}(t-w)) \si^\delta(w\wedge t,z)d W_w (\mu-\nu)(ds,dz),\\
		x^{(2)}_{t+T} &=\iint_t^{t+T}  \int_0^s (k_{H_\delta}(s-w)-k_{H_\delta}(t-w)) \dot \si^\delta(w \wedge t,z)d \dot W_w (\mu-\nu)(ds,dz),\\
		x^{(3)}_{t+T} &=\iint_t^{t+T} \int_t^s  \eta^\delta(t,z)d W_w (\mu-\nu)(ds,dz),\\
		x^{(4)}_{t+T} &= \iint_t^{t+T} \int_t^s \ddot\eta^\delta(t,z)d \ddot W_w (\mu-\nu)(ds,dz),\\
		x^{(5)}_{t+T} &= \iint_t^{t+T} \iint_t^s \delta_\delta(t,z,z') (\mu-\nu)(dw,dz')(\mu-\nu)(ds,dz).
\end{split}\end{equation}
To prove this claim,
we only consider one particular contribution (the largest one) to the difference in \eqref{eq:claim}, namely
$$	\E_t \biggl[ \biggl\lvert \iint_t^{t+T} \int_t^s k_{H_\delta}(s-w)(\si^\delta(w,z)-\si^\delta(t,z)) dW_w (\mu-\nu)(ds,dz) \biggr\rvert^r\biggr]^{1/r},$$
and leave the remaining terms (which can be treated analogously) to the reader. Similarly to how we estimated \eqref{eq:delta-prime}, we can bound the term in the previous display by a constant times
\begin{align*}
	T^{1/r-1/2}\biggl(  \int_\R  \biggl( \int_t^{t+T}\biggl( \int_{w\vee t}^{t+T}k_{H_\delta}(s-w)^2 ds \biggr)\E_t[(\si^\delta(w,z)-\si^\delta(t,z))^2] dw\biggr)^{r/2} \nu(dz)\biggr)^{1/r}, 
\end{align*}
which is  $O(T^{1/r+2H_\delta})=o(T^{1/2+2H})$ by \eqref{eq:smooth4}, \eqref{eq:H} and the fact that $ \int_w^{t+T} k_{H_\delta}(s-w)^2 ds = O(T^{2H_\delta})$.  

As a consequence, only $u_T\E_t[ e^{iu_T(x''_{t+T}-x_t)} \sum_{i=1}^5 x^{(i)}_{t+T}]$ needs to be considered further.
For all $i=1,\dots, 5$, we have $ T^{-1/2}\E_t[\lvert x^{(i)}_{t+T}\rvert^r]^{1/r} = O(T^{1/r-1/2+H_\delta})=o(T^{2H/r})$   by  \eqref{eq:H}. If $r=1$, this is $o(T^{2H})$, so we can assume $r\in(1,2)$ in the following. Let $r^\ast=1/(1-1/r)$ be the conjugate exponent of $r$. Using H\"older's inequality and the estimate $\E[\lvert e^{ix}-e^{iy}\rvert^{r_\ast}]^{1/r_\ast}\leq C\E[( \lvert x-y\rvert^{r_\ast} \wedge 1]^{1/r_\ast}\leq C(\E[\lvert x-y\rvert^2]^{1/r_\ast}\wedge \E[\lvert x-y\rvert]^{1/r_\ast})$,  
we deduce that 
\begin{align*} 
	u_T \E_t\biggl[ e^{iu_T(x''_{t+T}-x_t)} \sum_{i=1}^5 x^{(i)}_{t+T}\biggr]&= u_T\E_t \biggl[ 
	e^{iu_T\si_t(W_{t+T}-W_t) + iu_T\iint_t^{t+T}\delta(t,z)(\mu-\nu)(ds,dz)}\\
	&\quad\times e^{ iu_T\iint_t^{t+T}\ga(t,z)\mu(ds,dz)} \sum_{i=1}^5 x^{(i)}_{t+T}\biggr]  + o^{\mathrm{uc}}(T^{2H}).
\end{align*}

Therefore,
upon defining $M_{\tau}=\si_t(W_{\tau}-W_t)+\iint_t^{\tau}\delta(t,z)(\mu-\nu)(ds,dz)+\iint_t^\tau \ga(t,z)\mu(ds,dz)$ for $\tau\geq t$, we have that
\begin{equation}\label{eq:5terms}u_T \E_t[ e^{iu_T(x''_{t+T}-x_t)} (x'_{t+T}-x''_{t+T})]  = u_T \E_t\biggl[ e^{iu_TM_{t+T}} \sum_{i=1}^5 x^{(i)}_{t+T}\biggr] + o^{\mathrm{uc}}(T^{2H}).\end{equation}
By It\^o's lemma (see \cite[Chapter~I, Theorem~4.57]{JS03}), the process $V(u)_\tau=e^{iuM_\tau}$ satisfies the stochastic differential equation (SDE)
\beq\label{eq:SDE}\begin{split} dV(u)_\tau&= iu\si_tV(u)_\tau  dW_\tau + \int_\R V(u)_{\tau-} (e^{iu(\delta(t,z)+\ga(t,z))}-1)(\mu-\nu)(d\tau,dz)\\
	&\quad + b(u)V(u)_\tau d\tau,\quad \tau\geq t,\\ V(u)_t&=1,  \end{split}\eeq
where 
\beq\label{eq:b} b(u)=-\frac12 u^2\si_t^2 + \int_\R (e^{iu(\delta(t,z)+\ga(t,z))}-1-iu\delta(t,z))\nu(dz).\eeq

We are now in the position to compute $\E_t [ e^{iu_TM_{t+T}} \sum_{i=1}^5 x^{(i)}_{t+T} ]$ for each $i=1,\dots,5$. In a first step, note that $V(u)_\tau$ is measurable with respect to the $\si$-algebra $\calg$ generated by $\calf_t$, $W$ and $\mu$. Since we can rewrite
\begin{equation*} 
	x^{(4)}_{t+T}= \int_t^{t+T} \iint_{w}^{t+T}  \ddot\eta^\delta(t,z)(\mu-\nu)(ds,dz)d\ddot W_w,
\end{equation*}
and 
$d \ddot W_\tau$ for $\tau\geq t$ is independent of $\calg$, it follows that 
\beq\label{eq:term24}\E_t [ e^{iu_TM_{t+T}}  x^{(4)}_{t+T} ] = \E_t[e^{iu_TM_{t+T}} \E[x^{(4)}_{t+T}\mid \calg]]=0.\eeq

Next, we consider $i=1$. Using integration by parts, we have that
\begin{align*}
	\E_t [ e^{iu_TM_{t+T}}  x^{(1)}_{t+T} ] &= \iint_t^{t+T} \E_t\biggr[V(u_T)_s\int_0^s (k_{H_\delta}(s-w)-k_{H_\delta}(t-w))\si^\delta(w\wedge t,z)dW_w\biggr] \\
	& \quad\times (e^{iu_T(\delta(t,z)+\ga(t,z))}-1)\nu(dz)ds+b(u_T)\int_t^{t+T} \E_t[e^{iu_T M_s}x_s^{(1)}] ds.
\end{align*}
So if we denote $m(u,z)_{r,s}=	\E_t [V(u)_r\int_0^r (k_{H_\delta}(s-w)-k_{H_\delta}(t-w))\si^\delta(w\wedge t,z)dW_w ]$ for $t\leq r\leq s$, then  
\beq\label{eq:aux2}  \E_t [ e^{iu_TM_{t+T}}  x^{(1)}_{t+T} ] =\iint_t^{t+T} e^{b(u_T)(t+T-s)}m(u_T,z)_{s,s}(e^{iu_T(\delta(t,z)+\ga(t,z))}-1)\nu(dz)ds.  \eeq
Using integration by parts one more time, we derive the identity
\begin{align*}
	m(u_T,z)_{r,s}&=\E_t[V(u_T)_t]\int_0^t (k_{H_\delta}(s-w)-k_{H_\delta}(t-w))\si^\delta(w,z)dW_w \\
	&\quad+iu_T\si_t  \si^\delta(t,z)\int_t^r k_{H_\delta}(s-w) \E_t [V(u_T)_w ] dw + b(u_T)  \int_t^r m(u_T,z)_{w,s} dw.
\end{align*}
Since $\E_t [V(u_T)_w ] =e^{b(u_T)(w-t)}$, it follows that
\begin{align*} 	m(u_T,z)_{r,s}&= e^{b(u_T)(r-t)}\int_0^t (k_{H_\delta}(s-w)-k_{H_\delta}(t-w))\si^\delta(w,z)dW_w \\
	&\quad+ iu_T\si_t\si^\delta(t,z)e^{b(u_T)(r-t)}\int_t^r  k_{H_\delta}(s-w)  dw \\
	&= e^{b(u_T)(r-t)}\int_0^t (k_{H_\delta}(s-w)-k_{H_\delta}(t-w))\si^\delta(w,z)dW_w\\
	&\quad+\frac{iu_T\si_t\si^\delta(t,z) e^{b(u_T)(r-t)}}{\Ga(H_\delta+\frac32)}[(s-t)^{H_\delta+1/2}-(s-r)^{H_\delta+1/2}]. \end{align*}
Inserting this with $r=s$ in \eqref{eq:aux2} yields
\begin{align*}
	\E_t [ e^{iu_TM_{t+T}}  x^{(1)}_{t+T} ] &= e^{b(u_T)T}\int_t^{t+T}   \int_0^t [k_{H_\delta}(s-w)-k_{H_\delta}(t-w)]\\
	&\qquad\times \int_\R \si^\delta(w,z) (e^{iu_T(\delta(t,z)+\ga(t,z)}-1)\nu(dz) dW_w ds \\
	&\quad+\frac{iu_T\si_t e^{b(u_T)T}}{\Ga(H_\delta+\frac52)}  T^{H_\delta+3/2} \int_\R \si^\delta(t,z)(e^{iu_T(\delta(t,z)+\ga(t,z))}-1)\nu(dz).
\end{align*}
Note that $e^{b(u_T)T}=e^{-\frac12\si_t^2 u^2+T\vp_t(u_T)}+O(T^{1/2})$ and $T\int_\R \si^\delta(t,z)(e^{iu_T(\delta(t,z)+\ga(t,z))}-1)\nu(dz)=T\int_\R \si^\delta(t,z)(e^{iu_T\delta(t,z)}-1)\nu(dz)+O(T^{1/2})$. 
Thus, in the notation of \eqref{eq:phipsi} and \eqref{eq:chi}, the contribution of $x^{(1)}_{t+T}$ to \eqref{eq:5terms} is 
\beq\label{eq:term1} \begin{split} u_T\E [ e^{iu_TM_{t+T}}  x^{(1)}_{t+T} ]  &= ue^{-\frac12\si_t^2 u^2+T\vp_t(u_T)}T^{1/2}\int_0^1 \int_0^t[k_{H_\delta}(t+sT-w)-k_{H_\delta}(t-w)] \\
	&\qquad\times \int_\R \si^\delta(w,z) (e^{iu_T(\delta(t,z)+\ga(t,z))}-1)\nu(dz)  dW_w ds \\
	&\quad+\frac{iu^2\si_te^{-\frac12\si_t^2 u^2+T\vp_t(u_T)}}{\Ga(H_\delta+\frac52)} T^{H_\delta+1/2}\chi^{(1)}_t(u_T)+o^{\mathrm{uc}}(T^{2H}). \end{split}\eeq
A similar argument can be employed to analyze $x^{(2)}_{t+T}$, $x^{(3)}_{t+T}$ and $x^{(5)}_{t+T}$. As a result,
\begin{align}
	\nonumber u_T\E [ e^{iu_TM_{t+T}}  x^{(2)}_{t+T} ] &=ue^{-\frac12\si_t^2 u^2+T\vp_t(u_{T})}T^{1/2}\int_0^1 \int_0^t[k_{H_\delta}(t+sT-w)-k_{H_\delta}(t-w)] \\
	&\quad\times \int_\R \dot\si^\delta(w,z) (e^{iu_T(\delta(t,z)+\ga(t,z))}-1)\nu(dz)  d\dot W_w ds	\label{eq:term2} \end{align}
and 
\begin{align}
	\label{eq:term3} u_TE [ e^{iu_TM_{t+T}}  x^{(3)}_{t+T} ] &= \frac12iu^2\si_te^{-\frac12\si_t^2 u^2+T\vp_t(u_{T})}  T\chi^{(2)}_t(u_T),\\
	\label{eq:term5}u_T\E [ e^{iu_TM_{t+T}}  x^{(5)}_{t+T} ] &=\frac12ue^{-\frac12\si_t^2 u^2+T\vp_t(u_{T})}T^{3/2}\chi^{(3)}_t(u_T).\end{align}
Recalling \eqref{eq:C1i}, we deduce the lemma  by combining \eqref{eq:5terms}, \eqref{eq:term24} and \eqref{eq:term1}--\eqref{eq:term5}.
\epr

\bpr[Proof of Lemma~\ref{lem:xbar}] 
By first conditioning on $W$ and $\wt W$, we obtain
$$
\E_t\Bigl[e^{iu_T(x''_{t+T}-x_t)}\mathrel{\big|} W,\wt W\Bigr] =\prod_{i=1}^5 A^{(i)}_T,
$$
where
\begin{align*}
	A^{(1)}_T&=\exp\biggl(iu_T\biggl(\al_tT+\si_t(W_{t+T}-W_t) + \sum_{ \ast\, \in \{\emptyset, \sim\}}f'(v_t)\wa\eta^v_t \int_t^{t+T}\int_t^s d\wa W_r dW_s\\
	&\quad+ \sum_{ \ast\, \in \{\emptyset, \sim\}} f'(v_t) \int_t^{t+T}\int_0^s (k_H(s-r)-k_H(t-r))\wa \si^v_{r\wedge t}d\wa W_rdW_s\\
	& \quad+ \sum_{ \ast\, \in \{\emptyset, \sim\}}\la_t\int_t^{t+T}\int_t^s \int_0^r k_H(s-r)(k_H(r-w)-k_H(t-w)) \wa\si^\eta_{w\wedge t}d\wa W_w dW_r dW_s\\
	& \quad+ \sum_{ \ast\, \in \{\emptyset, \sim\}}\wt \la_t\int_t^{t+T}\int_t^s \int_0^r k_H(s-r)(k_H(r-w)-k_H(t-w))\wa\si^{\wt\eta}_{w\wedge t} d\wa W_w d\wt W_r dW_s\\
	&\quad + \sum_{\ast\, \in \{ -,\wedge\}} \la_t\int_t^{t+T}\int_t^s \int_0^t k_H(s-r)(k_H(r-w)-k_H(t-w)) \wa\si^\eta_{w}d\wa W_w dW_r dW_s\\
	&\quad+ \sum_{\ast\, \in \{ -,\wedge,\circ\}} \wt \la_t \int_t^{t+T}\int_t^s \int_0^t k_H(s-r)(k_H(r-w)-k_H(t-w))\wa\si^{\wt\eta}_{w} d\wa W_w d\wt W_r dW_s\\
	&\quad+\frac12\int_t^{t+T}f''(v_t) \int_0^s(k_H(s-r)-k_H(t-r))^2[(  \si^v_{r\wedge t})^2+(\wt \si^v_{r\wedge t})^2 ] drdW_s\\
	&\quad+\int_t^{t+T}f''(v_t)\int_0^s\int_0^r(k_H(s-r)-k_H(t-r))(k_H(s-w)-k_H(t-w))\\
	&\qquad\qquad\qquad\qquad\qquad\qquad\times(\si^v_{w\wedge t}dW_w+\wt\si^v_{w\wedge t} d\wt W_w) (\si^v_{r\wedge t}dW_r+\wt\si^v_{r\wedge t} d\wt W_r) dW_s\biggr)\biggr),\\
	A^{(2)}_T&=\exp\biggl(\iint_t^{t+T} \Bigl(e^{iu_T(\delta(t,z)+\ga(t,z)+f'(v_t)\delta_v(t,z)(W_{t+T}-W_s))}-1\\
	&\qquad\qquad\qquad\qquad\qquad\qquad\quad -iu_T(\delta(t,z)+f'(v_t)\delta_v(t,z)(W_{t+T}-W_s))\Bigr)\nu(dz)ds\biggr),\\
	A^{(3)}_T&=\exp\biggl(-\frac12u^2_T\int_t^{t+T} \biggl( f'(v_t)\ov\eta^v_t(W_{t+T}-W_w)+\la_t\ov\si^\eta_t\int_w^{t+T}\int_w^s k_H(s-r)k_H(r-w)dW_rdW_s\\
	&\quad +\wt \la_t\ov\si^{\wt\eta}_t\int_w^{t+T}\int_w^s k_H(s-r)k_H(r-w)d\wt W_rdW_s  \biggr)^2 dw\biggr),\\
	A^{(4)}_T&= \exp\biggl(-\frac12u^2_T\int_t^{t+T} \biggl(\la_t\wh \si^\eta_t\int_w^{t+T}\int_w^s k_H(s-r)k_H(r-w)dW_rdW_s\\
	&\quad+\wt \la_t\wh\si^{\wt\eta}_t\int_w^{t+T}\int_w^s k_H(s-r)k_H(r-w)d\wt W_rdW_s \biggr)^2 dw\biggr),\\
	A^{(5)}_T&= \exp\biggl(-\frac12u^2_T\int_t^{t+T} \biggl(\wt \la_t\mathring \si^{\wt\eta}_t\int_w^{t+T}\int_w^s k_H(s-r)k_H(r-w)d\wt W_rdW_s \biggr)^2dw\biggr).
\end{align*}
Since $\lvert A^{(i)}_T\rvert\leq 1$ for all $i$ and $\lvert \prod_{i=1}^n a_i-\prod_{i=1}^n b_i\rvert \leq \sum_{i=1}^n \lvert a_i-b_i\rvert$ for any complex numbers $a_1,\dots,a_n$ and $b_1,\dots,b_n$ in the unit disk, we can replace any $A^{(i)}_T$ by $1$ if $\E_t[\lvert A^{(i)}_T-1\rvert] =o^\mathrm{uc}(T^{2H})$. We shall apply this to $i=3,4,5$. Indeed, using the bound $\rvert 1-e^{-x}\rvert\leq x$, we have 
$
\E_t[\lvert A^{(3)}_T-1\rvert]=O^{\mathrm{uc}}(T\vee T^{4H})$, $\E_t[\lvert A^{(4)}_T-1\rvert]=O^{\mathrm{uc}}(T^{4H})$ and  $\E_t[\lvert A^{(5)}_T-1\rvert]=O^{\mathrm{uc}}(T^{4H})$.
This shows that
\beq\label{eq:intermed} \E_t[e^{iu_T (x''_{t+T}-x_t)}]=\E_t[A^{(1)}_TA^{(2)}_T]+o^\mathrm{uc}(T^{2H}). \eeq

Next, we take expectation with respect to $\wt W$, still conditioning on $W$, which amounts to computing $\E_t[A^{(1)}_T\mid W]$. 
Conditionally on $W$ and $\calf_t$, $A_T^{(1)}$ belongs to the direct sum of Wiener chaoses (with respect to $(\wt W_\tau-\wt W_t)_{\tau\geq t}$) up to order $2$. Hence, by Theorem~\ref{thm:P2}, 
$$
\E_t[A^{(1)}_T\mid W] = A^{(11)}_T A^{(12)}_T 
$$
where 
\begin{align*}
	A^{(11)}_T&=\exp\biggl(iu_T\biggl(\al_t{T}+\si_t(W_{t+T}-W_t) + f'(v_t) \eta^v_t  \int_t^{t+T}\int_t^s d  W_r dW_s \\
	& \quad+f'(v_t)  \int_t^{t+T}\int_0^s (k_H(s-r)-k_H(t-r))\si^v_{r\wedge t}d  W_rdW_s\\
	& \quad+  f'(v_t)\int_t^{t+T}\int_0^t (k_H(s-r)-k_H(t-r))\wt \si^v_{r}d \wt W_rdW_s\\
	& \quad+ \la_t \int_t^{t+T}\int_t^s \int_0^r k_H(s-r)(k_H(r-w)-k_H(t-w))\si^\eta_{w\wedge t} d  W_wdW_r dW_s\\
	&\quad+ \la_t\sum_{\ast\, \in \{ \sim,-,\wedge\}}  \int_t^{t+T}\int_t^s \int_0^t k_H(s-r)(k_H(r-w)-k_H(t-w)) \wa\si^\eta_{w}d\wa W_wdW_r dW_s\\
	&\quad+\frac12f''(v_t) \int_t^{t+T}\int_0^s(k_H(s-r)-k_H(t-r))^2[(  \si^v_{r\wedge t})^2+(\wt \si^v_{r\wedge t})^2 ] drdW_s\\
	&\quad+f''(v_t)\int_t^{t+T}\int_0^s\int_0^r(k_H(s-r)-k_H(t-r))(k_H(s-w)-k_H(t-w))\\
	&\qquad\qquad\qquad \times(\si^v_{w\wedge t}dW_w+\wt\si^v_{w}\bone_{[0,t]}(w) d\wt W_w) (\si^v_{r\wedge t}dW_r+\wt\si^v_{r}\bone_{[0,t]}(r) d\wt W_r) dW_s\biggr)\biggr),\\
	A^{(12)}_T	&=\exp\biggl(-\frac12\sum_{j=1}^\infty \Bigl[\log(1-2i\al_ju_T)+2i\al_j u_T+\frac{\beta_j^2u^2_T}{1-2i\al_ju_T}\Bigr]\biggr)
\end{align*}
and $(\al_j)_{j\geq1}$ and $(\beta_j)_{j\geq1}$ are $(\calf_t\vee \si(W))$-measurable random variables. If $Z^{(1)}_T$ and $Z^{(2)}_T$ denote the projections of $A^{(1)}_T$ onto the first- and second-order Wiener chaos generated by $(\wt W_{t+\tau})_{\tau\geq0}$, respectively, then 
\begin{align*}
	\sum_{j=1}^\infty \al_j^2 &= \frac12\E_t[(Z^{(2)}_T)^2\mid W] = \frac12\int_t^{t+T}\int_t^r \biggl(\wt \la_t\wt\si^{\wt \eta}_tk_H(r-w)\int_r^{t+T} k_H(s-r) dW_s\\
	&\quad+f''(v_t)(\wt \si^v_t)^2\int_r^{t+T} k_H(s-r)k_H(s-w)dW_s\biggr)^2dw dr,\\
	\sum_{j=1}^\infty \beta_j^2 &=\E_t[(Z^{(1)}_T)^2\mid W]=\int_t^{t+T}\biggl(f'(v_t)\wt\eta^v_t (W_{t+T}-W_w)+f'(v_t)\wt \si^v_t\int_w^{t+T} k_H(s-w)dW_s\\
	&\quad+ \la_t\wt\si^\eta_t\int_w^{t+T}\int_w^s   k_H(s-r)k_H(r-w)   dW_r dW_s\\
	&\quad+ \wt \la_t\int_w^{t+T}\int_0^w k_H(s-w)(k_H(w-r)-k_H(t-r))\si^{\wt \eta}_{r\wedge t} dW_r dW_s\\
	&\quad+ \wt \la_t\sum_{\ast\, \in \{\sim, -,\wedge,\circ\}}  \int_w^{t+T} \int_0^t k_H(s-w)(k_H(w-r)-k_H(t-r))\wa\si^{\wt\eta}_{r} d\wa W_r  dW_s\\
	&\quad+f''(v_t)\wt\si^v_t \int_w^{t+T}\int_0^s (k_H(s-r)-k_H(t-r))(k_H(s-w)-k_H(t-w))\\
	&\qquad\qquad\qquad\qquad\qquad\qquad\qquad\qquad\times(\si^v_{r\wedge t}  dW_r+\wt \si^v_r\bone_{[0,t]}(r)d\wt W_r )dW_s  \biggr)^2 dw.
\end{align*}
Since $\log(1+x)=x+O(x^2)$ and $(1-x)^{-1}=1+O(x)$, 
$$ A^{(12)}_T=\exp\biggl(-\frac12u^2_T\sum_{j=1}^\infty \beta_j^2+O^{\mathrm{uc}}\biggl(T^{-1}\sum_{j=1}^\infty \al_j^2\biggr)+O^{\mathrm{uc}}\biggl(T^{-3/2}\sum_{j=1}^\infty \al_j\beta_j^2\biggr)\biggr). $$
A straightforward calculation shows that 
$ \E_t [\sum_{j=1}^\infty \al_j^2 ] = O^{\mathrm{uc}}(T^{1+4H})$ and 
\begin{align*}
	\E_t\biggl[\sum_{j=1}^\infty \al_j\beta_j^2\biggr] &\leq \E_t\biggl[\biggl(\max_{j\geq1}\al_j\biggr)\sum_{j=1}^\infty \beta_j^2\biggr]\leq \E_t\biggl[ \biggl(\sum_{j=1}^\infty \al_j^2\biggr)^{1/2}\sum_{j=1}^\infty \beta_j^2\biggr]\\
	&\leq \E_t\biggl[  \sum_{j=1}^\infty \al_j^2\biggr]^{1/2}\E_t\biggl[ \biggl( \sum_{j=1}^\infty \beta_j^2\biggr)^2\biggr]^{1/2}=O^{\mathrm{uc}}(T^{3/2+4H}),
\end{align*}
which, combined with \eqref{eq:intermed} and the fact that 
$$\E_t\Biggl[\Biggl\lvert\sum_{j=1}^\infty \beta_j^2-(f'(v_t)\wt \si^v_t)^2  \int_t^{t+T} \biggl(\int_v^{t+T} k_H(s-w)dW_s\biggr)^2 dw\Biggr\rvert\Biggr]=O^{\mathrm{uc}}(T(T^{H+1/2}\vee T^{3H})),$$
proves that 
\beq\label{eq:intermed2}  \E_t[e^{iu_T  (x''_{t+T}-x_t)}]=\E_t[A^{(11)}_TA^{(2)}_Te^{-\frac12 u^2_T(f'(v_t)\wt \si^v_t)^2\int_t^{t+T}  (\int_w^{t+T} k_H(s-w)dW_s )^2 dw}]+o^\mathrm{uc}(T^{2H}). \eeq
The last expectation is exactly $\E_t[e^{iu_T  (\ov x_{t+T}-x_t)}]$, as the reader can easily confirm.
\epr

Next, let us define
\beq\label{eq:X} X_T=X^{(0)}_T+X^{(1)}_T+X^{(2)}_T,\qquad X'_T=X_T+X^{(3)}_T,\eeq
where $X^{(0)}_T=\al_t\sqrt{T}$ and
\begin{align*}
	X^{(1)}_T&= T^{-1/2} \int_t^{t+T} \biggl(\si_t+\sum_{\ast\,\in\{\emptyset,\sim\}} f'(v_t)\int_0^t (k_H(s-r)-k_H(t-r))\wa\si^v_r d\wa W_r\\
	&\qquad\qquad+\frac12f''(v_t)\int_0^s (k_H(s-r)-k_H(t-r))^2[(\si^v_{r\wedge t})^2+(\wt \si^v_{r\wedge t})^2] dr\\
	&\qquad\qquad+f''(v_t)\int_0^t\int_0^t(k_H(s-r)-k_H(t-r))(k_H(s-w)-k_H(t-w))\\
	&\qquad\qquad\qquad\qquad\qquad\qquad\qquad\times(\si^v_wdW_w+\wt\si^v_{w}  d\wt W_w) (\si^v_rdW_r+\wt\si^v_{r}  d\wt W_r)\biggr)dW_s\\
	& = T^{-1/2}\int_t^{t+T} \si_{s\mid t} dW_s + o(T^{2H}), \\
	X^{(2)}_T&= T^{-1/2}\int_t^{t+T}\int_t^s \biggl(f'(v_t)\eta^v_t +f'(v_t)\si^v_t k_H(s-r)\\
	&\quad+\sum_{\ast\,\in\{\emptyset,\sim,-,\wedge\}} \la_tk_H(s-r)\int_0^t (k_H(r-w)-k_H(t-w))\wa\si^\eta_w d\wa W_w\\
	&\quad+f''(v_t)\si^v_tk_H(s-r)\int_0^t (k_H(s-w)-k_H(t-w)) (\si^v_{w}dW_w+\wt \si^v_wd\wt W_w)\biggr) d  W_r dW_s  \\
	& =T^{-1/2}\int_t^{t+T}\int_t^s \Bigl (f'(v_t)\eta^v_t +   f'(v_t)\si^v_{r\mid t}k_H(s-r) \\
	&\qquad\qquad\qquad\qquad\qquad\qquad\quad +f''(v_t)\si^v_tk_H(s-r)(v_{s\mid t}-v_t)\Bigr) d  W_r dW_s + o(T^{2H}) ,\\
	X^{(3)}_T&=T^{-1/2}   \int_t^{t+T}\int_t^s \int_t^r \Bigl(\la_t\si^\eta_t k_H(s-r)k_H(r-w)\\
	&\qquad\qquad\qquad\qquad\qquad\qquad \qquad  + f''(v_t)(\si^v_t)^2k_H(s-r)k_H(s-w)\Bigr) d  W_w dW_r dW_s.
\end{align*}
Furthermore, define
\beq\label{eq:Y} Y_T=T^{-1}(f'(v_t)\wt \si^v_t)^2\int_t^{t+T} \biggl (\int_w^{t+T} k_H(s-w)dW_s \biggr)^2 dw.\eeq
In particular, $A^{(11)}_T=e^{iu X'_T}$ and  by \eqref{eq:intermed2},
\beq\label{eq:intermed3}  \E_t[e^{iu_T  (\ov x_{t+T}-x_t)}]=\E_t[e^{iuX'_{T}}A^{(2)}_T e^{-\frac12u^2Y_T}]+o^\mathrm{uc}(T^{2H}). \eeq
\blem\label{lem:As}
Under the assumptions of Theorem~\ref{thm:cf}, we have  
\beq\label{eq:cf:xbar}\begin{split}\E_t[e^{iu_T  (\ov x_{t+T}-x_t)}] &= e^{T\psi_t(u_{T})}\E_t[e^{iuX_T}]+e^{-\frac12 u^2\si_t^2+T\varphi_t(u_{T})} C_{1,4}(u,T)_t\\
	&\quad+ \wt C_2(u,T)_t +o^{\mathrm{uc}}(T^{2H})
\end{split}\eeq
as $T\to0$, where
\beq\label{eq:C0}
\wt C_2(u,T)_t=iu  \E_t [e^{iu_T\si_t(W_{t+T}-W_t)} X_T^{(3)}  ] - \frac12u^2\E_t[e^{iu_T\si_t(W_{t+T}-W_t)}Y_T].
\eeq
\elem
\bpr
Since $X^{(3)}_T=O(T^{2H})$, $Y_T=O(T^{2H})$, $X_T=\si_tT^{-1/2}(W_{t+T}-W_t)+O(T^H)$ and $A^{(2)}_T=1+o^{\mathrm{uc}}(1)$ (we will show the latter at the end of this proof), Taylor's theorem and \eqref{eq:intermed3} imply that 
\begin{align*}  \E_t[e^{iu_T  (\ov x_{t+T}-x_t)}] 
	&	 = \E_t[e^{iuX_T}A^{(2)}_T]+\E_t[e^{iuX_T}A^{(2)}_T (iuX^{(3)}_T-\tfrac12u^2Y_T)]+o^\mathrm{uc}(T^{2H})\\
	&= \E_t[e^{iuX_T}A^{(2)}_T]+\E_t[e^{iu_T\si_t(W_{t+T}-W_t)} (iuX^{(3)}_T-\tfrac12u^2Y_T)]+o^{\mathrm{uc}}(T^{2H}).\end{align*}
Notice that $\E_t[e^{iu_T\si_t(W_{t+T}-W_t)} (iuX^{(3)}_T-\tfrac12u^2Y_T)]=\wt C_2(u,T)_t$, so it remains to analyze the term $\E_t[e^{iuX_T}A^{(2)}_T]$. To this end, observe that $e^{iz+iw}-e^{iz}-iw=(e^{iz}-1)w+O(w^2)$ and therefore (recall \eqref{eq:mom4}, \eqref{eq:phipsi} and \eqref{eq:chi}),
\beq\label{eq:A20}\begin{split}
	A^{(2)}_T&= e^{T\psi_t(u_T)}\exp\biggl(\iint_t^{t+T} \Bigl(e^{iu_T\delta(t,z)+iu_T\ga(t,z)+iu_Tf'(v_t)\delta_v(t,z)(W_{t+T}-W_s)}\\
	&\quad-e^{iu_T(\delta(t,z)+\ga(t,z))}-iu_Tf'(v_t)\delta_v(t,z)(W_{t+T}-W_s)\Bigr)\nu(dz) ds\biggr)\\
	&=e^{T\psi_t(u_{T})} \exp\biggl(iu_Tf'(v_t)\chi^{(4)}_t(u_T)\int_t^{t+T} (W_{t+T}-W_s)ds+ O^{\mathrm{uc}}(T)\biggr).
\end{split}\eeq

Next, we note that by  \eqref{eq:mom3}, \eqref{eq:mom4} and the Cauchy--Schwarz inequality,
\beq\label{eq:chibound} \lvert \chi^{(4)}_t(u_T)\rvert \leq u_T\int_\R\lvert \delta(t,z)\delta_v(t,z)\rvert \nu(d z)+u_T^{q/2} \int_\R \lvert \delta_v(t,z)\rvert\lvert \ga(t,z)\rvert^{q/2}\nu(dz)=O^{\mathrm{uc}}(T^{-1/2}).\eeq
As a consequence, using the fact that $X_T=\si_tT^{-1/2}(W_{t+T}-W_t)+O(T^H)$ in the second step, we obtain
\begin{align*}
	&\E_t[e^{iuX_T}A^{(2)}_T]\\
	&\quad=e^{T\psi_t(u_{T})}\biggl(\E_t[e^{iuX_T}]+iu_Tf'(v_t)\chi^{(4)}_t(u_T)\E_t\biggl[e^{iuX_T}\int_t^{t+T} (W_{t+T}-W_s)ds\biggr]+ O^{\mathrm{uc}}(T)\biggr)\\
	&\quad=e^{T\psi_t(u_{T})}\biggl(\E_t[e^{iuX_T}]+iu_Tf'(v_t)\chi^{(4)}_t(u_T)\\
	&\qquad\times\E_t\biggl[e^{iu_T\si_t(W_{t+T}-W_t)}\int_t^{t+T} (W_{t+T}-W_s)ds\biggr] \biggr) +O^{\mathrm{uc}}(T^{1/2+H}).
\end{align*}
Now observe that $e^{T\psi_t(u_{T})}T\chi^{(4)}_t(u_T)=e^{T\varphi_t(u_{T})}T\chi^{(4)}_t(u_T)+O^\mathrm{uc}(T)$, which follows from \eqref{eq:chibound} and the bound
\beq\label{eq:phibound} \lvert \phi_t(u_T)\rvert \leq \frac{\lvert u\rvert^q}{T^{q/2}} \int_\R\lvert \ga(t,z)\rvert^q \nu(d z). \eeq
Moreover, since $\E[e^{iuX}X]=i^{-1}\frac{d}{du} \E[e^{iuX}] = -i^{-1}uve^{-\frac12 u^2v}$ for $X\sim N(0,v)$, we have
\begin{align*}
	&\E_t\biggl[e^{iu_T\si_t (W_{t+T}-W_t)}\int_t^{t+T} (W_{t+T}-W_s)ds\biggr]\\
	&\qquad=\int_t^{t+T} \E_t\Bigl[e^{iu_T\si_t (W_s-W_t)}\E_s[e^{iu_T\si_t (W_{t+T}-W_s)}(W_{t+T}-W_s)]\Bigr]ds=-\frac{u\si_tT^{3/2}}{2i} e^{-\frac12 u^2\si_t^2},
\end{align*}
which proves the expansion in the lemma. 

Finally, we have to justify the approximation  of $A^{(2)}_T=1+o^{\mathrm{uc}}(1)$ used above. By \eqref{eq:A20} and \eqref{eq:chibound}, we only need to show that 
\begin{align*} \E_t[ \lvert T\phi_t(u_T)\rvert]&=\E_t\biggl[ \biggl\lvert\int_\R T(e^{iu_T\ga(t,z)}-1)\nu(dz)\biggr\rvert\biggr]=O^{\mathrm{uc}}(T^{1/2}),\\
	\E_t[\lvert T\varphi_t(u_T)\rvert]&=\E_t\biggl[\biggl\lvert\int_\R T (e^{iu_T\delta(t,z)}-1-iu_T\delta(t,z))\nu(dz)\biggr\rvert\biggr]=o^{\mathrm{uc}}(1)\end{align*}
as $T\to0$. The first statement follows from \eqref{eq:mom4} and the bound
$$\E_t[\lvert T\phi_t(u_T)\rvert] \leq T^{1-q/2}u^q\E_t\biggl[\int_\R \lvert \ga(t,z)\rvert^q \nu(dz)\biggr]=O^{\mathrm{uc}}(T^{1/2}),$$
while the second statement follows from \eqref{eq:mom3} and the dominated convergence theorem.
\epr

To complete the proof of Theorem~\ref{thm:cf}, it remains to obtain explicit expressions for $\E_t[e^{iuX_T}]$ and $\wt C_2(u,T)_t$.
\blem\label{lem:expr} In the notation of \eqref{eq:X} and \eqref{eq:C0}, we have that 
\beq\label{eq:cfXT}\begin{split}
	\E_t[e^{iuX_T}]&=\exp\biggl(iu\al_t\sqrt{T}-\frac12u^2\int_0^1 \si^2_{t+sT\mid t} ds-iu^3\biggl(\frac12\si^2_tf'(v_t)\eta_t^v T^{1/2}+\frac{\si_t^2f'(v_t)\si^v_t}{\Ga(H+\frac{5}{2})}T^H\\
	&\quad+ C'_{1,0}(T)_tT^{2H} \biggr)\biggr) -e^{-\frac12\si_t^2 u^2} \biggl(\frac{u^2}{8H\Ga(H+\tfrac32)\Ga(H+\tfrac12)}-\frac{u^4\si_t^2}{\Ga(2H+3)}\\
	&\quad-\frac{u^4\si_t^2}{4(H+1)\Ga(H+\frac32)^2} \biggr)(f'(v_t)\si^v_t)^2T^{2H} +o^\mathrm{uc}(T^{2H})
\end{split}\raisetag{3\baselineskip}
\eeq
and 
\beq\label{eq:C0-2}  \begin{split}
	\wt C_2(u,T)_t&=e^{-\frac12u^2\si_t^2}\biggl( \frac{u^4\si_t^3f'(v_t)h'(\eta_t)\si^\eta_t}{\Ga(2H+3)}  \\
	&\quad+\frac{u^4\si_t^2[(f'(v_t)\wt \si^v_t)^2+\si_tf''(v_t)(\si^v_t)^2]}{4(H+1)\Ga(H+\frac32)^2}- \frac{u^2(f'(v_t)\wt\si^v_t)^2}{8H\Ga(H+\frac12)\Ga(H+\frac32)}  \biggr)T^{2H}.  \end{split}
\eeq
\elem
\bpr By \eqref{eq:X}, $X_T$ belongs to the direct sum of Wiener chaoses up to order $2$, where $X^{(0)}_T$, $X^{(1)}_T$ and $X^{(2)}_T$ belong to the zeroth, first and second Wiener chaos, respectively (see \cite[Proposition~1.1.4]{Nualart06}). Thus, we can use Theorem~\ref{thm:P2} to evaluate $\E_t[e^{iuX_T}]$. There are numbers $(\al_j)_{j\geq1}$ and $(\beta_j)_{j\geq1}$ (not related to those in the proof of Lemma~\ref{lem:xbar}, even though we use the same notation) such that 
\beq\label{eq:cfXT-2}\begin{split}\E_t[e^{iuX_T}]	&=\exp\biggl(iu\al_t\sqrt{T}-\frac12\sum_{j=1}^\infty \Bigl[\log(1-2i\al_ju)+2i\al_j u+\frac{\beta_j^2u^2}{1-2i\al_ju}\Bigr]\biggr)\\
	& =\exp\biggl(iu\al_t\sqrt{T}- u^2\sum_{j=1}^\infty \al_j^2 -\frac12u^2\sum_{j=1}^\infty \beta_j^2-iu^3\sum_{j=1}^\infty \al_j\beta_j^2  \\
	&\quad + 2u^4\sum_{j=1}^\infty \al_j^2\beta_j^2+4iu^5\sum_{j=1}^\infty \al_j^3\beta_j^2\biggr)+ O^\mathrm{uc}\biggl(\sum_{j=1}^\infty \al_j^3 \biggr). \end{split} \raisetag{-3.5\baselineskip}\eeq
The last term, that is, $4iu^5\sum_{j=1}^\infty \al_j^3\beta_j^2$ can be included in the $O^\mathrm{uc} (\sum_{j=1}^\infty \al_j^3  )$-term; we only spell it out because we referred to this in Section~\ref{sec:application}.
By \eqref{eq:square},
\begin{align*}
	\sum_{j=1}^\infty \al_j^2&=\frac12\E_t[(X^{(2)}_T)^2] 
	=\frac{(f'(v_t)\si^v_t)^2}{2\Gamma(H+\tfrac12)^2T}\int_t^{t+T}\int_t^s (s-r)^{2H-1} dr ds+O(T^{H+1/2}\vee T^{3H})\\
	&= \frac1{8H\Ga(H+\tfrac32)\Ga(H+\tfrac12)}(f'(v_t)\si^v_t)^2T^{2H}+O(T^{H+1/2}\vee T^{3H})
\end{align*}
and 
\begin{align*}
	\sum_{j=1}^\infty \beta_j^2=\E_t[(X_T^{(1)})^2] &= \frac{1}{T}\int_t^{t+T} \si_{s\mid t}^2 ds+o(T^{2H}) = \int_0^1 \si_{t+sT\mid t}^2 ds+o(T^{2H}). 
\end{align*}
In particular,
$
\sum_{j=1}^\infty \al_j^3 \leq  (\sum_{j=1}^\infty \al_j^2 )^{3/2}=O(T^{3H})$.
Moreover, by means of  \eqref{eq:square} and some tedious (but entirely straightforward) computations, one can show that
\begin{align*}
	\sum_{j=1}^\infty \al_j\beta_j^2&=\frac12T^{1/2}f'(v_t)\eta_t^v \Biggl(\int_0^1   \si_{t+sT\mid t}  ds\Biggr)^2 \\
	&\quad + \frac{f'(v_t)}{\Ga(H+\frac12)} T^H \int_0^1 \int_0^s(s-r)^{H-1/2}\si_{t+rT\mid t}\si^v_{t+rT\mid t} dr \si_{t+sT\mid t} ds\\
	&\quad+\frac{f''(v_t)\si^v_t}{\Ga(H+\frac12)} T^H\int_0^1\int_0^s (s-r)^{H-1/2}\si_{t+rT\mid t}\si_{t+sT\mid t} (v_{t+sT\mid t}-v_t) drds  +o(T^{2H}) \\
	&=\frac12\si^2_tf'(v_t)\eta_t^vT^{1/2}  +\frac{\si_t^2f'(v_t)\si^v_t}{\Ga(H+\frac{5}{2})}T^H + C'_{1,0}(T)_t T^{2H} + o(T^{2H}),\\
	\sum_{j=1}^\infty \al_j^2\beta_j^2	&=\frac{1}{2T^2}\int_t^{t+T} \int_t^s \si_{r\mid t}\si_{s\mid t}\int_t^s \Bigl(f'(v_t)\eta^v_t+k_H(s-w)[f'(v_t)\si^v_{w\mid t}+f''(v_t)\si^v_t (v_{s\mid t}-v_t)]\Bigr)\\
	&\qquad\quad\times\Bigl(2f'(v_t)\eta^v_t+k_H(r-w)[f'(v_t)\si^v_{w\mid t}+f''(v_t)\si^v_t(v_{r\mid t}-v_t)]\\
	&\qquad\qquad\qquad+ k_H(w-r)[f'(v_t)\si^v_{r\mid t}+f''(v_t)\si^v_t(v_{w\mid t}-v_t)]\Bigr) dw drds+o(T^{2H})\\
	&= \frac{\si_t^2(f'(v_t)\si^v_t)^2}{2T^2}\int_t^{t+T} \int_t^s  \int_t^s k_H(s-w)[k_H(r-w) dw k_H(w-r)] dwdrds+ o(T^{2H})\\
	&=\biggl(\frac{1}{2\Ga(2H+3)} + \frac{1}{8(H+1)\Ga(H+\frac32)^2}\biggr) \si_t^2(f'(v_t)\si^v_t)^2T^{2H}+ o(T^{2H}).
\end{align*}
Consequently,
\begin{align*}
	\E_t[e^{iuX_T}]&=\exp\biggl(iu\al_t\sqrt{T}-\frac12u^2\int_0^1 \si^2_{t+sT\mid t} ds-iu^3\biggl(\frac12\si^2_tf'(v_t)\eta_t^v T^{1/2}+\frac{\si_t^2f'(v_t)\si^v_t}{\Ga(H+\frac{5}{2})}T^H\\
	&\quad+ C'_{1,0}(T)_tT^{2H} \biggr)\biggr) \biggl(1-\frac{u^2(f'(v_t)\si^v_t)^2T^{2H}}{8H\Ga(H+\tfrac32)\Ga(H+\tfrac12)}+\biggl(\frac{1}{\Ga(2H+3)} \\
	&\quad+ \frac{1}{4(H+1)\Ga(H+\frac32)^2}\biggr)u^4\si_t^2(f'(v_t)\si^v_t)^2T^{2H}  \biggr) +o^{\mathrm{uc}}(T^{2H})\\
	&=\exp\biggl(iu\al_t\sqrt{T}-\frac12u^2\int_0^1 \si^2_{t+sT\mid t} ds-iu^3\biggl(\frac12\si^2_tf'(v_t)\eta_t^v T^{1/2}+\frac{\si_t^2f'(v_t)\si^v_t}{\Ga(H+\frac{5}{2})}T^H\\
	&\quad+ C'_{1,0}(T)_tT^{2H} \biggr)\biggr) -e^{-\frac12 \si_t^2 u^2}\biggl(\frac{u^2}{8H\Ga(H+\tfrac32)\Ga(H+\tfrac12)}-\biggl(\frac{1}{\Ga(2H+3)} \\
	&\quad+ \frac{1}{4(H+1)\Ga(H+\frac32)^2}\biggr) u^4\si_t^2\biggr) (f'(v_t)\si^v_t)^2T^{2H}+o^{\mathrm{uc}}(T^{2H}),
\end{align*}
which proves \eqref{eq:cfXT}. 

In order to find $\wt C_2(u,T)_t$, we first determine $\E_t[e^{iu_T\si_t (W_{t+T}-W_t)}X^{(3)}_T]$. To this end, let $M_\tau=e^{iu_T\si_t(W_\tau-W_t)-\frac12(iu_T\si_t)^2 (\tau-t)}$, which is an exponential martingale and solves the SDE
$$ dM_\tau=iu_T\si_t M_\tau dW_\tau,\quad\tau\geq t,\quad M_t=1. $$
In integral form, this becomes
$$ M_\tau= 1+ iu_T\si_t \int_t^\tau M_s dW_s.$$
So upon iterating this equation, we obtain the chaos expansion of $M$ as
$$ M_\tau=1+\sum_{n=1}^\infty (iu_T\si_t)^n \int_t^\tau \int_t^{t_1} \dotsi \int_t^{t_{n-1}} dW_{t_n}\cdots dW_{t_2}dW_{t_1}.$$
Now observe that $\E_t[e^{iu_T\si_t (W_{t+T}-W_t)}X^{(3)}_T]=e^{-\frac12 u^2\si_t^2}\E_t[M_{t+T}X^{(3)}_T]$. Because Wiener chaoses are orthogonal to each other (see \eqref{eq:ortho}), to compute the last expectation, only the projection of $M_{t+T}$ on the third-order Wiener chaos has a non-zero contribution. Therefore,
\begin{align*}
	&\E_t[M_{t+T}X^{(3)}_1]\\
	&\quad=\frac{(iu_T\si_t)^3}{T^{1/2}} \E_t\Biggl[\int_t^{t+T} \int_t^s   \int_t^r dW_w dW_r dW_s\int_t^{t+T}\int_t^s \int_t^r\Bigl(\la_t\si^\eta_tk_H(s-r)k_H(r-w) \\
	&\qquad\qquad\qquad\qquad\qquad\qquad\qquad\qquad  +f''(v_t)(\si^v_t)^2k_H(s-r)k_H(s-w)\Bigr) d  W_w dW_r dW_s\Biggr]\\
	&\quad=\frac{(iu_T\si_t)^3}{T^{1/2}}   \int_t^{t+T}\int_t^s \int_t^r k_H(s-r)[\la_t\si^\eta_tk_H(r-w)+f''(v_t)(\si^v_t)^2k_H(s-w)] dwdrds\\
	& \quad= -\frac{iu^3\la_t\si_t^3\si_t^\eta}{\Ga(2H+3)}T^{2H}-\frac{iu^3 f''(v_t)\si_t^3(\si_t^v)^2}{4(H+1)\Ga(H+\frac32)^2}T^{2H}.
\end{align*}
In a similar fashion, we can find $\E_t[e^{iu_T\si_t (W_{t+T}-W_t)}Y_T]$. First, note that $Y_T=Y_T^{(0)}+Y_T^{(2)}$, where 
\begin{align*}
	Y_T^{(0)} &=T^{-1}(f'(v_t)\wt\si^v_t)^2 \int_t^{t+T}\int_w^{t+T} k_H^2(s-w) ds dw = \frac{(f'(v_t)\wt\si^v_t)^2}{4H\Ga(H+\frac12)\Ga(H+\frac32)}T^{2H},\\
	Y_T^{(2)}&= 2T^{-1}(f'(v_t)\wt\si^v_t)^2 \int_t^{t+T}\int_w^{t+T}\int_w^s k_H(s-w)k_H(r-w)dW_rdW_sdw\\
	&=\frac{2T^{-1}(f'(v_t)\wt\si^v_t)^2}{\Ga(H+\frac12)^2}\int_t^{t+T}\int_t^s \int_t^r (s-w)^{H-1/2}(r-w)^{H-1/2} dw dW_rdW_s.
\end{align*}
Thus, 
\begin{align*}
	&\E_t[e^{iu_T\si_t (W_{t+T}-W_t)}Y_T]\\
	&\quad=(f'(v_t)\wt\si^v_t)^2 e^{-\frac12u^2\si_t^2} \biggl( \frac{T^{2H}}{4H\Ga(H+\frac12)\Ga(H+\frac32)}\\
	&\qquad\qquad\qquad\qquad -\frac{2T^{-2}u^2\si_t^2}{\Ga(H+\frac12)^2}\int_t^{t+T}\int_t^s \int_t^r (s-w)^{H-1/2}(r-w)^{H-1/2} dw drds\biggr)\\
	&\quad=(f'(v_t)\wt\si^v_t)^2 e^{-\frac12u^2\si_t^2} \biggl( \frac{T^{2H}}{4H\Ga(H+\frac12)\Ga(H+\frac32)} -\frac{u^2\si_t^2T^{2H}}{2(H+1)\Ga(H+\frac32)^2} \biggr).
\end{align*}
Gathering the formulas derived above, we obtain \eqref{eq:C0-2}.
\epr

\bpr[Proof of Theorem~\ref{thm:cf}]
By Lemmas~\ref{lem:xcheck}--\ref{lem:expr}, we have that
\begin{align*}
	&\E_t[e^{iu_T(x_{t+T}-x_t)}]\\
	&\quad=e^{T\psi_t(u_{T})} \biggl\{\exp\biggl(iu\al_t\sqrt{T}-\frac12u^2\int_0^1 \si^2_{t+sT\mid t} ds-iu^3\biggl(\frac12\si^2_tf'(v_t)\eta_t^v T^{1/2}+\frac{\si_t^2f'(v_t)\si^v_t}{\Ga(H+\frac{5}{2})}T^H\\
	&\quad\qquad+ C'_{1,0}(T)_tT^{2H} \biggr)\biggr) -e^{-\frac12\si_t^2 u^2} \biggl(\frac{u^2}{8H\Ga(H+\tfrac32)\Ga(H+\tfrac12)}- \frac{u^4\si_t^2}{\Ga(2H+3)}\\
	&\quad\qquad-\frac{u^4\si_t^2}{4(H+1)\Ga(H+\frac32)^2} \biggr)(f'(v_t)\si^v_t)^2T^{2H} \biggr\}\\
	&\quad\quad	+ e^{-\frac12\si_t^2 u^2+T\varphi_t(u_{T})}\biggl(\sum_{i=1}^4 C_{1,i}(u,T)_t + C'_{1,1}(u,T)_t\biggr)+ \wt C_2(u,T)_t+o^{\mathrm{uc}}(T^{2H})\\
	&\quad=\exp\biggl(iu\al_t\sqrt{T}-\frac12 u^2\int_0^1 \si^2_{t+sT\mid t} ds+T\psi_t( {u}_{T}) -iu^3\biggl(\frac12\si^2_tf'(v_t)\eta_t^v T^{1/2}+\frac{\si_t^2f'(v_t)\si^v_t}{\Ga(H+\frac{5}{2})}T^H\\
	&\quad\qquad+ C'_{1,0}(T)_tT^{2H} \biggr) \biggr) + e^{-\frac12\si_t^2 u^2+T\varphi_t(u_{T})}\biggl(\sum_{i=1}^4 C_{1,i}(u,T)_t+   C'_{1,1}(u,T)_t \biggr)\\
	&\qquad+C_2(u)_tT^{2H}+o^{\mathrm{uc}}(T^{2H}),
\end{align*}
which shows Theorem~\ref{thm:cf}.
\epr

\begin{proof}[Proof of Proposition~\ref{prop:condmean}]
	By \eqref{eq:x}, we have that 
	\begin{align*} \E_t[x_{t+T}-x_t] &= \biggl(\al_t+\int_\R\ga(t,z)\nu(dz)\biggr)T \\
		&\quad+ \E_t\biggl[ \int_t^{t+T} (\al_s-\al_t) ds + \iint_t^{t+T} (\ga(s,z)-\ga(t,z)) ds\nu(dz) \biggr]. \end{align*}
	Both $\al$ and $\ga$ are $H$-H\"older continuous in $L^1$ by \eqref{eq:smooth2} and  \eqref{eq:smooth3} (and our hypotheses on $q$ and $H_\ga$). Therefore, the conditional expectation on the right-hand side above is $O(T^{1+H})$, which gives the desired result.
\end{proof}

\section{Proof of Theorem~\ref{thm:err_bound}}\label{sec:proof:err_bound}
Throughout the proof, we will set $t=0$ and $x_0=0$ for simplicity of notation. We will also drop $t$ from the notation. We will make use of the following lemma in the proof:
\begin{lemma}\label{lemma:bounds}
	Suppose that Assumptions \ref{ass:A}, \ref{ass:B} and \ref{ass:C}-1 hold. There exists $\mathcal{F}_0^{(0)}$-measurable random variables $C_0$ and $\overline{t}>0$ that do not depend on $T$ such that for $T<\overline{t}$, we have 
	\begin{equation}\label{bounds_1}
		O_T(k)\leq C_0\left(Te^{3k}1_{\{k<-1\}}+Te^{-k}1_{\{k>1\}}+\left(\sqrt{T}\wedge\frac{T}{|k|}\right)1_{\{|k|<1\}}\right),
	\end{equation}
	\begin{equation}\label{bounds_2}
		|O_{T}(k_1)-O_T(k_2)|\leq C_0\left[\frac{T}{k_2^4}\wedge\frac{T}{k_2^2}\wedge 1\right]|e^{k_1}-e^{k_2}|,
	\end{equation}
	where $k_1<k_2<0$ or $k_1>k_2>0$. 
\end{lemma}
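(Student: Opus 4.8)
Following the normalization in the proof, take $t=0$ and $x_0=0$, so $X_0:=e^{x_0}=1$ and, since rates and dividends vanish, the forward is $F_{0,T}=X_0=1$; hence $O_T(k)=\E_0[(e^k-X_T)^+]$ when $k\le0$ (out-of-the-money put) and $O_T(k)=\E_0[(X_T-e^k)^+]$ when $k>0$ (out-of-the-money call), and under $\Q$ the price $X=e^x$ is a true martingale. My plan is to reduce both statements to elementary payoff inequalities together with two second-moment estimates:
\beq\label{eq:core-sketch}
\E_0[(X_T-X_0)^2]=\E_0[X_T^2]-1\lesssim T,\qquad \E_0[(X_T^{-1}-X_0^{-1})^2]\lesssim T
\eeq
(the equality uses $\E_0[X_T]=X_0=1$). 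I would derive \eqref{eq:core-sketch} from the fact that $X$ and $X^{-1}$ are positive, $L^2$-bounded sub-/martingales — Assumption~\ref{ass:C}-1 gives $\sup_{s\le\overline{t}}\E_0[X_s^{\pm2}]\le\E_0[e^{2|x_s|}]<C_0$ — by writing $\E_0[X_T^2]-1=\E_0[\langle X\rangle_T]$, expanding $X^{\pm2}$ via It\^o, and using the risk-neutral drift identity for $\al$ (which forces the drift of $e^x$ to vanish): this produces the cancellation that bounds $\E_0[\langle X^{\pm1}\rangle_T]$ by $\int_0^T\E_0[X_s^{\pm2}(\cdots)]\,ds$ with $(\cdots)$ controllable by Cauchy--Schwarz from the $L^4$-bound on $\si$ and the $L^2$-bounds on the exponential jump intensities in \eqref{a3:1}, a Gr\"onwall step closing the recursion. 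In particular \eqref{eq:core-sketch} yields $\E_0[|X_T-X_0|]\le\E_0[(X_T-X_0)^2]^{1/2}\lesssim\sqrt T$.

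\textbf{Bound \eqref{bounds_1}.} Markov's inequality applied to $(X_T-1)^2$ and $(X_T^{-1}-1)^2$ gives, with \eqref{eq:core-sketch},
\beq\label{eq:Markov-sketch}
\P_0(X_T>y)\lesssim 1\wedge\frac{T}{(y-1)^2},\qquad \P_0(X_T<y)\lesssim 1\wedge\frac{Ty^2}{(1-y)^2}.
\eeq
By the layer-cake formula $\E_0[(X_T-e^k)^+]=\int_{e^k}^\infty\P_0(X_T>y)\,dy$ and $\E_0[(e^k-X_T)^+]=\int_0^{e^k}\P_0(X_T<y)\,dy$. For $k>1$, $y\ge e^k>e$ forces $(y-1)^2\ge c\,y^2$, so $O_T(k)\lesssim\int_{e^k}^\infty T(y-1)^{-2}\,dy=T/(e^k-1)\lesssim Te^{-k}$. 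For $k<-1$, $y<e^k<e^{-1}$ forces $(1-y)^2\ge c$, so $O_T(k)\lesssim\int_0^{e^k}Ty^2\,dy\asymp Te^{3k}$. For $|k|\le1$, dominating the payoff by $(X_0-X_T)^+\le|X_T-X_0|$ (put) or $(X_T-X_0)^+\le|X_T-X_0|$ (call) gives $O_T(k)\lesssim\sqrt T$; the sharper $T/|k|$ (relevant only when $|k|>\sqrt T$) comes from integrating \eqref{eq:Markov-sketch}, using $\int_0^{e^k}\frac{y^2}{(1-y)^2}\,dy\lesssim(1-e^k)^{-1}\asymp|k|^{-1}$ after splitting the $y$-range at $1-\sqrt T$ when $e^k>1-\sqrt T$, and symmetrically for the call.

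\textbf{Bound \eqref{bounds_2}.} Since $|(e^{k_1}-x)^+-(e^{k_2}-x)^+|\le|e^{k_1}-e^{k_2}|\,\bone_{\{x<e^{k_2}\}}$ for $k_1<k_2$ and $|(x-e^{k_1})^+-(x-e^{k_2})^+|\le|e^{k_1}-e^{k_2}|\,\bone_{\{x>e^{k_2}\}}$ for $k_1>k_2$, taking expectations gives
\[
|O_T(k_1)-O_T(k_2)|\le|e^{k_1}-e^{k_2}|\bigl(\P_0(X_T<e^{k_2})\,\bone_{\{k_2<0\}}+\P_0(X_T>e^{k_2})\,\bone_{\{k_2>0\}}\bigr).
\]
Inserting \eqref{eq:Markov-sketch} and distinguishing $|k_2|\le1$ (where $(1\mp e^{k_2})^2\gtrsim k_2^2$) from $|k_2|>1$ (where $e^{\mp2k_2}\gtrsim k_2^4$), the factor multiplying $|e^{k_1}-e^{k_2}|$ is seen to be $\lesssim\frac{T}{k_2^4}\wedge\frac{T}{k_2^2}\wedge1$, which is \eqref{bounds_2}.

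\textbf{Main obstacle.} The payoff manipulations are routine; the content is \eqref{eq:core-sketch}. The difficulty is that Assumption~\ref{ass:C}-1 provides only second moments of $X^{\pm1}$ and fourth moments of $\si$, so one cannot simply Cauchy--Schwarz $\E_0[X_s^2\si_s^2]$ into fourth moments of $X$: the martingale (risk-neutral) structure must be exploited to generate the cancellation, and the infinite-activity small jumps $\delta$ — whose $L^2(\nu)$-norm is not directly controlled — have to be handled through the $e^{|\delta|}-1-|\delta|$ terms in \eqref{a3:1}. For the deep strikes $|k|>1$ an equivalent and perhaps cleaner route is to replace the layer-cake step by an It\^o--Tanaka expansion of $(X_T-e^k)^+$ (resp.\ $(e^k-X_T)^+$) about the level $e^k$, which writes $O_T(k)$ as $\tfrac12\E_0[L_T^{e^k}(X)]$ plus a compensated jump term, both automatically supported near $e^k$; the geometric factors $e^{-k}$, $e^{3k}$ then reflect the rarity of $X$ reaching that far-away level and the factor $T$ the time integral, but making this localization rigorous under only the exponential-moment control of \eqref{a3:1} is again the delicate point.
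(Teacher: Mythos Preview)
Your argument is correct and rests on the same core moment estimate as the paper, namely $\E_0[(e^{\pm x_T}-1)^2]\lesssim T$ (your \eqref{eq:core-sketch}). Where you differ is in how you extract \eqref{bounds_1} and \eqref{bounds_2} from this estimate.

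For \eqref{bounds_1}, instead of your layer-cake/Markov argument the paper uses the pointwise payoff inequalities
\[
(e^k-e^x)^+\le \frac{2e^{2k}(e^{-x}-1)^2}{e^{-k}-1}\quad(k<0),\qquad (e^x-e^k)^+\le \frac{2(e^{x}-1)^2}{e^{k}-1}\quad(k>0),
\]
and simply takes expectations. All three regimes fall out at once: for $k<-1$ one has $e^{-k}-1\asymp e^{-k}$, giving the $Te^{3k}$ factor; for $|k|<1$ one has $|e^{\pm k}-1|\asymp|k|$, giving $T/|k|$; the $\sqrt T$ bound is just $O_T(k)\le\E_0[|e^{x_T}-1|]$. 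Your route is perfectly valid, but the algebraic shortcut avoids the tail integration and the case-splitting near $|k|\sim\sqrt T$ (which, incidentally, is not actually needed: once $|k|>\sqrt T$ the Markov bound is below $1$ on the whole integration range).

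For \eqref{bounds_2}, you both start from the same indicator bound on the payoff difference. The paper then converts the core estimate into a fourth-moment bound on the \emph{log}-price, $\E_0[|x_T|^4]\le C_0T$, via the elementary inequality $(e^x-1)^2+(e^{-x}-1)^2\ge 2(\cosh x-1)^2\ge x^4/2$, so that Chebyshev gives $\Q_0(|x_T|>|k_2|)\lesssim T/k_2^4$ directly. You stay on the exponential scale and use $e^{-2|k_2|}\lesssim k_2^{-4}$ for $|k_2|>1$; the two are equivalent, the paper's being marginally cleaner.

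Your ``Main obstacle'' paragraph locates the real content correctly. The paper handles the core estimate in exactly the same spirit---It\^o expansion of $e^{\pm x_t}-1$ followed by ``Cauchy--Schwarz and the integrability conditions in Assumptions~\ref{ass:B} and~\ref{ass:C}-1''---and does not spell out the bound on $\E_0[e^{2x_s}\si_s^2]$ any further than you do. Your instinct to invoke the $\Q$-martingale property of $e^x$ (which is the standing assumption of Section~\ref{sec:options}) to kill the drift is the natural way to tighten this. The It\^o--Tanaka alternative you float at the end is unnecessary once one has the algebraic inequalities above.
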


\bpr[Proof of Lemma~\ref{lemma:bounds}] By an  application of It\^o's lemma, we have 
\begin{align*}
	e^{x_t}-1 &= \int_0^te^{x_s}\alpha_sds + \int_0^te^{x_s}\sigma_sdW_s 
	+\int_0^t\int_{\mathbb{R}}e^{x_{s-}}(e^{\gamma(s,z)}-1)\mu(ds,dz)\\&~ + \int_0^t\int_{\mathbb{R}}e^{x_{s-}}(e^{\delta(s,z)}-1 - \delta(s,z))(\mu-\nu)(ds,dz) + \frac{1}{2}\int_0^te^{x_s}\sigma_s^2ds,
\end{align*}
and we have an analogous expression for $e^{-x_t}-1$. By the Cauchy--Schwarz inequality and the integrability conditions in Assumptions \ref{ass:B} and \ref{ass:C}-1, we then have 
\beq\label{eq:bounds_1}
\mathbb{E}_0[(e^{x_s}-1)^2] + \mathbb{E}_0[(e^{-x_s}-1)^2]\leq C_0 T,\quad s\in[0,T\wedge \overline{t}].
\eeq
To proceed further, we make use of the following algebraic inequalities:
\beq\label{eq:bounds_2}
(e^k-e^x)^+\leq 2e^{2k}\frac{|e^{-x}-1|^2}{|e^{-k}-1|},~k<0,~x\in\mathbb{R},
\eeq
\beq\label{eq:bounds_3}
(e^x-e^k)^+\leq 2\frac{|e^{x}-1|^2}{|e^{k}-1|},~k>0,~x\in\mathbb{R},
\eeq
with the notation $x^+ = \max(x,0)$. Using this result and (\ref{eq:bounds_1}), we get the bound in (\ref{bounds_1}). We proceed with showing (\ref{bounds_2}). For this, we make use of the following inequalities:
\beq\label{eq:bounds_4}
|(X-K_1)^+-(X-K_2)^+|\leq |K_1-K_2|\bone_{\{X>K_2\}},\quad X\in\mathbb{R},\quad K_1\geq K_2,
\eeq
\beq\label{eq:bounds_5}
|(K_1-X)^+-(K_2-X)^+|\leq |K_1-K_2|\bone_{\{X<K_2\}},\quad X\in\mathbb{R},\quad K_1\leq K_2.
\eeq
From here, to show (\ref{bounds_2}), we need to bound $\mathbb{Q}_0(|x_T|>k)$ for any $k>0$. An application of (\ref{eq:bounds_1}) yields
\beq\label{eq:bounds_6}
\mathbb{E}_0 [|x_s|^4]\leq C_0 T,\quad s\in[0,T\wedge \overline{t}].
\eeq
Combining (\ref{eq:bounds_6}) with the inequalities in (\ref{eq:bounds_4})--(\ref{eq:bounds_5}), we get (\ref{bounds_2}).\epr

We are now ready to prove Theorem~\ref{thm:err_bound}. We can make the following decomposition:
\begin{equation}
	\widehat{\mathcal{L}}_{T}(u) - \mathcal{L}_{T}(u) = \zeta_{T}^{(1)}(u)+\zeta_{T}^{(2)}(u)+\zeta_{T}^{(3)}(u),
\end{equation}
where
\begin{align*}
	\zeta_{T}^{(1)}(u) &= \left(\frac{u^2}{T}+i\frac{u}{\sqrt{T}}\right)\left(\int_{-\infty}^{\underline{k}}e^{(iu/\sqrt{T}-1)k}O_{T}(k)dk+\int_{\overline{k}}^{\infty}e^{(iu/\sqrt{T}-1)k}O_{T}(k)dk\right),\\
	\zeta_{T}^{(2)}(u) &= -\left(\frac{u^2}{T}+i\frac{u}{\sqrt{T}}\right)\sum_{j=2}^N\int_{k_{j-1}}^{k_j}\left(e^{(iu/\sqrt{T}-1)k_{j-1}}O_T(k_{j-1})-e^{(iu/\sqrt{T}-1)k}O_T(k)\right)dk,\\
	\zeta_{T}^{(3)}(u) &= -\left(\frac{u^2}{T}+i\frac{u}{\sqrt{T}}\right)\sum_{j=2}^Ne^{(iu/\sqrt{T}-1)k_{j-1}}\epsilon_{T}(k_{j-1})\Delta_j.
\end{align*}
Using (\ref{bounds_1}), we have 
\begin{equation}
	\zeta_{T}^{(1)}(u) = O_{\ov \P}^\mathrm{uc}\left(e^{2\underline{k}}+e^{-2\overline{k}}\right).
\end{equation}
Next, using (\ref{bounds_1}) and (\ref{bounds_2}), we get 
\begin{equation}
	\zeta_{T}^{(2)}(u) = O_{\ov \P}^\mathrm{uc}\left(\frac{\Delta}{\sqrt{T}}\log T\right).
\end{equation}
Finally, using our assumption for $\mathcal{F}$-conditional independence of the option observation errors and the bounds in (\ref{bounds_1}) and (\ref{bounds_2}), we have 
\begin{equation}
	\begin{split}
		&\mathbb{E}^{\ov\P}_0[|\zeta_{T}^{(3)}(u)|^2]\leq C_0(|u|^3\vee 1)\frac{\Delta}{\sqrt{T}},\\&\mathbb{E}^{\ov\P}_0[|\zeta_{T}^{(2)}(u)-\zeta_{T}^{(2)}(v)|^4]\leq C_0(|u|\vee |v|\vee 1)^8(|u-v|^4\vee|u-v|^8)\frac{\Delta^2}{T},
	\end{split}
\end{equation} 
for some $\mathcal{F}_0$-adapted random variable $C_0$ (that does not depend on $u$ and $v$). From here, we have the tightness of $\frac{T^{1/4}}{\sqrt{\Delta}}\zeta_{T}^{(3)}(u)$ in the space of continuous functions of $u$ equipped with the local uniform topology. Combining the above three bounds, we have the result of the theorem about $\widehat{\mathcal{L}}_{T}(u) - \mathcal{L}_{T}(u)$. The result for $\widehat{\mathcal{M}}_{T} - \mathcal{M}_{T}$ in exactly the same way.

\section{Proofs for Section~\ref{sec:application}}\label{sec:9}

\begin{proof}[Proof of Corollary~\ref{cor:arg}]
	Since there are no jumps, $\psi_t$ as well as the terms $C_{1,j}(u,T)_t$, for $j=1,\dots,4$, and $C'_{1,1}(u,T)_t$ are identically zero. Therefore, by \eqref{eq:cf},
	\[ \Arg(\call_{t,T}(u))= u\al_t T^{1/2} - \frac{\si_t^2f'(v_t)\si^v_tu^3}{\Ga(H+\frac52)} T^H -\frac12u^3\si_t^2f'(v_t)\eta_t^v T^{1/2}+O^\mathrm{uc}(T^{2H}).   \]
	By Proposition~\ref{prop:condmean}, we thus have
	\[ A_{t,T}(u) =  - \frac{\si_t^2f'(v_t)\si^v_tu^3}{\Ga(H+\frac52)} T^H -\frac12u^3\si_t^2f'(v_t)\eta_t^v T^{1/2}+O^\mathrm{uc}(T^{2H}),\]
	from which all statements of the corollary follow.
\end{proof}

\begin{proof}[Proof of Theorem~\ref{thm:Hest}]
	By Theorem~\ref{thm:err_bound} and the mean-value theorem, we have
	\[ \wh A_{t,T}(\un u) = A_{t,T}(\un u) + O(\Delta^{1/2}T^{-1/4-2H}). \]
	Thus, by another application of the mean-value theorem, we deduce from 
	\eqref{eq:ratio} that
	\begin{align*}	\wh H_n &= \frac{\log  A_{t,T_1}(\underline u) - \log  A_{t,T_2}(\underline u)}{\log \tau} + O_{\ov\P}(\Delta^{\frac12}T_1^{-\frac14-H}) \\
		&= H + O_{\ov\P}(T^{(\frac12-H)\wedge H}\vee  \Delta^{\frac12}T_1^{-\frac14-H}) , \end{align*}
	which proves \eqref{eq:H-conv}. Equation \eqref{eq:H-conv-2} is proved analogously.
\end{proof}

\begin{proof}[Proof of Theorem~\ref{thm:secdiff}]
	The expansion \eqref{eq:threeT} follows from \eqref{eq:cf} and the fact that the second-order differences are constructed in such a way that the drift $\al_t \sqrt{T}$, the jump component $T\psi_t(u/\sqrt{T})$, the leverage component $-\frac12iu^3\si_t^2f'(v_t)\eta_t^v T^{1/2}$ and the terms $C_{1,j}(u,T)_t$ for $j=2,3,4$ are canceled out perfectly. What remains in the argument are therefore terms of order $T^{2H}$ and $C_{1,1}(u,T)_t$ and $C'_{1,1}(u,T)_t$, both of which are $O^\mathrm{uc}(T^{H_\delta+1-r/2})$. 
	The second claim is an easy consequence of \eqref{eq:threeT} and the mean-value theorem.
\end{proof}

\begin{appendix}
	\section*{Appendix: Some elements of Wiener space theory}\label{appn}
	\setcounter{theorem}{0}
	\setcounter{equation}{0}
	
	Let $(W_t)_{t\in[0,1]}$ be a one-dimensional standard Brownian motion 
	and consider the Gaussian Hilbert space $\calh=\{\int_0^1 f(t) dW_t : f\in L^2([0,1])\}\subset L^2(\Om)$. Let $H_0(x)=1$ and $H_n(x)=\frac{(-1)^n}{n!}e^{x^2/2}\frac{d^n}{dx^n} e^{-x^2/2}$  be the \emph{$n$th Hermite polynomial} for $n\geq1$. The first four Hermite polynomials are
	\beq\label{eq:Hn}  H_0(x)=1,\quad H_1(x)=x, \quad H_2(x)=\frac12(x^2-1), \quad H_3(x)=\frac16x^3-\frac12x. \eeq
	For $n\geq0$, define the \emph{$n$th Wiener chaos
		$\calh_n$} as the $L^2$-closure of the linear subspace $\{H_n(\int_0^1 f(t)dW_t):  h\in L^2([0,1]),\int_0^1h^2(t)dt =1\}$ and further let $\calp_n = \bigoplus_{i=0}^n \calh_i$. It is well-known that Wiener chaoses of different order are orthogonal to each other:
	\beq\label{eq:ortho} X\in\calh_n,\ Y\in \calh_m,\ n\neq m \implies \E[XY]=0;\eeq
	cf.\ \cite[Theorem~1.1.1]{Nualart06}. While there is no general characterization of the distribution of elements in $\calh_n$, the case $n=2$ is special. In fact, the distribution of elements in $\calp_2$ is explicitly known:
	\bthm\label{thm:P2} If $X\in\calp_2$, then are real numbers $(\al_j)_{j\geq1}$ and $(\beta_j)_{j\geq1}$ satisfying $\sum_{j=1}^\infty \al_j^2<\infty$ and $\sum_{j=1}^\infty \beta_j^2<\infty$ such that
	\beq\label{eq:cfX} \E[e^{iuX}]=\exp\biggl(i\E[X]u-\frac12\sum_{j=1}^\infty \Bigl[\log(1-2i\al_ju)+2i\al_j u+\frac{\beta_j^2u^2}{1-2i\al_ju}\Bigr]\biggr). \eeq
	Moreover, there exists a sequence of independent standard normal random variables $(\xi_n)_{n\geq1}$ such that if we write $X=X_0+X_1+X_2$ with $X_0\in\calh_0$, $X_1\in\calh_1$ and $X_2\in\calh_2$, then $X_0=\E[X]$, $X_1=\sum_{j=1}^\infty \beta_j\xi_j$ and $X_2=\sum_{j=1}^\infty \al_j(\xi^2_j-1)$.
	In particular, we have that 
	\beq\label{eq:square} \begin{split}
		&	\E[X_1^2]=\sum_{j=1}^\infty \beta_j^2, \quad\E[X_2^2]=2\sum_{j=1}^\infty\al_j^2,\quad	\E[X_2^3]=8\sum_{j=1}^\infty \al_j^3,\quad\E[X_1^2X_2]=2\sum_{j=1}^\infty \al_j\beta_j^2, \\
		&	 \mathrm{Cov}(X_1^2,X_2^2)=8\sum_{j=1}^\infty \al_j^2\beta_j^2,\quad \mathrm{Cov}(X_1^2,X_2^3)-3\E[X_1^2X_2]\E[X_2^2]=48\sum_{j=1}^\infty \al_j^3\beta_j^2.
	\end{split}\!\!\!\!\eeq
	\ethm
	\begin{proof}
		The first two claims and the first two formulas in \eqref{eq:square} are shown in   \cite[Theorem~6.1]{Janson97}. The last two formulas in \eqref{eq:square} follow from a straightforward calculation based on the series representation of $X_1$ and $X_2$.
	\end{proof}

\end{appendix}

\begin{acks}[Acknowledgments]
	The authors would like to thank two anonymous referees, an Associate
	Editor and the Editor for their constructive comments that improved the
	quality of this paper.
\end{acks}

\bibliographystyle{imsart-nameyear} 
\bibliography{rough}
 
\end{document}